\pgfplotsset{compat=1.14, set layers}
\definecolor{refcolor}{rgb}{0.23, 0.27, 0.29}
\newtheorem{theorem}{Theorem}[section]
\newtheorem*{theorem*}{Theorem}
\newtheorem{lemma}[theorem]{Lemma}
\newtheorem{corollary}[theorem]{Corollary}
\newtheorem{definition}[theorem]{Definition}
\newtheorem{claim}[theorem]{Claim}
\newtheorem{observation}[theorem]{Observation}
\newtheorem{restatement}[theorem]{Restatement}
\newtheorem*{fact*}{Fact}
\newtheorem*{lemma*}{Lemma}
\newtheorem*{definition*}{Definition}
\Crefname{paragraph}{Paragraph}{Paragraphs}
\Crefname{observation}{Observation}{Observations}
\Crefname{restatement}{Restatement}{Restatement}
\newcommand{\mpc}{\textsf{MPC}\xspace}
\newcommand{\local}{\textsf{LOCAL}\xspace}
\newcommand{\lcl}{\textsf{LCL}\xspace}
\newcommand{\vpivot}{V^{pivot}\xspace}  
\newcommand{\diam}{\text{diam}}
\newcommand{\arr}{\rightarrow}
\newcommand{\narr}{\not\rightarrow}
\newcommand{\larr}{\leftarrow}
\newcommand{\be}{\sf BalancedExponentiation\xspace}
\newcommand{\pro}{{\sf ProbeDirections}\xspace}
\newcommand{\expo}{{\sf Exp}\xspace}
\newcommand{\hds}{{\sf blockedDirs}\xspace}
\newcommand{\act}{\textsf{active}\xspace}
\newcommand{\full}{\textsf{full}\xspace}
\newcommand{\comp}{\textsf{knowledgeable}\xspace}
\newcommand{\blocked}{\textsf{blocked}\xspace}
\newcommand{\layer}{\operatorname{layer}}
\DeclareMathOperator*{\argmax}{arg\,max}
\newcommand{\poly}{\operatorname{\text{{\rm poly}}}}
\newcommand{\logstar}[1]{\log^{*} #1}
\newif\ifdraft
\newcommand{\eps}{\varepsilon}
\begin{document}

\begin{center}
\begin{minipage}[H]{14.5cm} 

\begin{center}
{\huge \bf Conditionally Optimal Parallel Coloring of Forests}
\end{center}
\vspace{1cm}

{\large \textbf{Christoph Grunau}, ETH Zürich -- \href{mailto:cgrunau@inf.ethz.ch}{\texttt{cgrunau@inf.ethz.ch}}} \vspace{1mm}\\
{\large \textbf{Rustam Latypov\footnotemark[1]}, Aalto University -- \href{mailto:rustam.latypov@aalto.fi}{\texttt{rustam.latypov@aalto.fi}}} \vspace{1mm}\\
{\large \textbf{Yannic Maus\footnotemark[2]}, TU Graz -- \href{mailto:yannic.maus@ist.tugraz.at}{\texttt{yannic.maus@ist.tugraz.at}}} \vspace{1mm}\\
{\large \textbf{Shreyas Pai\footnotemark{}}, Aalto University -- \href{mailto:shreyas.pai@aalto.fi}{\texttt{shreyas.pai@aalto.fi}}} \vspace{1mm}\\
{\large \textbf{Jara Uitto}, Aalto University -- \href{mailto:jara.uitto@aalto.fi}{\texttt{jara.uitto@aalto.fi}}} \vspace{1mm}\\

\vspace{5mm}

\begin{center}
	{\bf Abstract} \\ 
\end{center}

We show the first conditionally optimal deterministic algorithm for  $3$-coloring forests in the low-space massively parallel computation (\mpc) model. Our algorithm runs in $O(\log \log n)$ rounds and uses optimal global space. The best previous algorithm requires $4$ colors [Ghaffari, Grunau, Jin, DISC'20] and is randomized, while  our algorithm are inherently deterministic. \\

Our main technical contribution is an $O(\log\log n)$-round algorithm to compute a partition of the forest into  $O(\log n)$ ordered layers such that every node has at most two neighbors in the same or higher layers. Similar decompositions are often used in the area and we believe that this result is of independent interest. Our results also immediately yield conditionally optimal deterministic algorithms for maximal independent set and maximal matching for forests, matching the state of the art [Giliberti, Fischer, Grunau, SPAA'23]. In contrast to their solution, our algorithms are not based on derandomization, and are arguably simpler.
\end{minipage}

\thispagestyle{empty}
\footnotetext{Supported by the Academy of Finland, Grant 334238}
\footnotetext[2]{Supported by the Austrian Science Fund (FWF), Grant P36280-N}

\end{center}

\newpage
\thispagestyle{empty}
\tableofcontents

\newpage
\pagenumbering{arabic}

\section{Introduction}

A recent sequence of papers investigates fundamental symmetry-breaking problems such as coloring, maximal independent set and maximal matching on trees \cite{BRANDT202122,behnezhadMIS,LU21,GGJ20,GFG23}.
We conclude, simplify and unify this line of work by giving a conceptually simple algorithm for $3$-coloring, maximal independent set and maximal matching. We solve the three problems in a unified way by computing a so-called $H$-decomposition (we discuss these in more detail in \Cref{sec:contributions}). Even though such decompositions are the natural tool for solving the aforementioned problems on trees, computing them efficiently in the \mpc model remained outside the reach of previous techniques.

\begin{restatable}{theorem}{thmMainColoringMISMatching}
	There are deterministic $O(\log\log n)$-round low-space \mpc algorithms for  $3$-coloring, maximal matching and maximal independent set (MIS) on forests. These algorithms use $O(n)$ global space.
\end{restatable}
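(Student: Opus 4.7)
The plan is to derive all three claims from a single structural primitive: the $H$-decomposition, which partitions the nodes of the forest into $O(\log n)$ ordered layers $V_1,\dots,V_\ell$ such that every node in $V_i$ has at most two neighbors in $V_i\cup V_{i+1}\cup\cdots\cup V_\ell$. The paper's main technical contribution, developed in the subsequent sections, is a deterministic low-space \mpc algorithm that computes such a decomposition in $O(\log\log n)$ rounds using $O(n)$ global space. I would assume this as a black box; building it is where all the real work lies and will be the main obstacle.

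Given the decomposition, a proper $3$-coloring can be read off by processing layers in decreasing order. When layer $V_i$ is colored, each node $v\in V_i$ sees at most two already-colored neighbors (its forward neighbors, by the $H$-property), so at least one of the three colors is free. A naive layer-by-layer simulation would cost $\Theta(\log n)$ rounds. To reach $O(\log\log n)$ I would exploit that the subgraph induced on any single layer has maximum degree at most two (same-layer neighbors count toward the $H$-bound) and, since the host graph is a forest, is a disjoint union of paths. Extending a coloring along such a path given the constraints at its two ends reduces to propagating $O(1)$ bits along the path, which is the natural setting for graph exponentiation; doubling the reach of each node per round processes all intra-layer paths in $O(\log\log n)$ total rounds, and pipelining this procedure across the $\ell=O(\log n)$ layers does not change the bound. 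Global memory stays $O(n)$ since each node stores only constant additional state on top of what the decomposition already provides.

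For maximal independent set I would then sweep through the three color classes: iterate $c=1,2,3$ and, in parallel, add every node of color $c$ to the independent set unless one of its neighbors is already there. This adds only $O(1)$ rounds and preserves the space bound. For maximal matching the same template applies directly to the $H$-decomposition: process the layers in decreasing order and greedily include each currently unblocked edge whose larger-layer endpoint is being handled, resolving same-layer ties via the induced path structure. The bounded-backward-degree property ensures that each edge is blocked by only a constant number of previously-processed edges, and the same exponentiation speed-up collapses the $O(\log n)$-round layer sweep to $O(\log\log n)$ rounds.

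The difficult step is clearly the construction of the $H$-decomposition itself: a straightforward rake-and-compress-style peeling produces the right layers only after $\Theta(\log n)$ phases, so accelerating it to $O(\log\log n)$ requires compressing many sequential peeling phases into a few \mpc rounds while keeping global memory linear. Once that structural result is in place, the reductions to $3$-coloring, MIS, and maximal matching sketched above are comparatively routine and yield the three deterministic $O(\log\log n)$-round algorithms claimed in the theorem.
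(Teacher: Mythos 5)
Your reductions for MIS and maximal matching from a $3$-coloring are essentially the paper's (sweep the three color classes; $O(1)$ extra rounds), and you correctly identify that the decomposition is where the work lies. But there is a genuine gap in the step you dismiss as routine: getting from an $H$-decomposition to a $3$-coloring in $O(\log\log n)$ rounds and near-linear global space. Two things break. First, the intra-layer degree-$2$ paths can have length $\Theta(n)$, so ``propagating $O(1)$ bits along the path'' from its endpoints via graph exponentiation costs $O(\log n)$ rounds, not $O(\log\log n)$; deterministic symmetry breaking on these paths needs something like Linial's algorithm (which the paper runs on the pivot set in $O(\log^* n)$ rounds), not endpoint propagation. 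Second, and more seriously, ``pipelining across the $O(\log n)$ layers does not change the bound'' is precisely the claim that fails for a plain $H$-decomposition: a node's color under the layer-by-layer greedy depends on up to two forward neighbors, each of which depends on up to two more, so the ``predecessor graph'' a node must gather to determine its output can have size $\Theta(n)$, and $\sum_v |G_v|$ can be $\Theta(n^2)$ — violating global space even before worrying about rounds. This is exactly why the paper does \emph{not} use the $H$-decomposition of \Cref{def:Hdecomp} as its black box but introduces the \emph{strict} $H$-decomposition (\Cref{def:strictH}): after $3$-coloring the degree-$\le 2$ pivot set $\vpivot$ with Linial, every remaining node has at most one outgoing dependency, so its predecessor graph is a single directed path of length $O(\log n)$, which can be gathered by exponentiation in $O(\log\log n)$ rounds within $\widetilde{O}(n)$ global space. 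Your proposal assumes the wrong black box and the missing strictness property is not a technicality — it is the idea that makes the coloring step work. (A smaller omission: reaching $O(n)$ rather than $\widetilde{O}(n)$ global space requires the paper's preprocessing that peels $O(\log\log n)$ initial layers to shrink the instance to $n/\poly\log n$ nodes.)
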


The runtimes of our algorithms are conditionally optimal, conditioned on the 1 vs 2 cycle conjecture, at least if one restricts to so-called component stable algorithms \cite{Ghaffari2019, componentstable,Linial92,Roughgarden18} (see \Cref{sec:relatedWork} for a brief discussion about component-stability).  

We note that algorithms for maximal matching and maximal independent set matching our guarantees are known from a very recent work \cite{GFG23}. However, their algorithms are quite complicated and technical, and use sophisticated derandomization techniques. Moreover, their techniques inherently cannot be used to color a tree with a small number of colors. Indeed, the $3$-coloring problem is considered to be the hardest of the three problems, e.g., once such a coloring is known one can compute an MIS in $O(1)$ rounds. Additionally, a crucial property used in previous \mpc algorithms for MIS and maximal matching is that any partial solution can be extended to a solution of the whole graph; a property that does not hold for $3$-coloring. 
The best previous algorithm for coloring trees uses $4$ colors and is randomized \cite{GGJ20}. If one allows for randomization the single additional color makes the problem significantly easier by the following divide and conquer approach: if one partitions the tree into two parts by letting each node join one of the parts uniformly at random, the connected components induced by each part have logarithmic diameter. Once the diameter is small, one can use $O(\log\log n)$ \mpc rounds to color each component independently with two colors in a brute force manner. 
In the next section we zoom out and present the bigger picture of our work.

\subsection{\mpc Model and Exponential Speed-Up Over \local Algorithms}

The Massively Parallel Computation (\mpc) model~\cite{KarloffSV10} is a mathematical abstraction of modern frameworks of parallel computing such as Hadoop~\cite{White2009}, Spark~\cite{Zaharia2010}, MapReduce~\cite{Dean2008}, and Dryad~\cite{Isard2007}. 
In the \mpc model, we have $M$ machines that communicate in all-to-all fashion, in synchronous rounds.
In each round, every machine receives the messages sent in the previous round, performs (arbitrary) local computations, and is allowed to send messages to any other machine.
Initially, an input graph of $n$ nodes and $m$ edges is arbitrarily distributed among the machines.
At the end of the computation, each machine needs to know the output of each node it holds, e.g., their color in the vertex-coloring problem.

The \mpc model is typically divided into $3$ regimes according to the local space $S$. The \emph{superlinear} and the \emph{linear} regimes allow for $S = n^{1 + \Omega(1)}$ and $S = \widetilde{O}(n)$ words\footnote{The $\widetilde{O}$ notation hides polylogarithmic factors.} of space (memory) per machine. 
A word is $O(\log n)$ bits and is enough to store a node or a machine identifier from a polynomial (in $n$) domain. The local space restricts the amount of data a machine initially holds and is allowed to send and receive per round. Both linear and superlinear regimes allow for very efficient algorithms because machines can get a ``global view'' of the graph in the sense that it can store information for each node of the graph \cite{filtering2011, linear-MIS, Ghaffari2019-arboricity}. 
However, the growing size of most real-world graphs makes it impossible to get such a global view on a single machine and hence research in recent years has focused on the most challenging \emph{low-space} (or \emph{sublinear}) regime with $S = n^{\delta}$, for some constant $\delta < 1$, where we cannot even store the whole neighborhood of a single node in a single machine. As each machine can only get a local view there are close connections to the \local model of distributed computing that we further elaborate on below.

Furthermore, we focus on the most restricted case of \emph{linear global space}, i.e., $S \cdot M = \Theta(n + m)$.
Notice that $\Omega(n + m)$ words are \emph{required} to store the input graph.

\subparagraph*{The \local Model and Graph Exponentiation.} The \local model is a classic model of distributed message passing. Each node of an input graph hosts a processor and the nodes communicate along the edges of the graph in synchronous rounds. The local computation, local space, and message sizes are unbounded in this model. 
Most research in the \local model has focused on symmetry breaking problems like graph colorings, MIS, and maximal matchings. For most of these classic problems $O(\log n)$-round randomized algorithms are known \cite{luby86,alon86,coloring-simple} which can directly be translated to the \mpc model.
A major focus on recent and current research is to develop \emph{sublogarithmic} \mpc algorithms that beat the logarithmic baseline.

In fact, the strong connection between the models also shows up in faster algorithms, as almost all recent \mpc algorithms for such problems are \mpc-optimized implementations of algorithms that were originally developed for the \local model.
The main technique to obtain this speedup is the graph \emph{exponentiation technique}~\cite{wattenhofer}. It allows to gather the $T$-hop radius neighborhood of a node in $O(\log T)$ \mpc rounds. 
So, as long as these neighborhoods fit the space constraints, after gathering them one can simulate $T$-round \local algorithms locally to compute the output for each node.
Furthermore, for component-stable algorithms, the connection also goes the other way around, that is, an $\Omega(T)$ lower bound on the round complexity in the \local model implies an exponentially lower $\Omega(\log T)$ conditional lower bound in the \mpc model.  
Thus, the holy grail is to obtain this exponential speedup over the \local model.
A central open problem in the area is to find an $O(\log\log n)$ round \mpc algorithm for the classic MIS problem on general graphs, which enjoys a matching conditional $\Omega(\log \log n)$-round lower bound. 

Unfortunately, we are very far from answering this question.
The current state of the art is  an $\widetilde{O}(\sqrt{\log \Delta} + \log \log \log n)$-round randomized \mpc algorithm~\cite{GU19}. 
We note that we take into account the new results on network decomposition, which reduce the dependency on $n$~\cite{RG20,Ghaffari2023-soda}.
This result is obtained by combining the graph exponentiation technique with \emph{sparsification} methods~\cite{wattenhofer, GU19}.
The exponentiation technique is used to simulate Ghaffari's ${O}(\log \Delta + \poly \log \log n)$-round MIS algorithm for the \local model~\cite{GhaffariImproved16}.

From a high-level perspective they break the \local algorithm into $O(\sqrt{\log \Delta})$ phases each of length $T=O(\sqrt{\log \Delta})$.
In the beginning of each phase, the graph is subsampled so that the maximum degree of any node is at most $2^{\sqrt{\log \Delta}}$.
Then, we can gather the $T$-hop neighborhood of each node in $O(\log \log \Delta)$ \mpc rounds and simulate $T$ rounds of the \local algorithm in a single \mpc round. The main benefit of  simulating (shorter) phases and subsampling to smaller degree graphs is to reduce the memory resources needed during the exponentiation technique. Unfortunately, this phase-based approach seems to hit a fundamental barrier at  $\sqrt{\log \Delta}$ rounds, and it is unclear how to reduce memory usage without it.  Due to little progress in improving on this result, recent research has focused on special graph classes such as trees and bounded arboricity graphs. 

\subparagraph*{Symmetry Breaking on Trees and Bounded Arboricity Graphs.}

Studying low-space \mpc algorithms for MIS on trees and forests has been fruitful. This line of work started with a randomized $O(\log^3 \log n)$ round low-space \mpc algorithm for MIS and maximal matching on trees \cite{BRANDT202122}. 
Later, the round complexity was first improved to $O(\log^2 \log n)$ \cite{behnezhadMIS} and finally to $O(\log \log n)$ \cite{GGJ20}, where both algorithms extend to low-arboricity graphs. The $O(\log \log n)$ algorithm is conditionally optimal, at least if one restricts oneself to component-stable algorithms. Finally, a recent work derandomized the $O(\log \log n)$ round algorithm using \mpc specific derandomization techniques and thus obtained a deterministic $O(\log \log n)$ round MIS and Maximal Matching algorithm for trees and more generally low-arboricity graphs \cite{GFG23}.

While especially the $O(\log \log n)$ round algorithms are quite technical and involved, all of the aforementioned previous algorithms rely on the same fundamental idea. 
Namely, to interleave graph exponentiation with the computation of partial solutions to rapidly decrease the maximum degree of the remaining graph.
Unfortunately, it seems unlikely that such a rapid degree reduction is possible in general graphs; thus it seems that new approaches are necessary in order to get an $O(\log \log n)$ round algorithm for general graphs.

Also, their approach does not work for coloring a forest with a constant number of colors. The main reason is that they critically rely on the fact that any partial solution can be extended to a full solution, which is not the case for coloring a forest with a fixed number of colors.

\subsection{Our Technical Contribution}
\label{sec:contributions}

We present a unified solution for $3$-coloring, MIS, maximal matching that takes $O(\log \log n)$ rounds. The core technical contribution that unifies these is an efficient algorithm to compute \emph{$H$-decompositions}. 

\begin{restatable}{theorem}{thmStrictHdecomp}
	There is a deterministic $O(\log \log n)$-rounds low-space \mpc algorithm that computes a strict $H$-decomposition with $O(\log n)$ layers on forests in $O(n)$ global space. 
\end{restatable}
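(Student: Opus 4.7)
The plan is to construct the decomposition via the classical Miller--Reif rake-and-compress scheme and to implement its $O(\log n)$ \local rounds in $O(\log\log n)$ \mpc rounds via graph exponentiation.

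For the combinatorial core, I would set $G_1 := G$ and, for each $i \ge 1$, place into layer $i$ both (i) the leaves of $G_i$ (the \emph{rake} set) and (ii) the internal vertices of every maximal degree-$2$ path of $G_i$ (the \emph{compress} set), letting $G_{i+1}$ be the forest on the remaining vertices. The standard Miller--Reif analysis gives $|V(G_{i+1})| \le c\cdot |V(G_i)|$ for a constant $c<1$, so the process terminates in $L = O(\log n)$ layers. Any vertex $v$ placed in layer $i$ has at most two neighbors in $G_i$, and any neighbor of $v$ that ends up in a layer $\ge i$ must still be alive in $G_i$; hence $v$ has at most two same-or-higher-layer neighbors, which is exactly the strict $H$-decomposition property with defect~$2$.

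Turning to the \mpc simulation, whether $v$ is raked/compressed at round $i$ depends only on its $O(i)$-hop neighborhood in $G$, so knowing its $O(L)$-hop neighborhood suffices to read off $v$'s final layer. The standard graph-exponentiation doubling delivers a $2^r$-hop neighborhood in $O(r)$ \mpc rounds, so $r = \log L = O(\log\log n)$ rounds nominally suffice.

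The main obstacle is that a $T$-hop ball in a forest of maximum degree $\Delta$ can contain $\Delta^T$ vertices, breaking the $n^\delta$ per-machine budget already for modest $T$. The plan to overcome this is to split the simulation into phases of doubly-exponentially growing length $T_j = 2^{2^j}$, so that $\sum_j T_j = \Omega(\log n)$ while the \mpc costs $O(\log T_j) = O(2^j)$ still telescope to $O(\log\log n)$; between phases the residual forest has shrunk geometrically, so the deeper exponentiations become affordable in $O(n)$ global space. The invariant I would maintain at the start of each phase is that every maximal degree-$2$ chain of the current residual forest is stored compactly as a single weighted super-edge, so the branching skeleton that any machine must hold stays polynomially small. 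Establishing and maintaining this compact representation in the low-space regime---essentially a controlled list-ranking along degree-$2$ chains, interleaved with the rake-and-compress steps---is the step I expect to be the main technical hurdle.
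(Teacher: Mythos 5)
Your proposal correctly identifies the central obstacle --- that a $T$-hop ball in a forest can have $\Delta^{T}$ (or simply $\Omega(n)$) vertices --- but the fix you sketch does not overcome it. The claim that ``between phases the residual forest has shrunk geometrically, so the deeper exponentiations become affordable'' is a non sequitur: shrinkage of the \emph{vertex count} of the residual forest does not bound the size of the balls that the \emph{surviving} vertices must learn, and the survivors of rake-and-compress are precisely the high-branching vertices whose neighborhoods do not fit. Concretely, take a complete binary tree: it has no degree-$2$ chains, so your super-edge compression is vacuous and the ``branching skeleton'' of a $T$-hop ball has $2^{T}$ vertices. Rake removes one level per iteration, so roughly $\Theta(\log n)$ iterations must be simulated in total; since your phase lengths grow doubly exponentially, the last phase alone must simulate $\Theta(\log n)$ iterations, at which point the residual tree still has $n^{\Omega(1)}$ vertices and the required balls around its top vertices contain all of them --- exceeding the $n^{\delta}$ local budget for small $\delta$, and on other instances (e.g.\ shallow trees with many branching vertices) also blowing up the global space to $\omega(n)$. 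A secondary, fixable issue: the plain Miller--Reif layering you describe satisfies the ordinary $H$-decomposition property (at most two same-or-higher neighbors) but not the paper's \emph{strict} condition, which additionally requires every non-pivot vertex to have at most \emph{one} neighbor that is either a non-pivot in the same layer or in a strictly higher layer; an interior compress vertex adjacent to a chain endpoint can violate this.

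The paper's route is structurally different precisely because full balls cannot be gathered. It never asks any vertex to learn its entire $O(\log n)$-neighborhood. Instead, a \emph{balanced exponentiation} procedure lets every vertex contained in a small subtree learn its $k$-hop neighborhood in all directions \emph{except at most one} (the direction toward the bulk of the tree), with per-direction caps that keep local space at $O(n^{\delta})$ and global space at $O(n\cdot\poly(k))$. Each machine then runs a \emph{conservative} peeling on the partial view $U$ it holds --- peeling a vertex only when enough of its neighborhood lies inside $U$ --- which yields a \emph{partial} strict $H$-decomposition per machine; a closure-under-minimum lemma shows these can be merged by assigning each vertex the smallest layer it received anywhere. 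Finally, a generalized rake-and-compress argument (removing entire subtrees of size $n^{\delta/10}$ rather than single leaves, together with long degree-$2$ paths) shows that $O(1/\delta)=O(1)$ outer iterations suffice, and a preprocessing step brings the global space from $O(n\,\poly\log n)$ down to $O(n)$. If you want to salvage your plan, the missing ideas are exactly these two: (i) a way for a vertex to make progress while knowing its neighborhood only in all-but-one direction, and (ii) a peeling rule whose locally computed layers from different, mutually inconsistent partial views can be consistently combined.
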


$H$-decompositions were introduced to the area of distributed computing by Barenboim and Elkin \cite{BE10}. An $H$-decomposition (of a forest) partitions the vertices of the graph into layers such that every node has at most two neighbors\footnote{There are generalizations to higher number of neighbors that are important when dealing with bounded arboricity graphs~\cite{NashWilliams1964, BEPSv3}.} in higher or equal layers. For forests an $H$-partition with $O(\log n)$ layers always exist. 
In the \local model, such an $H$-decomposition immediately implies an algorithm for $3$-coloring in $O(\log n)$ rounds. Essentially, one can iterate through the layers in a reverse order and color all nodes in a layer while avoiding conflicts with the already colored neighbors in higher layers. 

The novelty of our approach is not that we use such decompositions to compute a $3$-coloring of a forest, in fact, this straightforward approach has made it into the classrooms of many graduate programs of universities, but in the way how we compute it. We detail on our solution in more detail in the nutshell, but the main take-away is as follows. We steer every machine to learn some parts of the graph (to large extent in an uncoordinated fashion) such that every machine can compute a partial $H$-decomposition locally, which  we can later unify to a global decomposition. We are not aware of any other \mpc algorithm for $H$-decompositions with a similar approach.

\subparagraph*{Balanced Exponentiation.}
In order to achieve exponential speedup, our algorithms rely on graph exponentiation. However, there are no known sparsification techniques that can cope with the memory resources that are needed for the classic graph exponentiation technique. 
Instead, we provide a self-contained exponentiation procedure whose memory overhead is very mild on forests. To explain our procedure, we first need to define a \emph{subtree}. A subtree is a subgraph of a tree, such that if it is removed, the rest of the tree stays connected. A node is \emph{important} if it is contained in a subtree of size $n^{\delta/8}$. We present the following result (the formal statement appears in \Cref{lem:balancedExp}). 

\medskip

\noindent \emph{Let $0 < k \leq n^{\delta/8}$  be a parameter. There is a deterministic low-space \mpc algorithm that, given an $n$-node forest $F$, uses $O(\log k)$ rounds in which every important node $v\in F$ discovers its $k$-hop neighborhood in every direction of the graph, except for at most one.}

\medskip

Given a node $v\in F$, we refer to each of its neighbors $x\in N(v)$ as a \emph{direction} with regard to $v$. Informally, what node $v$ can discover in direction $x$ is simply the subgraph of $F$ that is connected to $v$ via $x$, which is uniquely defined, since $F$ is a forest.

This result above may be of independent interest and may be useful to design algorithms for other graph problems. 
We obtain it by extending the exponentiation technique of  a recent work by \cite{sodapaper}. Their work designs an exponentiation technique which (almost) equals ours in the special case when $k$ equals the maximum diameter of a component of the forest. In their work it is used in an $O(\log \diam)$-round algorithm to compute the connected components of a forest. Later it has also been used to solve certain dynamic programming tasks on tree-structured data \cite{gupta2023fast}, also in time that is logarithmic in the diameter. In \Cref{sec:algoOverview} we present a more detailed discussion on the similarities and the difference between the exponentiation result in this work and the one in their work, and why our result requires a different analysis. The main benefit of our result is that a flexible choice of $k$ allows the runtime and space to be small, if the required ``view'' for the nodes is small, which we heavily utilize in our algorithm to compute $H$-decompositions.

\subsection{Our Method in a Nutshell}
\label{sec:nutshell}

As mentioned in the previous section, our key technical contribution is to compute a so-called $H$-decomposition of the input forest $F$.
In particular, the goal is to compute a partition $V(F) = V_1 \sqcup V_2 \sqcup \ldots \sqcup V_L$ of the vertices into $L = O(\log n)$ layers such that each node in $V_i$ has at most two neighbors in $\bigcup_{j \geq i}V_j$.
There exists a simple peeling algorithm which computes such a partition; iteratively peel off all nodes of degree at most 2 and define $V_i$ as the set of nodes that got peeled off  in the $i$-th iteration.
A simple calculation shows that at least half of all the remaining nodes get peeled off in each iteration, and hence we get a decomposition into $O(\log n)$ layers.
Moreover, one can determine the iteration in which a node gets peeled off by only looking at its $O(\log n)$-hop neighborhood.
Thus, if we could compute for a given node its entire $O(\log n)$-neighborhood and store it in a single machine, then we could locally determine the layer of that node with no further communication.

One way to compute the $O(\log n)$-neighborhood of each node in the \mpc model is the well-known graph exponentiation technique.
Generally speaking, graph exponentiation allows to learn the $2^i$-hop neighborhood of each node in $O(i)$ \mpc rounds.
Thus, we could in principle hope to learn the $O(\log n)$-hop neighborhood of each node in just $O(\log \log n)$ rounds.
However, one obviously necessary precondition of the graph exponentiation technique is that the $O(\log n)$-hop neighborhood of each node has size $n^{\delta}$, as otherwise we cannot possibly store the neighborhood in one machine.
This is quite a limiting condition. If the input is for example a star, even the two-hop neighborhood of each node contains $\Omega(n)$ vertices. Moreover, even if each local neighborhood would fit into one machine, the global space required to store all the neighborhoods might still be prohibitively large, especially if one aims for near-linear global space.

Thus, we cannot use the vanilla graph exponentiation technique. Instead, we use the balanced graph exponentiation technique for forests mentioned in the previous section.
The output guarantee of the balanced exponentiation algorithm, running in $O(\log \log n)$ rounds, weakens the guarantee that each node sees its $O(\log n)$-hop in two ways.
First, it only gives a guarantee for nodes that are contained in a sufficiently small subtree, namely of size at most $n^{\delta}$. Second, for each node $v$ in a small subtree, it computes all nodes of distance $O(\log n)$, except for nodes in one direction. 

We start by briefly discussing how one can deal with the first shortcoming. If one iteratively removes all nodes that are contained in a subtree of size at most $x$ from $F$ and all nodes of degree at most $2$, then all nodes are removed within $O(\log_x(n))$ iterations. This fact was used in similar forms in previous results and for completeness we give a standalone proof (see \cref{lem:generalizedRakeCompress}).
Thus, if we repeatedly assign nodes in subtrees of size at most $n^{\delta}$ and nodes of degree at most $2$ to one of $O(\log n)$ layers, then after $O(1/\delta)$ iterations, we assigned each node to one of $O((1/\delta) \log n)$ layers.
Thus, it intuitively suffices to focus on nodes in subtrees of size at most $n^{\delta}$. \cref{sec:strict_h} gives a formal treatment of this argument.

The more severe difficulty stems from the fact that there might not be a single node in the forest for which we have stored its entire $O(\log n)$-neighborhood in one machine.
This makes it impossible to locally determine the layer of each node, or even a single one, assigned by the simple peeling process described in the beginning.
Instead, each node $v$ locally simulates a conservative variant of the peeling algorithm described above; in each iteration not all the nodes of degree at most $2$ are removed, but only those that $v$ has stored in its machine.
Note that if $v$ has strictly more than $2$ neighbors not stored in its machine, then the conservative peeling algorithm would never peel off $v$.
Moreover, even if $v$ would eventually be peeled off, then there is no guarantee that it happens within the first $O(\log n)$ iterations.
However, the fact that $v$ has stored all the nodes in its $O(\log n)$-hop neighborhood except for nodes in one direction in its machine suffices to show that $v$ gets peeled off within the first $O(\log n)$ iterations (see \cref{lem:conservative_progress}).
Thus, each node $v$ locally computes a layering $V^{v} = V^{v}_1 \sqcup  \dots V^{v}_L$ for some $L = O(\log n)$ such that $v \in V^v$ and each node in $V^{v}_i$ has at most two neighbors contained in $\left(\bigcup_{j \ge i} V^v_j \right) \cup \left(V(F) \setminus V^v \right)$. As some nodes might not be assigned to any layer, we refer to such a decomposition as a partial $H$-decomposition.
Note that a node might get assigned to different layers from different nodes. Fortunately, this is not a problem because of the following nice structural property about (partial) $H$-decompositions:
if we are given multiple (partial) $H$-decompositions, then we can get another (partial) $H$-decomposition by assigning each node to the smallest layer assigned by any of the $H$-decompositions.
This structural observation allows us to combine the different locally computed (partial) $H$-decompositions into a single partial $H$-decomposition where each node in a small subtree is assigned to one of the $O(\log n)$ layers. 

\subparagraph*{Rooted vs. Unrooted Forests.} Our results are for unrooted forests, which are indeed more difficult than rooted forests. In fact, the fastest known \mpc algorithm to root a forest takes $O(\log \diam)$ rounds (and at least on general forests this runtime is conditionally tight) \cite{sodapaper}; so rooting the forest does not fit our time budget of $O(\log \log n)$ rounds. Many steps of our algorithm would simplify (or maybe even allow for alternative solutions) if the forest was rooted. For example, in a directed forests, we would not need our balanced exponentiation procedure. One can show that nodes can just exponentiate towards their children until the local memory is full without breaking any global memory bounds. However we would still need our combinatorial algorithm for creating a (global) $H$-partition. Observe that \cite{sodapaper} also contains a $O(\log \diam)$-round 2-coloring algorithm for rooted constant-degree forests, which can be generalized to rooted unbounded-degree forests. 

\subsection{Further Related Work}
\label{sec:relatedWork}
\subparagraph*{Component Stability.}
Roughly speaking, an \mpc algorithm is component-stable, if the outputs of nodes in different components are independent of each other.
Low-space component-stable \mpc algorithms are closely connected to algorithms in the \local model and this connection was used to lift (unconditional) lower bounds from the \local model into conditional lower bounds in the \mpc model~\cite{Ghaffari2019}.
Under the 1 vs 2 cycle conjecture, this technique turns an $\Omega(T)$-round lower bound in \local into an $\Omega(\log T)$ lower bound in low-space \mpc.
This approach was used to establish, among others, $\Omega(\log \log n)$ randomized lower bounds for MIS and maximal matching.
Later, the technique was extended to deterministic component-stable algorithms as well~\cite{componentstable}.
While the assumption of component-stability might seem very natural to \mpc algorithms, it is known that component-instability can help.
For example, any component-stable algorithm for finding an independent set of size $\Omega(n / \Delta)$ requires $\Omega(\log \log^* n)$ rounds, while there is an $O(1)$-round algorithm that is not component-stable~\cite{componentstable}.

\subparagraph*{log$(\diam)$ Algorithms on Forests.} There are surprisingly few works with a strict $\log(\diam)$ runtime for any graph families in any \mpc regimes, where $\diam$ refers to the diameter. To our knowledge, the only existing ones are low-space algorithms for forests \cite{sodapaper,gupta2023fast}. The authors of \cite{sodapaper} show that connectivity, rooting, and all \lcl (locally checkable labeling) problems can be solved on forests in $O(\log \diam)$ using optimal global space $O(n)$. The authors of \cite{gupta2023fast} build on top of the works of \cite{sodapaper} by introducing a framework to solve dynamic programming tasks and optimization problems, all in time $\log(\diam)$ and global space $O(n)$.
We note that given a double-logarithmic dependency on $n$, the connectivity problem can be solved on general graphs (even deterministically) in $O(\log \diam + \log \log n)$ time using linear total space~\cite{Behnezhad2019, ccderandom}.
Also, the 1 vs 2 cycle conjecture directly rules out an $o(\log \diam)$ for connectivity.

\subparagraph*{Symmetry-Breaking on General Graphs.}
In general graphs, $(\Delta + 1)$-vertex coloring is an intensively studied symmetry breaking problem, where $\Delta$ is the maximum degree of the graph.  A series of works \cite{P18,BKM20,PS18,Chang2019} for  Congested Clique model which is similar to the \mpc model with linear local memory has culminated in a deterministic $O(1)$-round algorithm~\cite{CDP21}. 

In the low-space \mpc model, the first algorithm for the problem was randomized and used  $O(\log \log \log n)$ rounds with almost linear $\widetilde{O}(m)$ global space~\cite{Chang2019}\footnote{The $O(\sqrt{\log \log n})$ runtime stated in the paper is automatically improved to $O(\log \log \log n)$ through developments in network decomposition~\cite{RG20}.}.
By derandomizing the classic logarithmic-time algorithms, one can obtain an $O(\log \Delta + \log \log n)$-round algorithm for $(\Delta + 1)$-coloring, MIS, and maximal matching~\cite{Czumaj2020, Czumaj-congested-coloring}.
For coloring, this was improved to $O(\log \log \log n)$ through derandomizing a tailor-made algorithm~\cite{Czumaj2021}.
The deterministic algorithms require $n^{1 + \Omega(1)}$ global space.

\subsection{Outline}
We define strict $H$-decompositions in \Cref{sec:StrictH}, and then we show how to compute them in $O(\log \log n)$ low-space \mpc rounds using $O(n\cdot \poly(\log n))$ global space in \Cref{sec:strict_h}. The key subroutine for the algorithm in \Cref{sec:strict_h} is discussed in \Cref{sec:subtreeRC}.  In \Cref{sec:coloring} we show how to use these decompositions to compute a coloring, MIS, and matching. In \Cref{sec:optimalSpace} we show how to reduce the global memory usage of our algorithms from $O(n\cdot \poly(\log n))$ to $O(n)$. The balanced exponentiation procedure appears in \Cref{sec:balanced_exponentiation}.

\section{Preliminaries and Notation}

The input graph is an undirected, finite, simple forest $F = (V,E)$ with $n=|V|$ nodes and $m=|E|$ edges such that $E \subseteq [V]^2$ and $V \cap E = \emptyset$. For a subset $S \subseteq V$, we use $G[S]$ to denote the subgraph of $G$ induced by nodes in $S$. 

Let $\deg_F(v)$ denote the degree of a node $v$ in $F$ and let $\Delta$ denote the maximum degree of $F$. For node set $S \subseteq V(F)$ and a node $v\in S$ we write $\deg_S(v)$ for the degree of $v$ in $F[S]$. The distance $d_F(v,u)$ between two vertices $v,u$ in $F$ is the number of edges in the shortest $v - u$ path in $F$; if no such path exists, we set $d_F(v, u) \coloneqq \infty$. Sometimes we simply write $\deg(v)$ and $d(v, u)$ if it is clear from context that we refer to the degree and distance in graph $F$. The greatest distance between any two vertices in $F$ is the diameter of $F$, denoted by $\diam(F)$. 

For each node $v$ and for every $k \in \mathbb{N}$, we denote the $k$-hop (or $k$-radius) neighborhood of $v$ as $N^k(v) = \{ u \in V : d(v,u) \leq k\}$. Set $N^1(v)$ is simply the set of neighbors of $v$, to which often refer to as $N(v)$. We often consider sets of nodes $S$ from which we need to remove a single node $u$. Hence, we use the notation $S\setminus u$ as a shorthand for $S\setminus \{u\}$.

\section{Strict \texorpdfstring{$H$}{Lg}-decompositions}
\label{sec:StrictH}
We begin with the formal definition of an $H$-decomposition, that is, a partition of the graph into layers such that every node has at most two neighbors in higher or equal layers. We also extend the definition to the setting where some nodes remain without a layer. 
\begin{definition}[(Partial) $H$-Decomposition] 
	\label{def:Hdecomp}
	Let $F$ be a forest and $\layer \colon V(F) \mapsto \mathbb{N} \cup \{\infty\}$. For $i\in \mathbb{N} \cup \{\infty\}$ define  $V_i=\{v\in V(F) \mid \layer(v)=i\}$, $V_{\geq i}=\bigcup_{j\geq i} V_j$. 
	
	We say that $\layer$ is a \emph{partial $H$-decomposition} if $deg_{V_{\geq i}}(v)\leq 2$ holds for every $v$ with $\layer(v)=i$. 
	We speak of an \emph{$H$-decomposition} if $V_{\infty}=\emptyset$, and $L=\max\{\layer(v) \mid v\in V(F), \layer(v)\neq \infty\}$ is the \emph{length} of the decomposition. 
\end{definition}
We also refer to the $V_i$'s as the \emph{layers} of the (partial) $H$-decomposition. 

\subparagraph*{Why $3$-coloring and strict $H$-decompositions?} $H$-decompositions were introduced to the area of distributed computing by Barenboim and Elkin \cite{BE10}. Nowadays, they are a frequent tool in the area and by increasing the degree bound $2$ to $(2+\eps)a$ the concept also extends to graphs with arboricity at most $a$ (this is the original setting considered in \cite{BE10}). More generally, it can be shown that $H$-decompositions with $O(\log n)$ layers exist. In the \local model, an $H$-decomposition of a tree with $O(\log n)$ layers can be computed by iteratively removing nodes of degree $1$ (rake) and nodes of degree $2$ (compress). 

In the \local model, one can $3$-color any graph with a given $H$-decomposition as in \Cref{def:Hdecomp} with $L$ layers in $O(L +\logstar n)$ rounds.  Each layer induces a graph with maximum degree $2$. First, use Linial's algorithm to color each layer in parallel with $C=O(1)$ colors. This coloring may contain lots of monochromatic edges between different layers and is only used as a schedule to compute the final $3$-coloring. In order to compute that final coloring, iterate through the layers in a decreasing order, and in each layer iterate through the $C$ colors. When processing one of the $C$ color classes, every node picks one color in $\{1,2,3\}$ not used by any of its already colored neighbors (at most two). 

\smallskip

Optimally, we would like to use the above \local model algorithm as the base for our exponentially faster \mpc algorithm. However, even if we were given an $H$-decomposition for free it is non-trivial to actually use it for $3$-coloring a graph if the runtime is restricted to $O(\log\log n)$ rounds and you only allow for a polylogarithmic memory overhead. Going through the layers in some sequential manner would be way too slow as it would require logarithmically rounds. Still, as the \local algorithm has locality $T=\Theta(\log n)$ the output of a node may depend on the topology in logarithmic distance. In order to achieve a fast \mpc algorithm, we want the nodes  to use the graph exponentiation technique to learn the part $G_v$ of their $T$-hop neighborhood that is relevant to determine their output in $O(\log T)=O(\log \log n)$ rounds. We refer to $G_v$ as the predecessor graph of node $v$. The challenge with the standard $H$-decomposition as given by \Cref{def:Hdecomp} is that, even though $G_v\subseteq V_{\geq \layer(v)}$, it may be  of size $\Theta(n)$. Hence, even if every node could learn its predecessor graph and store it in its local memory $S_v$ (formally, the memory of every node is stored on some machine), the global space bound can only be upper bounded by $\sum_{v\in V}|G_v|=O(n^2)$, drastically, violating the desired near-linear bound. In order to circumvent this issue we introduce the concept of a strict $H$-decomposition which is optimized for its usage in the \mpc model. The bottom line of this decomposition is that besides the properties of a classic $H$-decomposition, we also have a set $\vpivot$. The set $\vpivot$ induces a graph with maximum degree $2$ and hence can be colored with $3$ colors in $O(\logstar n)$ rounds with Linial's algorithm \cite{Linial92}. The main gain compared to the classic $H$-decomposition is, that once we have colored the nodes in $\vpivot$, we can show that the predecessor graph of every node $v\in V\setminus \vpivot$ is of logarithmic size (when considering the same \local model algorithm that colors these nodes layer by layer). Hence, each node can learn its predecessor graph in $O(\log\log n)$ rounds without violating global space constraints. We next present the definition of a strict $H$-decomposition. 

\begin{definition}[(Partial) Strict $H$-Decomposition]
	\label{def:strictH}
	Let $F$ be a forest and $\layer \colon V(F) \mapsto \mathbb{N} \cup \{\infty\}$ be a function. We define $V_{<\infty} = \{v \in V(F) \mid \layer(v) < \infty\}$ and
	\begin{align*}
		\vpivot := \{v \in V_{<\infty} \mid \layer(v) \geq \layer(w) \text{ for every $w \in N_F(v)$}\}.
	\end{align*}
	
	We refer to a (partial) $H$-decomposition $\layer$ as \emph{strict} if for every $v \in V_{<\infty} \setminus \vpivot$, it holds that
	\begin{align}
		\label{cond:pivot}
		|\{w \in N_F(v) \setminus \vpivot \mid \layer(w) = \layer(v)\} \cup \{w \in N_F(v) \mid \layer(w) > \layer(v)\}| \leq 1.
	\end{align}
	
	That is, the total number of non-pivot neighbors with the same layer and of neighbors with a strictly higher layer is at most $1$.
\end{definition}
There is some similarity between the definition of a strict $H$-decomposition and the $H$-decompositions used in the theory of so called locally checkable labelings \cite{CP19,Chang2020,BCMOS21}. These decompositions iteratively layer degree $1$ nodes and paths of length at least $\ell$. Our strict $H$-decomposition is similar to the case when we remove paths of length at least $\ell = 3$ (see \Cref{lem:strictH-generalizedrc-connect}).
The following lemma is one of the most crucial structural properties of partial $H$-decompositions that we exploit in the core of our algorithm (see \Cref{sec:subtreeRC}).

\begin{lemma}[Partial Strict $H$-Decomposition, Closure under taking minimums]
	\label{lem:minClosure}
	Let $F$ be a forest and $\layer_1,\layer_2 \colon V(F) \to \mathbb{N} \cup \{\infty\}$ be two partial strict $H$-decompositions. Let $\layer \colon V(F) \to \mathbb{N} \cup \{\infty\}$ with 
	\begin{align*}
		\layer(v) = \min(\layer_1(v),\layer_2(v))
	\end{align*}
	
	for every $v \in V(F)$. Then, $\layer$ is also a partial strict $H$-decomposition.
\end{lemma}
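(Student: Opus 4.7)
The plan is to verify the two defining properties of a partial strict $H$-decomposition for $\layer$ separately, in each case reducing to the corresponding property of whichever of $\layer_1,\layer_2$ attains the minimum at the vertex under consideration. Fix an arbitrary $v \in V_{<\infty}$ and, without loss of generality, assume $\layer_1(v) = \layer(v) = i$. Write $\vpivot, \vpivot^{(1)}, \vpivot^{(2)}$ for the pivot sets associated to $\layer, \layer_1, \layer_2$, and $V_{\geq i}, V^{(1)}_{\geq i}$ for the corresponding upper level sets.

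First I would check the (ordinary) partial $H$-decomposition property. If $w$ is a neighbor of $v$ with $\layer(w) \geq i$, then $\layer_1(w) \geq \min(\layer_1(w),\layer_2(w)) = \layer(w) \geq i = \layer_1(v)$, so $w \in V^{(1)}_{\geq i}$. Thus the neighbors of $v$ at $\layer$-layer $\geq i$ form a subset of its neighbors at $\layer_1$-layer $\geq i$, whose size is at most $2$ because $\layer_1$ is a partial $H$-decomposition.

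Next I would handle the strictness condition. Assume additionally $v \notin \vpivot$, so there exists a neighbor $u$ of $v$ with $\layer(u) > i$. The same computation as above gives $\layer_1(u) \geq \layer(u) > i = \layer_1(v)$, so $v \notin \vpivot^{(1)}$ either, and strictness of $\layer_1$ at $v$ applies. Now let $w \in N_F(v)$ be counted on the left-hand side of \eqref{cond:pivot} for $\layer$; I claim $w$ is also counted on the left-hand side of \eqref{cond:pivot} for $\layer_1$, which then bounds the total by $1$. There are two cases: if $\layer(w) > i$, then as above $\layer_1(w) > \layer_1(v)$ and $w$ contributes to the $\layer_1$-count. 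Otherwise $\layer(w) = i$ and $w \notin \vpivot$. Then some neighbor $u'$ of $w$ satisfies $\layer(u') > i$, hence $\layer_1(u') > i$. In particular, either $\layer_1(w) > i$ (and $w$ contributes as a strictly higher $\layer_1$-neighbor of $v$), or $\layer_1(w) = i$ and the neighbor $u'$ witnesses $w \notin \vpivot^{(1)}$, so $w$ contributes as a non-pivot $\layer_1$-neighbor of $v$ at the same layer. Either way $w$ is counted for $\layer_1$.

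The main conceptual obstacle I anticipate is exactly this last sub-case: one might worry that a neighbor $w$ contributing to the $\layer$-count does not contribute to either $\layer_j$-count, for instance by flipping between being a pivot for $\layer_j$ and not being a pivot for $\layer$. The key observation that resolves it is the monotonicity $\layer_j(x) \geq \layer(x)$ for all $x$, which ensures that any witness making $w$ non-pivot for $\layer$ is also a witness making $w$ non-pivot for $\layer_1$ (resp.\ $\layer_2$) as soon as $\layer_1(w) = \layer(w)$. Once this observation is in place the rest of the argument is a direct case analysis, and no case distinction requires more than a single chain of inequalities.
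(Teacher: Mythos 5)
Your proof is correct, and every step checks out: the WLOG reduction to the decomposition attaining the minimum at $v$, the transfer $v \notin \vpivot \Rightarrow v \notin \vpivot_1$, and the crucial sub-case where a same-layer neighbor $w$ with $\layer_1(w)=\layer(w)$ inherits a non-pivot witness from $\layer$ to $\layer_1$ via the monotonicity $\layer \leq \layer_1$. That last observation is exactly the contrapositive of the paper's key step (``if $w \in \vpivot_1$ and $\layer(w)=\layer_1(w)$ then $w \in \vpivot$''), so the underlying ideas coincide; what differs is the organization. The paper first isolates \Cref{claim:vlayer1pivot} and then runs a three-way case analysis on how the bound for $\layer_1$ is realized ($|S_1|=0$; $|S_1^{=}|=1$; $|S_1^{\uparrow}|=1$), in each case enumerating which neighbors could still contribute to the count for $\layer$ and arguing they mostly cannot. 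You instead prove the single containment $S \subseteq S_1$ — every neighbor counted on the left-hand side of (\ref{cond:pivot}) for $\layer$ is also counted for $\layer_1$ — from which $|S| \leq |S_1| \leq 1$ is immediate. This eliminates the case analysis entirely and is, in my view, the cleaner way to present the lemma; the paper's version buys nothing extra beyond making explicit which pivot neighbors of $v$ survive the minimum, which your argument never needs. The only point worth spelling out when you write this up is that a neighbor $w$ with $\layer(w)=\layer(v)<\infty$ automatically lies in $V_{<\infty}$, so ``$w \notin \vpivot$'' really does yield a neighbor $u'$ of $w$ with $\layer(u') > \layer(w)$ rather than merely reflecting $\layer(w)=\infty$.
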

\begin{proof}
	In order to show that $\layer$ is a partial strict $H$-decomposition, we need to show that it satisfies the properties in \Cref{def:Hdecomp,def:strictH}. Let $V_{<\infty} = \{v \in V(F) \mid \layer(v) < \infty\}$ be the set of nodes that get a finite layer in either $\layer_1$ or $\layer_2$, and let $\vpivot := \{v \in V_{<\infty} \mid \layer(v) \geq \layer(w) \text{ for every $w \in N_F(v)$}\}$ be the set of pivot nodes in $\layer$. Similarly let $\vpivot_1$ be the set of pivot nodes in $\layer_1$.
	
	Consider a node $v \in V_{<\infty}$ and without loss of generality let $\layer(v) = \layer_1(v)$. For each $w\in V$ we have $\layer(w) \le \layer_1(w)$, so using $\layer(v)=\layer_1(v)$ we obtain $\layer(w) \leq \layer_1(w)< \layer_1(v)=\layer(v)$ for all nodes $w \in N_F(v)$ with $\layer_1(w) < \layer_1(v)$. So there can be at most $2$ neighbors $w \in N_F(v)$ with $\layer(w) \ge \layer(v)$ (otherwise $v$ violates this property for $\layer_1$). Hence $\layer$ satisfies \Cref{def:Hdecomp}.   
	
	\begin{claim}
		\label{claim:vlayer1pivot}
		Let $v \in V_{<\infty} \setminus \vpivot$ be a node with $\layer(v) = \layer_1(v)$. Then $v\notin \vpivot_1$ and $v$ satisfies (\ref{cond:pivot}) for $\vpivot_1$ and $\layer_1$.
	\end{claim}
	\begin{proof}
		Since $v\in V_{<\infty}$, we  have that $\layer_1(v)=\layer(v) < \infty$. For contradiction assume that $v \in \vpivot_1$, that is, all nodes $w \in N_F(v)$ satisfy $\layer_1(w) \le \layer_1(v)$. As $\layer(w) \leq \layer_1(w)$ for all nodes $w$, we have that $\layer(w) \le \layer(v)$ for all neighbors $w$ of $v$. This means that $v \in \vpivot$ which contradicts the assumption that $v \in V_{<\infty} \setminus \vpivot$. Therefore, $\layer_1(v) < \infty$ and $v \notin \vpivot_1$, which implies that $v$ satisfies (\ref{cond:pivot}) for $\vpivot_1$ and $\layer_1$, as $\layer_1$ is a valid partial strict $H$-decomposition.
	\end{proof}

	We first define some notation: let $S^{\uparrow} = \{w \in N_F(v) \mid \layer(w) > \layer(v)\}$, $S^{=} = \{w \in N_F(v) \setminus \vpivot \mid \layer(w) = \layer(v)\}$, and $S = S^{\uparrow}\cup S^{=}$. We want to show that (\ref{cond:pivot}) also holds for $\vpivot$ and the function $\layer$. In other words, we want to show that $|S|\leq 1$ holds.

	With the same notation define $S_1^{\uparrow}, S_1^{=}$, and $S_1$ for $\vpivot_1$ and $\layer_1$,  \Cref{claim:vlayer1pivot} states that $|S_1|=|S_1^{\uparrow}\cup S_1^{=}|\leq 1$ holds. 
	We consider the following three cases on how $|S_1|\leq 1$ can be satisfied, and in each of them we show that $|S|\leq 1$ holds. 
	\begin{itemize}
		\item \textbf{Case $1$:} $|S_1| = 0.$ In this case we have $|S_1^{\uparrow}| = 0$ which means $v$ has no neighbors $w \in N_F(v)$ with $\layer_1(w) > \layer_1(v)=\layer(v)$, or in other words $\layer_1(w)\leq \layer(v)$ holds for all $w\in N_F(v)$. As $\layer(w)\leq\layer_1(w)$, this implies that $v$ has no neighbors $w$ with $\layer(w) > \layer(v)$ and hence $|S^{\uparrow}| =0$. 
		
		Moreover, we have $|S_1^{=}| = 0$, which means $v$ has no neighbors $w \in N_F(v) \setminus \vpivot_1$ with $\layer_1(w) = \layer_1(v)$, and as $\layer_1$ is a partial strict $H$-decomposition $v$ can have at most two neighbors $w,w' \in \vpivot_1\cap N_F(v)$ with $\layer_1(w) = \layer_1(w') = \layer_1(v)$. Notice that $w,w'$ are the only two nodes that can potentially belong to $S^{=}$ because for all other nodes $u \in N_F(v) \setminus \{w,w'\}$, we have $\layer(u) \le \layer_1(u) < \layer_1(v) = \layer(v)$.
		The nodes $w,w'$ are only contained in $S^{=}$ if $\layer(w) = \layer_1(w)$, and $\layer(w') = \layer_1(w')$. 
		
		Recall that $w \in \vpivot_1$, so if $\layer(w) = \layer_1(w)$ then $w \in \vpivot$ because all neighbors $u$ of $w$ have $\layer_1(u) \le \layer_1(w)$ which implies $\layer(u) \le \layer(w)$. So we either have $w \in \vpivot$, or $\layer(w) < \layer_1(w) = \layer(v)$, and in both cases $w \notin S^{=}$. The same argument holds for $w'$. Hence $|S^{=}| = 0$. Therefore, we have shown that $|S| = 0$.

		\item \textbf{Case $2$:} $|S_1|=1, |S_1^{=}| = 1.$ In this case we have $|S_1^{\uparrow}| = 0$ which implies that $|S^{\uparrow}| = 0$ by the same argument as Case 1.
		
		Since $|S_1^{=}| = 1$, $v$ has unique neighbor $w \in N_F(v) \setminus \vpivot_1$ with $\layer_1(w) = \layer_1(v)$, and as $\layer_1$ is an $H$-decomposition, $v$ can have at most one neighbor $ w' \in \vpivot_1\cap N_F(v)$ with $\layer_1(w') = \layer_1(v)$. Notice that $w,w'$ are the only two nodes that can potentially belong to $S^{=}$ because for all other nodes $u \in N_F(v) \setminus \{w,w'\}$, we have $\layer(u) \le \layer_1(u) < \layer_1(v) = \layer(v)$. The nodes $w$ and $w'$ are only included in $S^{=}$ if $\layer(w) = \layer_1(w)$ and $\layer(w') = \layer_1(w')$ hold, respectively. 
		
		Recall that $w' \in \vpivot_1$, so if $\layer(w') = \layer_1(w')$ then $w' \in \vpivot$ because all neighbors $u$ of $w'$ have $\layer_1(u) \le \layer_1(w')$ which implies $\layer(u) \le \layer(w')$. So we either have $w' \in \vpivot$, or $\layer(w') < \layer_1(w') = \layer(v)$, and in both cases $w' \notin S^{=}$. Therefore, $v$ has at most one neighbor (i.e. $w$ if $\layer(w) = \layer_1(w)$) in $S^{=}$, and hence, $|S| \le 1$.

		\item \textbf{Case $3$:} $|S_1|=1, |S_1^{\uparrow}| = 1.$ Here we have that $|S_1^{=}| = 0$ and the analysis from Case 1 implies that $|S^{=}| = 0$.

		Therefore we can focus on the unique node $w \in S_1^{\uparrow}$ with $\layer_1(w) > \layer_1(v)$ which is the only node that can potentially belong to $S^{\uparrow}$, as all other neighbors of $u \in N_F(v) \setminus \{w\}$ have $\layer(u) \le \layer_1(u) < \layer_1(v) = \layer(v)$. The node $w$ will contribute at most $1$ to $|S^{\uparrow}|$ no matter what value $\layer(w)$ it assumes, so we have $|S| \le 1$.
	\end{itemize}
	Therefore, $\layer$ also satisfies the additional property of \Cref{def:strictH}, and it is a valid partial strict $H$-decomposition.
\end{proof}

From a high level point of view, \Cref{lem:minClosure} says that we can independently compute two partial strict $H$-decompositions, and even though they might contain conflicting layer assignments for certain nodes, we can obtain a unified decomposition, by assigning each node to the smaller layer of the two choices. In fact, this insight also generalizes to more than two (possibly conflicting) decompositions. At the core of our procedure in \Cref{sec:subtreeRC}, many nodes (independently) learn large parts of the graph. Then, every node computes a partial decomposition on the parts that it has learned, and in a second step all these partial decompositions are combined, where each node takes the minimum layer that it got assigned in any of the decompositions. Taking the minimum is a very efficient procedure in the \mpc model and only requires constant time. The remaining difficulty in \Cref{sec:subtreeRC} is to show that nodes learn large enough parts in the graph in order to make very fast global progress, that is, we show that the unified decomposition assigns a layer to a large fraction of the nodes. 

\section{Strict \texorpdfstring{$H$}{Lg}-decomposition in \mpc}
\label{sec:strict_h}
In this section, we present our $O(\log\log n)$-round \mpc algorithm for computing a strict $H$-decomposition. However, the hardest part of that algorithm, that is, assigning each node that is contained in a small subtree (see definition below) to a layer is deferred to \Cref{sec:subtreeRC}. The algorithm in this section uses $n\cdot \poly\log n$ global space. In \Cref{sec:optimalSpace} we explain how to extend the algorithm to optimal space. 

\subparagraph*{High Level Overview.} For the sake of this high level overview let us first assume that we compute an $H$-decomposition with $O(\log n)$ layers that may not be strict. Similar to the classic rake \& compress algorithm, our algorithm  iteratively assigns nodes to layers. After assigning a node to some layer we \emph{remove} it from the graph and continue on the remaining graph, which may actually become disconnected and turn into a forest. In order to present the details of the high level intuition we require the definition of a subtree which is central to our whole approach.

\begin{definition}[Subtree] \label{def:subtree}
	Let $T$ be a tree. A \emph{subtree} $T'\subseteq T$ is a connected induced subgraph of $T$ such that $T \setminus T'$ contains at most one component. 
	
	A subtree $T'\subseteq F$ of a forest $F$ is a connected induced subgraph of $F$ such that the number of components of $F\setminus T'$ is not larger than the number of connected components of $F$.  
\end{definition}

The definition of a subtree is best understood in a rooted tree, where the subtree rooted at a node $v$ is formed by all its descendants.

In order to assign a layer to all nodes of the graph, we iterate the following two steps until all nodes have received a layer: 

\begin{enumerate}
	\item  Assigns a layer to each node contained in a \emph{small subtree} of size $\leq n^{\delta/10}$ ($\textsc{SubTreeRC}(F)$),
	\item  Assign a layer to each node of degree $\leq 2$ in the remaining graph. 
\end{enumerate}  

This process can be seen as a generalization of the classic rake and compress procedure, in which one iteratively removes leaves, i.e., subtrees of size $1$,  and nodes of degree $2$. The rake and compress procedure requires $O(\log n)$ iterations to \emph{remove} all nodes of the graph. Our generalized process requires $O(1/\delta)=O(1)$ in order to assign a layer to every node of the graph (see \Cref{lem:generalizedRakeCompress} for $x=n^{\delta/10}$). Note that the lemma statement considers a slightly different process than the one presented in this overview; the difference lies in the fact that we actually want to compute a \underline{strict} $H$-decomposition. However, a similar lemma holds for the process of this overview. The main contribution and the main difficulty of our work lies in the procedure $\textsc{SubTreeRC}(F)$ as nodes do not know whether they are contained in a small subtree, but still these subtrees can have diameter up to $n^{\delta/10}$, so conditioned on the 1 vs 2 cycle conjecture it is impossible that a single node can learn the whole subtree in $O(\log\log n)$ rounds (we don't prove this formally, but it's very unlikely that such a result holds without breaking the conjecture).  We explain the details of the procedure $\textsc{SubTreeRC}(F)$ in \Cref{sec:subtreeRC}.

We continue with our generalized rake and compress statement that shows that a constant number of iterations of the aforementioned process suffice.  As we want to compute a strict $H$-decomposition (see \Cref{def:strictH}), we need to slightly modify Step~2 of the above outline, for which we require further definitions; the details of why $\textsc{SubTreeRC}(F)$ returns layers that induce a strict $H$-decomposition are presented in \Cref{sec:subtreeRC}. 

A path in a graph is a \emph{degree-2 path} if all of its nodes, including its endpoints have degree $2$. The \emph{length of a path} is the number of nodes in the path, e.g., a single node is a path of length $1$.  We also need the following claim that states that a tree has more leaves than internal nodes of degree at least three.

\begin{restatable}{claim}{claimForestInternalNodes}
	\label{claim:forestDegOneDegThree}
	For any forest $F$ we have $|\{v\in V(F)\mid d_F(v)\leq 1\}|\geq |\{v\in V(F)\mid d_F(v)\geq 3\}|$.
\end{restatable}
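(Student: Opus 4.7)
The plan is to prove the claim by a short handshaking argument. Let $n_i := |\{v \in V(F) \mid d_F(v) = i\}|$ denote the number of vertices of degree exactly $i$, and let $c$ be the number of connected components of $F$. Since $F$ is a forest, $|E(F)| = n - c$ where $n = |V(F)|$, so the sum of degrees equals $2(n-c)$. The goal is to rewrite the identity $\sum_v (2 - d_F(v)) = 2n - 2(n-c) = 2c$ in a form that separates the ``low-degree'' part ($d_F \leq 1$) from the ``high-degree'' part ($d_F \geq 3$).

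Splitting the sum by degree, the degree-$2$ vertices contribute $0$, degree-$0$ and degree-$1$ vertices contribute $2n_0 + n_1$, and every vertex $v$ with $d_F(v) = i \geq 3$ contributes $2 - i \leq -1$. This gives the identity
\begin{equation*}
	2c \;=\; 2n_0 + n_1 - \sum_{i \geq 3}(i-2)\, n_i.
\end{equation*}
Rearranging, $n_1 = \sum_{i \geq 3}(i-2)\, n_i + 2(c - n_0)$. I would then use the key observation that every isolated vertex is its own connected component, hence $c \geq n_0$, making the term $2(c - n_0)$ nonnegative. Combined with $(i-2) \geq 1$ for $i \geq 3$, this yields $n_1 \geq \sum_{i \geq 3} n_i$, and adding $n_0 \geq 0$ to both sides finishes the proof.

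I do not expect a real obstacle here; the entire argument is a careful application of the handshaking lemma together with the bookkeeping fact $c \geq n_0$. The only tiny subtlety worth flagging is the treatment of isolated vertices: if one ignored them, one could be tempted to apply the argument to each tree component separately and lose the $n_0$ term entirely, but the accounting above handles isolated vertices uniformly because each such vertex contributes $+2$ to the left-hand side $2c$ and $+2$ to the low-degree count.
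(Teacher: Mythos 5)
Your proof is correct and uses essentially the same approach as the paper: a handshaking-lemma count combined with the fact that a forest on $n$ vertices with $c$ components has $n-c$ edges. If anything, your bookkeeping via $\sum_v (2-d_F(v)) = 2c$ is slightly more careful than the paper's version, which lower-bounds the edge count by $(1\cdot l + 2k + 3t)/2$ with $l$ counting all degree-$\le 1$ vertices and thus silently overcounts isolated (degree-$0$) vertices.
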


\begin{proof}
	Let $l$ be the number of nodes with degree at most $1$, $k$ the number of nodes with degree two, and $t$ the number of nodes with degree at least three. 
	We obtain that the graph has at least $(1\cdot l+2\cdot k+3\cdot t)/2$ edges incident to these nodes. At the same time, as $F$ is a forest the total number of edges is upper bounded by $l+k+t-1$. We obtain $l/2+k+3t/2\leq l+k+t-1$, which implies $l\geq t+2$.
\end{proof}

The following lemma is easiest to be understood when setting $x=\ell=1$ where the process (almost) equals the classic rake \& compress process---in fact it consists of a rake step, a compress step, and another rake step---and the theorem shows that it removes $1/3$ of the nodes ($2/3$ of the nodes remain in the graph).
\begin{restatable}[Generalized rake and compress]{lemma}{lemprogress}
	\label{lem:generalizedRakeCompress}
	Let $x, \ell \in \mathbb{Z}$. Consider a process on a tree $T$ that consists of the following steps: 
	
	\begin{enumerate} 
		\item Remove (at least) all subtrees of size $\leq x$ from $T$, resulting in $T_1$, 
		\item Remove (at least) all nodes contained in a degree-2 path of length at least $\ell$ from $T_1$, resulting in $T_2$,
		\item Remove (at least) all nodes with degree $\leq 1$ from $T_2$.
	\end{enumerate}
	
	The number of nodes remaining is at most a $1/(1+(x+1)/2\ell) = O(\ell/x)$ fraction of the nodes from $T$.
	The degrees of nodes in Step 2) and 3) of the process are with respect to the graph induced by remaining nodes at the respective step.
\end{restatable}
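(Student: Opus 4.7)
The plan is to characterize the surviving set $T_3$ in terms of a compressed version of $T_1$, lower-bound the nodes removed in each step, and combine by a brief algebraic computation. Throughout I will assume the minimum removals at each step; the lemma's ``at least'' formulation only gives more room. Let $\hat T_1$ denote the tree obtained from $T_1$ by contracting each maximal degree-$2$ path to a single edge; its vertices are exactly the $T_1$-vertices of degree $\neq 2$. Writing $L$ and $H$ for the numbers of degree-$\leq 1$ and degree-$\geq 3$ vertices of $T_1$, we have $L\ge H$ by \Cref{claim:forestDegOneDegThree}. Each edge $e\in E(\hat T_1)$ carries a length $|e|$ equal to the number of internal degree-$2$ $T_1$-vertices along it, and I will call $e$ short if $|e|<\ell$ and long if $|e|\ge\ell$, with edge sets $E_s,E_l$.

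The first step is to pinpoint $T_3$ exactly. A direct case analysis shows that each $v\in V(\hat T_1)$ has $T_2$-degree $\deg_s(v)$ (its number of short incident edges, since long edges lose all internal vertices in step~$2$), while each internal vertex of a short edge retains $T_2$-degree $2$. Hence $T_3=\hat V_3\cup\bigcup_{e\in E_s}\{\text{internal vertices of }e\}$, where $\hat V_3:=\{v\in V(\hat T_1):\deg_s(v)\ge 2\}$. Since every leaf of $\hat T_1$ has $\deg_s\le 1$, $\hat V_3$ lies inside the degree-$\ge 3$ vertices of $\hat T_1$, so $|\hat V_3|\le H\le L$ and $|T_3|\le|\hat V_3|+(\ell-1)|E_s|$.

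The second step is three lower bounds on the removed nodes. Letting $R_i$ denote the set of nodes removed in step $i$: trivially $|R_2|\ge\ell|E_l|$, and $|R_3|\ge (L+H)-|\hat V_3|$ since exactly the $\hat T_1$-vertices with $\deg_s\le 1$ are pruned in step~$3$. The essential bound is $|R_1|\ge xL$. For each degree-$\leq 1$ vertex $v$ of $T_1$ with (at most one) $T_1$-neighbor $u$, let $T_v$ be the component of $T-uv$ containing $v$ (or $v$'s whole $T$-component when no such $u$ exists); this is a subtree of $T$ in the sense of \Cref{def:subtree}, and since $v\in T_1$ we must have $|T_v|>x$ (otherwise $T_v\subseteq R_1$, pulling $v$ with it). Moreover $T_v\setminus\{v\}\subseteq R_1$: by uniqueness of $T$-paths, any $T$-path between two $T_1$-vertices avoids every removed subtree (which also shows that $T_1$ is connected), so the only $T_1$-vertex on the $v$-side of $uv$ is $v$ itself. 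The same uniqueness argument makes the sets $T_v\setminus\{v\}$ pairwise disjoint across distinct leaves, yielding $|R_1|\ge xL$.

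Finally I combine everything: $n-|T_3|\ge xL+\ell|E_l|+(L+H)-|\hat V_3|$, and substituting this together with $|\hat V_3|\le H$ and $|E_s|\le L+H-1$ (since $\hat T_1$ is a tree) into the target inequality $2\ell(n-|T_3|)\ge (x+1)|T_3|$ reduces, after direct expansion, to $(x+1)\bigl[(\ell+1)L-\ell H\bigr]+2\ell^2|E_l|+(x+1)(\ell-1)\ge 0$, which follows immediately from $L\ge H$. Rearranging gives the stated bound $|T_3|\le n/(1+(x+1)/(2\ell))$. The main obstacle I expect is establishing the step-$1$ bound $|R_1|\ge xL$, since this is the unique place where the notion of a ``subtree of the original $T$'' must interact nontrivially with $T_1$'s combinatorics, and it requires both the connectivity of $T_1$ and the cross-leaf disjointness of the ``leaf sides'' $T_v\setminus\{v\}$.
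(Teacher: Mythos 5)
Your proof is correct and follows essentially the same route as the paper's: the disjoint ``$\geq x$ removed nodes per leaf of $T_1$'' argument (your bound $|R_1|\geq xL$) is exactly the paper's blue/red-node counting, the inequality $L\geq H$ comes from the same forest claim (\Cref{claim:forestDegOneDegThree}), and your bound $|T_3|\leq|\hat V_3|+(\ell-1)|E_s|$ plays the role of the paper's $n_2\leq \ell\cdot|\{v\in T_1:\deg_{T_1}(v)\neq 2\}|$. You simply carry out the bookkeeping more explicitly via the contracted tree $\hat T_1$ and a longer algebraic verification, arriving at the same bound.
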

\begin{proof}
	Let $x,l \in \mathbb{Z}$ and $T$ be a tree. Let $\mathcal{T}=\{t\subseteq T \mid t\text{ is subtree of $T$}, |V(t)|\leq x\}$ be the set of all subtrees of size $\leq x$ of $T$. Let $T_1$ be the graph after the removal of all subtrees of size $\leq x$, that is, $T_1$ is the graph induced by $V\setminus \big(\bigcup_{t\in\mathcal{T}} V(t)\big)$. Let $T_2$ be the graph obtained after Step 3 of the process. Denote $n=|V(T)|$, $n_1=|V(T_1)|$ and $n_2=|V(T_2)|$.
	
	Let $B=\bigcup_{t\in\mathcal{T}} V(t)$ be the set of nodes that are part of a subtree of size $\leq x$. We refer to these nodes as blue nodes. We refer to every leaf node in $T_1$ as a red node. 
	
	Observe that for each red node $v$ there is a non-empty collection of subtrees $\emptyset\neq \mathcal{T}_v\subseteq \mathcal{T}$ such that $v$ is adjacent to the unique root of each $t\in \mathcal{T}_v$ in the original tree $T$. Note that these subtrees contain blue nodes only and were removed in the first step of the algorithm. We obtain that for each red node $v$ the union of the subtrees $\mathcal{T}_v$ contains at least $x$ nodes, as otherwise $v$ together with the subtrees in $\mathcal{T}_v$ would form a subtree of size at most $x$, implying that $v$  would not remain in $T_1$ as a red node. Additionally, for distinct red nodes $v$ and $v'$ the sets $\bigcup_{t\in \mathcal{T}_v}V(t)$ and $\bigcup_{t\in \mathcal{T}_{v'}}V(t)$ are disjoint. 
	
	As Step 3 strips away all these red nodes (they have degree $1$ in $T_1$ and are for sure removed in the third step of the process), for each red node in $T_1$ the complete process removes $x+1$ nodes from $T$ (we aren't yet using that Step 2 removes certain nodes of $2$ from $T_1$) and we obtain the following. 
	\begin{align*}
		n &>  n_2 +  (x+1) \cdot |\{v\in T_1: \deg_{T_1}(v)=1\}|  \\
		& \stackrel{*}{\geq} n_2 +  (x+1) \cdot |\{v\in T_1: \deg_{T_1}(v)\neq 2\}|/2\stackrel{**}{\geq}n_2\cdot (1+(x+1)/2\ell).
	\end{align*}
	At $*$ we used \Cref{claim:forestDegOneDegThree}, that is, that  any forest contains fewer degree $\geq 3$ nodes than nodes of degree $1$. At $**$ we finally use that in Step~2 we remove certain nodes with degree $2$ from $T_1$. The reasoning is slightly more involved. In fact, after Step~2 the number of nodes is upper bounded by $\ell\cdot \{v\in T_1| \deg_{T_1}(v)\neq 2\}$, as only nodes with degree $\geq 3$ and 1, and paths of length at most $\ell-1$ between them remain in the graph; as Step~3 can only remove additional nodes we obtain $n_2\leq \ell \cdot |\{v\in T_1| \deg_{T_1}(v)\neq 2\}|$, which transforms to $|\{v\in T_1| \deg_{T_1}(v)\neq 2\}|\geq n_2/\ell$, which we use at $**$.
	Dividing both sides of the inequality by $(1+(x+1)/2\ell) > 1$ we deduce that $n_2\leq n/(1+(x+1)/2\ell)$.
\end{proof}

We now state a lemma for a key subroutine that we will use as black box in this section and dedicate \Cref{sec:subtreeRC} to designing an algorithm that proves the lemma.

\begin{restatable}[\textsc{SubTreeRC}]{lemma}{lemsubtreerc}
	\label{lem:mpc_subtree_removal}
	Let $F$ be a forest on $n$ vertices. There exists a deterministic \mpc algorithm \textsc{SubTreeRC} with $O(n^{\delta})$ local space,  $0 < \delta < 1$, and $\widetilde{O}(n)$ global space which takes $F$ as input and computes in $O(\log \log n)$ rounds a partial strict $H$-decomposition $\layer \colon V(F) \mapsto [\lceil \log(|V(F)| + 1)\rceil] \cup \{\infty\}$ such that $\layer(v) < \infty$ for every node $v \in V(F)$ contained in a subtree of size $n^{\delta/10}$.
\end{restatable}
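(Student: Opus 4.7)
}
The plan is to invoke the balanced exponentiation procedure of \Cref{lem:balancedExp} with parameter $k=\Theta(\log n)$, so that every \emph{important} node $v$ (i.e., one contained in a subtree of size at most $n^{\delta/8}$) learns the induced subgraph $G_v$ of its $k$-hop neighborhood in every direction except at most one ``heavy'' direction. Each important $v$ will then locally simulate an iterative rake-and-compress peeling on $G_v$, producing a partial strict $H$-decomposition $\layer^v \colon V(F) \to [\lceil \log(n+1)\rceil] \cup \{\infty\}$; nodes outside $G_v$ and nodes inside $G_v$ that $v$ is not confident about are assigned layer $\infty$. The algorithm finally outputs $\layer(w) := \min_{v} \layer^v(w)$, which by iterated application of \Cref{lem:minClosure} is itself a partial strict $H$-decomposition. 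Taking the pointwise minimum over the $|I|$ candidates reduces to a sort-based aggregation and fits in $O(1)$ \mpc rounds.

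\paragraph*{Local peeling and correctness.}
The local peeling by $v$ mimics the classical construction in $L = \lceil \log(n+1) \rceil$ iterations. In iteration $i$, among the nodes of $G_v$ still unassigned, $v$ identifies the degree-$\leq 1$ nodes (rake) and the nodes on sufficiently long degree-$2$ paths (compress), where degrees are taken in $G_v$ restricted to nodes of current layer $\geq i$. A candidate node $w$ is peeled into layer $i$ only if all of $w$'s $F$-neighbors are present in $G_v$, so that $v$ can verify exactly the constraint $\deg_{V^v_{\geq i}}(w) \leq 2$ of \Cref{def:Hdecomp} and the pivot condition~\eqref{cond:pivot} of \Cref{def:strictH}; otherwise $w$ keeps layer $\infty$. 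This ``confidence'' rule makes $\layer^v$ a valid partial strict $H$-decomposition by construction, and distinguishing the short-rake from the long-compress nodes yields precisely the pivot/non-pivot split required by \Cref{def:strictH}.

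\paragraph*{Coverage of small subtrees.}
It remains to show that $\layer(w) < \infty$ whenever $w$ lies in a subtree of size at most $n^{\delta/10}$. Such a $w$ is important (since $n^{\delta/10} \leq n^{\delta/8}$), so $w$ runs its own local peeling on $G_w$, and it suffices to show $\layer^w(w) < \infty$. We will invoke the key structural claim to be formalized as \Cref{lem:conservative_progress}: when a node $v$ has stored the full $k$-hop neighborhood in all but one direction, the conservative peeling assigns $v$ a finite layer within $L = O(\log n)$ iterations. The intuition is that the single unexplored direction behaves as one ``virtual'' neighbor that is always present in the count of remaining neighbors---precisely the one-direction slack that a strict $H$-decomposition tolerates---while the remaining view obeys the halving argument underlying \Cref{lem:generalizedRakeCompress} (with $x=\ell=1$), guaranteeing that $v$ is peeled within $L$ iterations.

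\paragraph*{Resources and main obstacle.}
The round count is $O(\log k) = O(\log \log n)$ for balanced exponentiation, plus $O(1)$ for the min-combination; all other work is purely local. Local space is dominated by $|G_v|$, which is at most $n^{\delta/8}$ because the explored directions at an important $v$ lie inside its own subtree of size at most $n^{\delta/8}$, comfortably fitting in $O(n^{\delta})$ words. The global space bound $\widetilde{O}(n)$ follows from the corresponding guarantee of \Cref{lem:balancedExp} together with $O(n)$ bookkeeping for the $|V(F)|$ layer values. The central obstacle I expect is proving the conservative progress claim behind \Cref{lem:conservative_progress}: one must argue that even after truncating $G_v$ to all-but-one direction and after refusing to peel boundary nodes for correctness, the retained view still contains enough structure---i.e., a rich enough sequence of rake and long compress operations around $v$---to remove $v$ within $O(\log n)$ rounds of the local peeling. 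Everything else (applying \Cref{lem:balancedExp}, verifying \Cref{def:Hdecomp} and \Cref{def:strictH} locally, and combining the $\layer^v$'s via \Cref{lem:minClosure}) then follows routinely from the tools already in hand.
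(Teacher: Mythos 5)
Your plan follows essentially the same route as the paper: balanced exponentiation with $k=\Theta(\log n)$, a locally simulated conservative peeling that only peels nodes whose relevant neighborhood is verifiable, a progress lemma showing the root of each good view is peeled within $O(\log n)$ iterations, and combination of the local decompositions by pointwise minimum via \Cref{lem:minClosure}. The only nuance is that your stated peeling rule (``peel $w$ only if all of $w$'s $F$-neighbors are present'') must, as in the paper's algorithm, be imposed only on pivot candidates---non-pivot candidates are peeled with the unknown neighbor charged against their budget of one, exactly the ``virtual neighbor'' slack your coverage paragraph already describes.
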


Our \mpc algorithm for computing strict $H$-decomposition appears in \Cref{alg:strictHDecom}. We will now prove the correctness and progress guarantees of our algorithm.

\begin{algorithm}[t]
	\caption{Strict $H$-decomposition.}\label{alg:strictHDecom}
	\begin{algorithmic}[1]
		
		\State{Throughout  $V_{\infty}=\{v\in F \mid \layer(v)=\infty\}$ denotes the set of nodes whose layer equals $\infty$.} 
		\Function{StrictHDecomp}{Forest $F$}
		\State{\textbf{Initialize:} $\layer(v)=\infty$ for all $v\in V(F)$; $\mathsf{offset}\larr  \lceil \log (|V(F)| + 1)\rceil+1$}
		\For{$i = 1,2,\ldots,\lceil 10/\delta\rceil$}
		\State{$F_{i}\larr F[V_{\infty}]$}
		\State{$\layer\larr i\cdot \mathsf{offset}+\textsc{SubTreeRC}(F_i, x = n^{\delta/10}$)}  \phantomsection\label{line:removeSubtrees}
		\State{Let $\vpivot_i\larr \{v\in V_{\infty} \mid d_{V_{\infty}}(v)\leq 2, d_{V_{\infty}}(w)\leq 2 \text{ for all } w\in N(v)\}$}
		\State{$\layer(v)\larr (i+1)\cdot \mathsf{offset}$ for every node $v\in \vpivot_i$ \phantomsection\label{line:removePivotmain}}
		\State{$\layer(v)\larr (i+1)\cdot \mathsf{offset}$ for every node $v\in V_{\infty}$ with $\leq 1$ in $V_{\infty}$ \phantomsection\label{line:removeDegOnemain}}
		\EndFor
		\State{\Return{$\layer$}}
		\EndFunction
	\end{algorithmic}
\end{algorithm}

\begin{restatable}{lemma}{lemstrictHCorrect}\label{lem:strictH-correctness}
	At the end of each iteration $i$, we have that $\layer$ is a partial strict $H$-decomposition with at most $(i+1)\cdot \mathsf{offset}$ layers.
\end{restatable}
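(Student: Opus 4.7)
I would prove this by induction on $i$, with the base case (before the first iteration) trivial since $\layer\equiv\infty$ makes $V_{<\infty}=\emptyset$ and both Definitions~\ref{def:Hdecomp} and~\ref{def:strictH} hold vacuously. The layer-bound part of the claim is immediate from the construction: since $\mathsf{offset}>\lceil\log(|V(F)|+1)\rceil$, the shifted SubTreeRC output lies in $(i\cdot\mathsf{offset},(i+1)\cdot\mathsf{offset})$, and Lines~\ref{line:removePivotmain}--\ref{line:removeDegOnemain} assign exactly $(i+1)\cdot\mathsf{offset}$. The substantive task is showing that the partial strict $H$-decomposition property is preserved, which I would verify in two phases matching the two update blocks of iteration $i$: the state right after Line~\ref{line:removeSubtrees}, and the final state after Lines~\ref{line:removePivotmain}--\ref{line:removeDegOnemain}.

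For the first phase, my key observation is that every previously finite layer is at most $i\cdot\mathsf{offset}$ by the IH, while every newly assigned (shifted) SubTreeRC layer is strictly larger. Consequently, for any previously finite node $v$ the set $V_{\geq\layer(v)}$ and its pivot membership are both unchanged by Line~\ref{line:removeSubtrees}, so the conditions of Definitions~\ref{def:Hdecomp} and~\ref{def:strictH} carry over verbatim from the IH. For a node $v\in V(F_i)$ newly assigned layer $\ell=i\cdot\mathsf{offset}+\layer^{*}(v)$, every $F$-neighbor of $v$ with layer $\ge\ell$ must lie inside $V(F_i)$ (outside nodes have layer $\le i\cdot\mathsf{offset}<\ell$), and for such neighbors the condition $\layer\ge\ell$ coincides with $\layer^{*}\ge\layer^{*}(v)$. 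Thus both the degree condition of Definition~\ref{def:Hdecomp} and the strict condition~(\ref{cond:pivot}) for $v$ in $F$ reduce exactly to the corresponding conditions for $v$ in $F_i$ under $\layer^{*}$, which are guaranteed by Lemma~\ref{lem:mpc_subtree_removal}.

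For the second phase, let $S$ denote the set of nodes newly assigned layer $(i+1)\cdot\mathsf{offset}$, and let $\vpivot'$ be the resulting pivot set. Nodes with already-finite layer are again unaffected by the argument of the first phase. The structural core of my proof is the claim that $\vpivot_i\subseteq\vpivot'$: for any $u\in\vpivot_i$ and any $V_\infty$-neighbor $u'$ of $u$, the pivot condition on $u$ gives $d_{V_\infty}(u')\le 2$; if $u'\notin\vpivot_i$, then $u'$ must have a second $V_\infty$-neighbor of degree $>2$, so removing $u\in\vpivot_i$ in Line~\ref{line:removePivotmain} drops $u'$'s updated $V_\infty$-degree to at most $1$, and Line~\ref{line:removeDegOnemain} then promotes $u'$ into $S$. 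Hence every $V_\infty$-neighbor of $u$ lands in $S$, so $u$ has no remaining $\infty$-neighbor and $u\in\vpivot'$.

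Given this claim, the degree condition of Definition~\ref{def:Hdecomp} for $v\in S$ follows because $v$'s $V_\infty$-degree is at most $2$: directly if $v\in\vpivot_i$, and otherwise because a $v$ of $V_\infty$-degree $>2$ would have no $\vpivot_i$-neighbor and hence could not be promoted by Line~\ref{line:removeDegOnemain}. The main obstacle is the strict condition~(\ref{cond:pivot}) for $v\in S\setminus\vpivot'$: the structural claim forces $v\notin\vpivot_i$ (else $v\in\vpivot'$), so $v$ was promoted by Line~\ref{line:removeDegOnemain}, and $v\notin\vpivot_i$ combined with $d_{V_\infty}(v)\le 2$ forces some $V_\infty$-neighbor $u$ of $v$ with $d_{V_\infty}(u)>2$. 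I would then use the Line~\ref{line:removeDegOnemain} promotion threshold of $\le 1$ and the claim applied to $v$'s $\vpivot_i$-neighbors to show that $u$ is the unique $V_\infty$-neighbor of $v$ outside $\vpivot_i$ and that $u$ cannot be promoted (having no $\vpivot_i$-neighbors of its own), so $u$ remains in $V_\infty$ at the end, while every other $V_\infty$-neighbor of $v$ lies in $\vpivot_i\cap S\subseteq\vpivot'\cap S$ by the claim. Hence $u$ is the unique contributor to the combined set in~(\ref{cond:pivot}), giving the desired bound of $1$.
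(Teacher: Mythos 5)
Your proposal is correct and follows essentially the same route as the paper's proof: induction on $i$, the observation that appending strictly higher layers to an existing partial strict $H$-decomposition preserves validity, the key claim $\vpivot_i\subseteq\vpivot$ (your $\vpivot'$) via the fact that all $V_\infty$-neighbors of a pivot node also get layered in the same iteration, and the degree-$\le 1$ argument for the nodes promoted in Line~\ref{line:removeDegOnemain}. Your write-up is in fact somewhat more detailed than the paper's, particularly in verifying that previously layered nodes are unaffected and in pinning down the unique high-degree neighbor $u$ that witnesses the bound in~(\ref{cond:pivot}).
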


\begin{proof}
	We prove the lemma inductively over the iterations $i$. So assume that at the beginning of iteration $i$, $\layer$ is a correct partial strict $H$-decompositon with at most $i \cdot \mathsf{offset}$ layers. In \Cref{line:removeSubtrees,line:removePivotmain,line:removeDegOnemain} we will assign a layer greater than $i \cdot \mathsf{offset}$ to some nodes in $v\in V_{\infty}$ with $\layer(v) = \infty$.
	Given a partial strict $H$-decomposition, if we compute another partial strict $H$-decomposition of some of the nodes in $V_{\infty}$ such that they obtain a finite layer that is larger than all nodes in $V \setminus V_{\infty}$, we still have a valid partial strict $H$-decomposition. Therefore all we need to show is that the layers assigned to nodes in iteration $i$ form a partial strict $H$-decomposition.
	
	By \Cref{lem:mpc_subtree_removal}, \Cref{line:removeSubtrees} computes a partial strict $H$-decomposition with largest layer $< (i+1)\cdot \mathsf{offset}$, and in \Cref{line:removePivotmain,line:removeDegOnemain} we assign layer $(i+1)\cdot \mathsf{offset}$ to nodes that are in $V_{\infty}$ after \Cref{line:removeSubtrees}. Therefore, by the previous argument, we now need to prove that nodes assigned a layer in \Cref{line:removePivotmain,line:removeDegOnemain} form a partial strict $H$-decomposition.
	
	Nodes $v \in \vpivot_i$ that are assigned a layer in \Cref{line:removePivotmain} have degree $2$ by definition. Both the neighbors of $v$ are assigned layer either in \Cref{line:removePivotmain} or \Cref{line:removeDegOnemain}, as they are either in $\vpivot_i$ or their degree becomes $1$ after $v$ is assigned a layer. Therefore $\vpivot_i \subseteq \vpivot$ of \Cref{def:strictH}.
	
	Finally, nodes $v \notin \vpivot_i$ that are assigned a layer in \Cref{line:removeDegOnemain} have at most one neighbor in $N_F(v) \setminus \vpivot$ with the same or higher layer, since they have degree at most $1$ after removing nodes in $\vpivot_i$. This implies that the partial strict $H$-decomposition properties are satisfied by all nodes. Since the largest finite layer assigned in iterations $1, \dots, i$ is $(i+1)\cdot \mathsf{offset}$, this proves the statement of the lemma.
\end{proof}

\begin{restatable}{lemma}{lemPivotIsRake}\label{lem:strictH-generalizedrc-connect}
	In iteration $i$, \Cref{line:removeSubtrees,line:removePivotmain,line:removeDegOnemain} correspond to a generalized rake and compress step with $x = n^{\delta/10}$ and $\ell = 3$.
\end{restatable}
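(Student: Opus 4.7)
My plan is to interpret the three operations in \Cref{line:removeSubtrees,line:removePivotmain,line:removeDegOnemain} of iteration $i$ as a single execution of the process described in \Cref{lem:generalizedRakeCompress} with parameters $x = n^{\delta/10}$ and $\ell = 3$. Concretely, for each of the three steps of that process I will exhibit a concrete set of removed nodes that (i) satisfies the ``remove at least'' requirement of that step and (ii) whose union equals the set of nodes assigned a finite layer by the algorithm in iteration $i$.

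Step~1 is immediate: by \Cref{lem:mpc_subtree_removal}, the \textsc{SubTreeRC} call in \Cref{line:removeSubtrees} assigns a finite layer to every node contained in a subtree of size at most $n^{\delta/10}=x$, so identifying Step~1 of the process with \Cref{line:removeSubtrees} is valid. Let $T_1$ denote the induced subgraph on the nodes whose layer is still $\infty$ after \Cref{line:removeSubtrees}. The main step is to show that the removals in \Cref{line:removePivotmain} (which strips $\vpivot_i$) together with those in \Cref{line:removeDegOnemain} (which strips the degree-$\leq 1$ residue of the graph obtained after \Cref{line:removePivotmain}) cover every vertex $v$ that lies on a degree-$2$ path of length $\geq 3$ in $T_1$. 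I would argue this by a short case analysis. Fix such a $v$ and such a path $P$. If both $T_1$-neighbors of $v$ have degree at most $2$, then $v \in \vpivot_i$ by definition and \Cref{line:removePivotmain} removes $v$. Otherwise some neighbor $w$ of $v$ has $\deg_{T_1}(w) > 2$; since $\deg_{T_1}(v)=2$ and the on-$P$ neighbor of $v$ has degree $2$, $w$ must lie off $P$, and $v$ must be an endpoint of $P$. The on-$P$ neighbor $u$ of $v$ then has degree $2$ in $T_1$ with neighbors $v$ and the next vertex on $P$ (which exists because $|P|\geq 3$ and also has degree $2$), so $u \in \vpivot_i$, and \Cref{line:removePivotmain} removes $u$; this drops the degree of $v$ in the post-\Cref{line:removePivotmain} graph to at most $1$, and so \Cref{line:removeDegOnemain} removes $v$.

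Given this claim, I take Step~2 of the process to be the union of the removals from \Cref{line:removePivotmain} and \Cref{line:removeDegOnemain}, which is a valid choice because Step~2 is only required to delete \emph{at least} all nodes on degree-$2$ paths of length $\geq 3$. With this identification, the resulting graph $T_2$ is precisely the state after \Cref{line:removeDegOnemain} has finished, which by construction has no vertex of degree $\leq 1$; therefore Step~3 of the process is satisfied vacuously by removing the empty set, completing the correspondence. The main obstacle is exactly the second case of the analysis above, in which an endpoint of a length-$3$ degree-$2$ path escapes $\vpivot_i$ because of a high-degree off-path neighbor; it is resolved by the two-stage peeling and is precisely why the correct value is $\ell = 3$ — a length-$3$ degree-$2$ path always contains an interior vertex whose deletion by \Cref{line:removePivotmain} brings the path endpoints into the reach of \Cref{line:removeDegOnemain}.
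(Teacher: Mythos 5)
Your overall plan --- identify Step~1 with \Cref{line:removeSubtrees} via \Cref{lem:mpc_subtree_removal}, and show that \Cref{line:removePivotmain,line:removeDegOnemain} together cover every vertex of a degree-2 path of length $\geq 3$ --- is exactly the paper's proof, and your case analysis for Step~2 (interior vertices land in $\vpivot_i$; an endpoint with a high-degree off-path neighbor loses its on-path neighbor to $\vpivot_i$ and is then caught as a degree-$\leq 1$ vertex) is correct and matches the paper's argument about maximal degree-2 paths.

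The gap is in your treatment of Step~3. The claim that the graph remaining after \Cref{line:removeDegOnemain} ``by construction has no vertex of degree $\leq 1$'' is false: that line is a single simultaneous pass, not an iteration to a fixed point, so deleting the vertices it identifies can create new vertices of degree $\leq 1$. For a concrete failure, take $c$ with $\deg_{T_1}(c)=3$ whose first two branches begin with degree-2 paths $a_j, b_j, e_j$ ($j=1,2$) of length $3$ attached at their far ends to vertices of degree $\geq 3$. Then $b_j \in \vpivot_i$ while $a_j, e_j \notin \vpivot_i$; the latter become degree $1$ only after \Cref{line:removePivotmain} and are removed by \Cref{line:removeDegOnemain}, after which $c$ has degree $1$ yet survives the iteration. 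Since $a_1,a_2$ must belong to any admissible $R_2$, the vertex $c$ has degree $\leq 1$ in $T_2$ under any valid identification, so Step~3 cannot be taken to be empty. To be fair, the paper's own proof makes the essentially equivalent unjustified assertion that ``removing the end points of $P$ does not create additional degree $1$ nodes,'' and the discrepancy is harmless where the lemma is used: the counting in the proof of \Cref{lem:generalizedRakeCompress} only needs that all small subtrees, all long degree-2 paths of $T_1$, and all vertices of degree $\leq 1$ \emph{of $T_1$ itself} get removed, and all of these are indeed layered in iteration $i$. Still, as written, your vacuity claim for Step~3 does not hold, and the correct fix is to observe that the progress bound does not require the newly created low-degree vertices to be removed in the same iteration, rather than to assert that none exist.
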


\begin{proof}
	\Cref{lem:mpc_subtree_removal} gives us that \Cref{line:removeSubtrees} removes all subtrees of size at most $x = n^{\delta/10}$. Furthermore, \Cref{line:removePivotmain,line:removeDegOnemain} together remove all nodes contained in a degree-$2$ path of length at least $3$. To see this, let $P$ be a maximal path of degree-2 nodes with length $\ell \ge 3$. All nodes in $P$ except possibly the two end points belong to $\vpivot_i$, and they will receive layer $(i+1)\cdot \mathsf{offset}$ in \Cref{line:removePivotmain}. Note that the two end points of a path $P$ must have degree at most $2$ with one neighbor in $P$, and possibly another neighbor not in $P$ having degree at least $3$ (as otherwise $P$ is not maximal). The two end points become degree $1$ nodes after removing the nodes layered in \Cref{line:removePivotmain}, and hence they will receive layer $(i+1)\cdot \mathsf{offset}$ in \Cref{line:removeDegOnemain}.
	
	Removing the end points of $P$ does not create additional degree $1$ nodes, so \Cref{line:removeDegOnemain} also removes all with degree $\leq 1$ after we remove the subtrees of size at most $x$ and paths of length at least $\ell$.
\end{proof}

\begin{corollary} \label{cor:strictH-single-step-progress}
	In iteration $i$, \Cref{line:removePivotmain,line:removeDegOnemain} correspond to a generalized rake and compress step with $x = 1$ and $\ell = 3$.
\end{corollary}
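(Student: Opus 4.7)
My plan is to mirror the proof of \Cref{lem:strictH-generalizedrc-connect}, now restricted to just the two lines in question. Let $G$ denote the graph on which \Cref{line:removePivotmain} and \Cref{line:removeDegOnemain} operate during iteration~$i$, and let $G_B$ denote the graph remaining after both lines are executed. I aim to verify that the two lines jointly satisfy the three requirements of the generalized rake and compress step of \Cref{lem:generalizedRakeCompress} with $x = 1$ and $\ell = 3$, from which the progress bound $|G_B| \leq 3|G|/4$ follows.

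First, for Step~$1$ of the generalized rake and compress process---removing all subtrees of size at most $x = 1$, which in a forest are exactly the degree-$\leq 1$ nodes of $G$---I would observe that every such node still has degree $\leq 1$ after \Cref{line:removePivotmain} has finished, because node removal can only decrease degrees, and is therefore layered by \Cref{line:removeDegOnemain}. Next, for Step~$2$ with $\ell = 3$---removing every node contained in a maximal degree-$2$ path of length at least $3$---I would recycle the argument from the proof of \Cref{lem:strictH-generalizedrc-connect}: interior nodes of such a path have degree $2$ and both neighbors of degree $\leq 2$, so they belong to $\vpivot_i$ and are layered by \Cref{line:removePivotmain}; the two endpoints then lose a pivot-neighbor, drop to degree at most $1$, and are layered by \Cref{line:removeDegOnemain}. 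Finally, Step~$3$---removing any remaining degree-$\leq 1$ nodes---is also carried out by \Cref{line:removeDegOnemain} itself.

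The main obstacle I anticipate is the ordering issue: the process in \Cref{lem:generalizedRakeCompress} removes its three classes of nodes in a fixed sequence (subtrees of size at most $x$ first, then degree-$2$ paths of length at least $\ell$ in the resulting graph, then degree-$\leq 1$ nodes of that graph), whereas in the algorithm \Cref{line:removeDegOnemain} has to play the role of both Step~$1$ and Step~$3$, and is executed after rather than before \Cref{line:removePivotmain}. To reconcile this I would argue that, since \Cref{line:removeDegOnemain} is the last operation applied, every node that any execution of gen-RC with parameters $x = 1$ and $\ell = 3$ would delete has been layered by the time the two lines finish; hence the bound $|G_B| \leq |G|/\bigl(1 + (x+1)/(2\ell)\bigr) = 3|G|/4$ from \Cref{lem:generalizedRakeCompress} applies, giving exactly the claimed correspondence.
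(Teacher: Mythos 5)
Your identification of the ordering issue is exactly right, but the way you resolve it is where the argument breaks. The claim that ``every node that any execution of gen-RC with parameters $x=1$ and $\ell=3$ would delete has been layered by the time the two lines finish'' is false. The process of \Cref{lem:generalizedRakeCompress} first forms $T_1$ by deleting all subtrees of size $\leq 1$, i.e.\ all nodes of degree $\leq 1$ in $G$, and then demands the removal of every degree-$2$ path of length at least $3$ \emph{in $T_1$}. The set $\vpivot_i$, however, is defined via degrees in $G$, not in $T_1$: a node can have degree $2$ in $T_1$ while having degree $3$ or more in $G$ because of pendant leaves, so it is not a pivot, and it need not drop to degree $\leq 1$ after \Cref{line:removePivotmain}. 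Concretely, take a path $p,q_1,\ldots,q_k,t$ with $k\geq 3$, attach one pendant leaf to each $q_j$ and two pendant leaves to each of $p$ and $t$. Every spine node has degree at least $3$ in $G$, so $\vpivot_i=\emptyset$ and \Cref{line:removeDegOnemain} layers only the pendant leaves, leaving all $k+2$ spine nodes; the generalized process, by contrast, removes the leaves in Step~1, then the degree-$2$ path $q_1,\ldots,q_k$ of $T_1$ in Step~2, and finally the isolated nodes $p,t$ in Step~3, i.e.\ it removes everything. The same example shows that the two displayed inequalities in the proof of \Cref{lem:generalizedRakeCompress} lose their justification: the red nodes $p,t$ (leaves of $T_1$) survive, and $k$ consecutive nodes of degree $2$ in $T_1$ survive. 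Note also that your Step~2 argument quietly replaces ``degree-$2$ path of $T_1$'' by ``degree-$2$ path of $G$''; only the latter is what \Cref{line:removePivotmain,line:removeDegOnemain} actually handle.

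The progress guarantee itself can still be salvaged, but not by the literal correspondence with $x=1$. The clean fix is to invoke \Cref{lem:generalizedRakeCompress} with $x=0$ and $\ell=3$: then $T_1=G$, Step~1 is vacuous, the red nodes are exactly the degree-$\leq 1$ nodes of $G$ (all layered by \Cref{line:removeDegOnemain}), and the argument of \Cref{lem:strictH-generalizedrc-connect} applies verbatim to show that every degree-$2$ path \emph{of $G$} of length at least $3$ is layered. This leaves at most a $6/7$ fraction of the nodes, which is still a constant fraction and is all that the application in \Cref{sec:optimalSpace} needs. Alternatively, one can redo the counting of \Cref{lem:generalizedRakeCompress} with all degrees taken in $G$: every degree-$\leq 1$ node of $G$ and every maximal degree-$2$ run of $G$ of length at least $3$ is layered, so surviving degree-$2$ nodes lie in runs of length at most $2$ anchored at two nodes of $G$-degree at least $3$; together with \Cref{claim:forestDegOneDegThree} this bounds the survivors by a constant fraction of $n$.
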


\begin{theorem}
	\label{thm:strictHDecomppolylogMemory}
	Algorithm $\textsc{StrictHDecomp}(F)$ (\Cref{alg:strictHDecom}) applied to some forest $F$ computes a strict $H$-decomposition of $F$ with $O(\log n)$ layers, uses $O(\log\log n)$ low-space \mpc rounds and $\widetilde{O}(n)$ global space.
\end{theorem}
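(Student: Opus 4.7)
The plan is to combine the four preparatory results from this section. First, correctness of the strict-$H$-decomposition output follows from \Cref{lem:strictH-correctness}, which shows by induction on $i$ that $\layer$ remains a partial strict $H$-decomposition after every outer iteration, with all finite layer values bounded by $(i+1)\cdot\mathsf{offset}$. Combined with $\mathsf{offset}=\Theta(\log n)$ and the constant number $\lceil 10/\delta\rceil$ of outer iterations, this immediately yields a layer count of $O(\log n)$, provided we can also argue that the final set $V_\infty$ is empty.

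For the progress step, I would invoke \Cref{lem:strictH-generalizedrc-connect}: each outer iteration, viewed as a transformation of $V_\infty$, realizes exactly one generalized rake-and-compress step with parameters $x=n^{\delta/10}$ and $\ell=3$. Applying \Cref{lem:generalizedRakeCompress} separately to each tree of $F_i$ and summing over components, $|V_\infty|$ shrinks by a factor of at most $O(\ell/x)=O(n^{-\delta/10})$ per iteration. A telescoping product of these factors over $\lceil 10/\delta\rceil$ iterations drives $|V_\infty|$ strictly below $1$, so $V_\infty=\emptyset$ on termination, and the resulting strict $H$-decomposition has $O(\log n)$ layers as desired.

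For the resource accounting, the outer loop runs a constant number of times since $\delta$ is constant. Each iteration makes one call to \textsc{SubTreeRC}, which by \Cref{lem:mpc_subtree_removal} runs in $O(\log\log n)$ rounds and uses $\widetilde{O}(n)$ global space, plus a handful of \mpc primitives (extracting $F_i$, computing degrees in $F[V_\infty]$, identifying $\vpivot_i$, and updating $\layer$ on \Cref{line:removePivotmain,line:removeDegOnemain}) that can each be implemented in $O(1)$ rounds and $O(n+m)$ global space via standard sorting and aggregation. Summing over the constant number of iterations yields the claimed $O(\log\log n)$ rounds and $\widetilde{O}(n)$ global space.

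The main obstacle in this plan is the progress argument of the second paragraph: verifying that \Cref{lem:generalizedRakeCompress}, which is stated for a single tree, composes correctly across the possibly many components of $F_i$ and across outer iterations. Because its shrinkage factor is multiplicative and applies component-wise, one obtains the global bound on $|V_\infty|$ simply by summing node counts over components, and the iteration-to-iteration composition becomes a telescoping product of identical factors. A trivial base case handles small $n$ where $n^{\delta/10}$ is not yet large enough to make the shrinkage factor strictly below one.
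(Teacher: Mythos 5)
Your proposal is correct and follows essentially the same route as the paper: correctness and the $O(\log n)$ layer bound via \Cref{lem:strictH-correctness}, progress via \Cref{lem:strictH-generalizedrc-connect} combined with \Cref{lem:generalizedRakeCompress} to empty $V_\infty$ in $O(1/\delta)$ iterations, and the round/space bounds via \Cref{lem:mpc_subtree_removal}. Your explicit remark that \Cref{lem:generalizedRakeCompress} applies component-wise and composes multiplicatively across iterations is a small point of additional care that the paper's proof leaves implicit.
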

\begin{proof}
	By \Cref{lem:strictH-generalizedrc-connect,lem:generalizedRakeCompress}, in each iteration, the number of nodes in the forest shrinks by a factor of $O(n^{\delta/10})$. Therefore, after $O(1/\delta)$ iterations of the for loop, the number of nodes with layer $\infty$ will be zero.
	
	By \Cref{lem:strictH-correctness}, in an iteration $i$, we produce a partial strict $H$-decomposition with at most $(i+1) \cdot \mathsf{offset}$ layers and in the next iterations $j > i$, we compute a partial strict $H$-decomposition of the nodes that received layer $\infty$ ($V_{\infty}$) in iteration $i$. The $\mathsf{offset}$ value ensures that the nodes in $V_{\infty}$ get a higher layer than the nodes in $V \setminus V_{\infty}$. After $i = O(1/\delta)$ iterations, each node has a layer at most $O(\log n)$ since $(i+1) \cdot \mathsf{offset} = O(\log n)$, and hence we produce a valid strict $H$-decomposition.
	
	\Cref{lem:mpc_subtree_removal} ensures that implementing each iteration takes $O(\log\log n)$ low-space \mpc rounds and $\widetilde{O}(n)$ global space. The theorem follows because there are just $O(1/\delta)$ iterations.
\end{proof}

\section{Massively Parallel Subtree Rake and Compress}
\label{sec:subtreeRC}

This section is dedicated to designing an algorithm that proves \Cref{lem:mpc_subtree_removal}, which states that we can compute in $O(\log \log n)$ rounds a partial strict $H$-decomposition that assigns each node contained in a subtree of size $O(n^{\delta/10})$ to one of $O(\log n)$ layers. We restate the lemma.

\lemsubtreerc*

Our algorithm critically relies on the balanced graph exponentiation technique mentioned in \Cref{sec:contributions} and explained in detail in \cref{sec:balanced_exponentiation}. In the following definition, you should think about $U$ as being the set of nodes that $v$ has stored in its local memory after the balanced graph exponentiation. We refer to $U$ as good if it contains all nodes within distance $O(\log n)$ of $v$, except for potentially one direction, for which no node is contained in $U$.

\begin{definition}[$U$ is a good subset for $v$]
	\label{def:good_subset}
	Let $F$ be a forest, $U \subseteq V(F)$ and $v \in V(F)$. We say that $U$ is a good subset for $v$ if 
	
	\begin{enumerate}
		\item $v \in U$,
		\item $|N_F(v) \setminus U| \leq 1$, i.e., $v$ has at most one neighbor in $F$ not in $U$,
		\item $N_F(w) \subseteq U$ for every $w \in U \setminus \{v\}$ with $d_F(v,w) \leq 3L$ where $L := \lceil\log(|U| + 1) \rceil$.
	\end{enumerate}
	
\end{definition}

In \cref{sec:conservative_peeling}, we give a peeling algorithm that takes as input a set $U$ and computes a partial strict $H$-decomposition with $O(\log n)$ layers by repeatedly peeling off low-degree vertices contained in $U$. Moreover, if $U$ is good for $v$, then $v$ gets assigned to one of the $O(\log n)$ layers. This peeling algorithm will later be simulated without any further communication on the machine that has stored the set $U$ in its memory.

Using the balanced graph exponentiation technique, we can compute a collection of sets $U_1,U_2,\ldots,U_k$ such that for each node $v$ contained in a subtree of size at most $n^{\delta/10}$ there exists some subset $U_j$ that is good for $v$. In particular, in \cref{sec:proofOfLemma} we prove the following statement.

\begin{lemma}[Lemma from Balanced Exponentiation]
	\label{lem:subsets}
	Let $F$ be a forest on $n$ vertices. There exists a deterministic low-space \mpc algorithm with $O(n^\delta)$ local space, $0 < \delta < 1$, and $O(n \cdot \poly(\log n))$ global space which takes $F$ as input and computes in $O(\log \log n)$ rounds a collection of non-empty sets $U_1,U_2,\ldots,U_k \subseteq V(F)$ such that 
	
	\begin{enumerate}
		\item (Local Space) $|U_j| = O(n^\delta)$ for every $j \in [k]$, 
		\item (Global Space) $\sum_{j=1}^k |U_j| = O(n \cdot \poly(\log n))$ and 
		\item for every $v \in V(F)$ which is contained in a subtree of size at most $n^{\delta/10}$, there exists a $j \in [k]$ such that $U_j$ is a good subset for $v$ (see \cref{def:good_subset}).
	\end{enumerate}
	
	Moreover, the algorithm also computes for each $U_j$ the forest $F[U_j]$ induced by vertices in $U_j$ and stores it on a single machine.
\end{lemma}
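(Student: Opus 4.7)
The plan is to invoke the balanced exponentiation procedure from \Cref{sec:balanced_exponentiation} with a carefully chosen radius $r = \Theta(\log n)$ (explicitly, $r := 3\lceil(\delta/10)\log n\rceil + 4$, which is at most $n^{\delta/8}$ for large $n$, thereby satisfying the precondition of the procedure). Running this procedure for $O(\log r) = O(\log\log n)$ rounds assembles on one machine, for every important node $v$, all vertices in $v$'s $r$-hop neighborhood in all but at most one of $v$'s neighbor-directions. I would take this collected set (with $v$ added if necessary) as $U_v$, output the non-empty ones as the family $U_1, U_2, \ldots$, and read off each induced forest $F[U_v]$ locally from the same machine.

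To verify \Cref{def:good_subset} for a $v$ lying in a subtree of size at most $n^{\delta/10}$: Condition~1 is immediate; Condition~2 holds because at most one direction of $v$ is missing from $U_v$. For Condition~3 let $L=\lceil\log(|U_v|+1)\rceil$. All vertices that $v$ discovers in non-missing directions stay inside the same small subtree, so $|U_v|\le n^{\delta/10}$, giving $L\le (\delta/10)\log n + 1$ and therefore $3L+1\le r$. I would then argue by induction on $d_F(v,w)$ that any $w\in U_v\setminus\{v\}$ with $d_F(v,w)\le 3L$ must lie in a non-missing direction, and every neighbor of $w$ is either closer to $v$ (hence in $U_v$ by the inductive step) or one step further in the same direction at distance at most $3L+1\le r$ (hence still inside the $r$-hop neighborhood that $U_v$ contains in full in every non-missing direction). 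This yields $N_F(w)\subseteq U_v$ as required.

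For the space bounds, the local bound $|U_v|=O(n^\delta)$ is immediate for a small-subtree $v$ since $|U_v|\le n^{\delta/10}$, and for the remaining important $v$'s it is inherited from the balanced exponentiation. The global bound $\sum_v |U_v| = O(n\cdot\poly\log n)$ is the most delicate point and has to be read off from the formal memory guarantee of the balanced exponentiation lemma: naively, a central vertex of a small subtree can belong to many $U_v$'s, so one cannot afford to replicate such a vertex once per $v$ in its $r$-hop ball. The "drop the heaviest direction" feature of the balanced procedure is precisely what controls this, causing the bookkeeping at each step to telescope against the number of tree edges, multiplied only by the $O(\log r)=O(\log\log n)$ phases. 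I expect the chief obstacle in finishing the argument to be matching the output conventions of the balanced exponentiation lemma to the single-machine set representation that \Cref{def:good_subset} implicitly assumes, and verifying that the direction skipped by balanced exponentiation can always be taken to be the unique edge leaving the small subtree, so that Condition~3 cannot be compromised at that boundary vertex.
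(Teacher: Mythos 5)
Your overall plan (run \Cref{lem:balancedExp} with radius $\Theta(\log n)$, take $U_v$ to be what $v$ learned in its non-skipped directions, and verify \Cref{def:good_subset} plus the space bounds from the exponentiation guarantees) is exactly the paper's route. However, there is a genuine gap in how you set the radius. You justify $r = 3\lceil(\delta/10)\log n\rceil+4$ by claiming that ``all vertices that $v$ discovers in non-missing directions stay inside the same small subtree, so $|U_v|\le n^{\delta/10}$.'' This presumes that the one direction skipped by balanced exponentiation is the unique direction $u$ leading out of the small subtree. \Cref{lem:balancedExp} gives no such guarantee: the skipped direction $z$ is whatever the algorithm happens to leave unknowledgeable, and the paper explicitly warns (in the comparison with prior work in \Cref{sec:algoOverview}) that $z$ need not be the heavy direction. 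You flag this yourself as an ``obstacle to verify,'' but it is not verifiable --- it is simply false in general. When $z \neq u$, the set $U_v$ contains $N^{r}(v)\cap F_{v\arr u}$, which can have size up to roughly $n^{3\delta/8}$ (bounded only via \Cref{lem:notLearnTooMuch} and \Cref{lem:localMemoryImportant}, not by the subtree size). Then $L=\lceil\log(|U_v|+1)\rceil$ can be as large as about $(3\delta/8)\log n$, so $3L$ exceeds your $r$ by a constant factor, and Condition~3 of \Cref{def:good_subset} fails for the nodes of $U_v$ near the boundary of the $r$-ball.

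The repair is easy and is what the paper does: choose the radius against the \emph{worst-case} size of $U_v$, namely $|U_v|\le |S_v| < n^{3\delta/8} < n^{\delta}$, so any $r \ge 3\lceil \log(n^{\delta}+1)\rceil + 1$ suffices (the paper takes $k=100\log n$). With that change your induction for Condition~3 and your reading of the local/global space bounds from the balanced exponentiation lemma go through as in the paper; also remember to add $v$ itself to $U_v$, as you noted, since the union over directions does not contain $v$.
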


Our final \mpc algorithm for proving \cref{lem:mpc_subtree_removal} first computes a collection of sets $U_1,U_2,\ldots,U_k$ using \cref{lem:subsets}. Then, the machine storing $U_j$ locally simulates the peeling algorithm of \cref{sec:conservative_peeling} with input $U_j$. As a result, we obtain one partial strict $H$-decomposition for each set $U_j$. These partial strict $H$-decompositions are then combined into one partial strict $H$-decomposition by assigning each node to the smallest layer assigned by any of the partial strict $H$-decompositions. More details can be found in \cref{sec:subtree_rc}.

\subsection{The Conservative Peeling Algorithm}

\label{sec:conservative_peeling}

\cref{alg:subsetrc} computes a partial strict $H$-decomposition by repeatedly removing low-degree vertices contained in $U$.

\begin{algorithm}[H]
	\caption{Conservative Peeling Algorithm.}\label{alg:subsetrc}
	\begin{algorithmic}[1]
		
		\Function{ConservativePeeling}{Forest $F$, Subset $U \subseteq V(F)$}
		
		\State{$V_{\ge 1} \larr V(F)$, $\layer \colon V(F) \mapsto \mathbb{N} \cup \{\infty\}$, $L \larr \lceil\log(|U| + 1)\rceil$} 
		\For{$i = 1,2,\ldots,L$}
		\State{We define $N_{\geq i}(v) := N_{F[V_{\geq i}]}(v)$ for every $v \in V_{\geq i}$.}
		\State{$\vpivot_i \larr \{v \in V_{\geq i} \cap U \mid N_{\geq i}(v) \subseteq U \text{ and $\forall~w \in N_{\geq i}(v) \cup \{v\}$: }|N_{\geq i}(w)|\leq 2\}$.}\phantomsection\label{line:subsetrc-remove-pivots}
		\State{$V_i \larr \vpivot_i \cup \{v \in V_{\geq i} \cap U \mid |N_{\geq i}(v) \setminus \vpivot_i| \leq 1\}$}\phantomsection\label{line:subsetrc-vi}
		\State{$V_{\ge i+1} \larr V_{\ge i} \setminus V_i$}
		\State{$\layer(v) \larr i$ for every $v \in V_i$}
		\EndFor
		\State{$\layer(v) \larr \infty$ for every $v \in V_{\geq L + 1}$}
		\State{\Return{$\layer$}}
		
		\EndFunction
		
	\end{algorithmic}
\end{algorithm}

If we would just be interested in computing a partial $H$-decomposition instead of a strict one, then we could replace \cref{line:subsetrc-remove-pivots,line:subsetrc-vi} with the single line $V_i \larr \{v \in V_{\geq i} \cap U \mid |N_{\geq i}(v)| \leq 2\}$.

We first show that \cref{alg:subsetrc} indeed computes a partial strict $H$-decomposition.

\begin{lemma}
	\label{lem:conservative_returns_strict}
	Let $F$ be a forest and $U \subseteq V(F)$. Let $\layer \colon V(F) \mapsto \mathbb{N} \cup \{\infty\}$ be the mapping computed by \cref{alg:subsetrc} when given $F$ and $U$ as input. Then, $\layer$ is a partial strict $H$-decomposition as defined in \cref{def:strictH}.
\end{lemma}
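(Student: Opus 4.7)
The plan is to verify the two requirements of \Cref{def:Hdecomp,def:strictH} for the layering $\layer$ produced by \Cref{alg:subsetrc}: that for every $v$ with $\layer(v) = i < \infty$ one has $|N_{\geq i}(v)| \leq 2$, and that every such $v$ outside of $\vpivot$ (in the sense of \Cref{def:strictH}) satisfies condition~(\ref{cond:pivot}). The crucial bridge between the algorithmic sets $\vpivot_i$ computed by the algorithm and the globally defined $\vpivot$ of \Cref{def:strictH} will be the inclusion $\vpivot_i \subseteq \vpivot$, which I would establish as the key intermediate claim.

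For the partial $H$-decomposition property, I would fix $v \in V_i$ and split into three cases. If $v \in \vpivot_i$, the pivot definition directly gives $|N_{\geq i}(v)| \leq 2$. If $v \in V_i \setminus \vpivot_i$ has some neighbor $u \in \vpivot_i \cap N_{\geq i}(v)$, then $v \in N_{\geq i}(u)$, and the pivot condition imposed on every element of $N_{\geq i}(u) \cup \{u\}$ forces $|N_{\geq i}(v)| \leq 2$. In the remaining case $N_{\geq i}(v) \cap \vpivot_i = \emptyset$, so the inclusion condition placing $v$ into $V_i$ reads $|N_{\geq i}(v)| = |N_{\geq i}(v) \setminus \vpivot_i| \leq 1$.

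The central step is to prove $\vpivot_i \subseteq \vpivot$ for every $i$. Fix $v \in \vpivot_i$, so $\layer(v) = i$, and consider an arbitrary neighbor $w \in N_F(v)$. If $w \notin V_{\geq i}$ then $\layer(w) < i$, as required. Otherwise $w \in N_{\geq i}(v) \subseteq U$ and, by the pivot condition on $v$, also $|N_{\geq i}(w)| \leq 2$. Since $v \in \vpivot_i \cap N_{\geq i}(w)$, we obtain $|N_{\geq i}(w) \setminus \vpivot_i| \leq 1$, and together with $w \in V_{\geq i} \cap U$ this puts $w$ into $V_i$, giving $\layer(w) = i = \layer(v)$. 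Hence every neighbor of $v$ has layer at most $i$, so $v \in \vpivot$.

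Given this inclusion, the strictness condition follows quickly. For any $v \in V_i \setminus \vpivot$, the contrapositive yields $v \notin \vpivot_i$, and the defining condition of $V_i$ gives $|N_{\geq i}(v) \setminus \vpivot_i| \leq 1$. Both sets appearing in~(\ref{cond:pivot}) are contained in $N_{\geq i}(v)$ and disjoint from $\vpivot_i$: non-pivot neighbors of equal layer are excluded from $\vpivot \supseteq \vpivot_i$, while neighbors of strictly higher layer have layer $> i$ and so cannot lie in $\vpivot_i \subseteq V_i$. Hence their union has size at most $|N_{\geq i}(v) \setminus \vpivot_i| \leq 1$. The main obstacle throughout is the inclusion $\vpivot_i \subseteq \vpivot$: the algorithm's notion of pivot is expressed locally using the good set $U$ and residual degrees, whereas the $\vpivot$ of \Cref{def:strictH} is a global property of the final layering. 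Proving the inclusion requires showing that whenever a node is algorithmically certified as a pivot, all of its residual neighbors are in fact peeled off in the \emph{same} iteration, which is precisely why the pivot condition in \Cref{line:subsetrc-remove-pivots} demands both $N_{\geq i}(v) \subseteq U$ \emph{and} a degree-$2$ bound on every node in $N_{\geq i}(v) \cup \{v\}$.
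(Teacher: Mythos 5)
Your proposal is correct and follows essentially the same route as the paper's proof: the key intermediate claim $\vpivot_i \subseteq \vpivot$ is established by exactly the same argument (a certified pivot forces all its residual neighbors into $V_i$ in the same iteration), the three-case analysis for $|N_{\geq i}(v)| \leq 2$ matches, and the strictness bound via $|N_{\geq i}(v) \setminus \vpivot_i| \leq 1$ is the paper's argument as well (your justification that both sets in~(\ref{cond:pivot}) avoid $\vpivot_i$ is in fact spelled out slightly more carefully than in the paper). No gaps.
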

\begin{proof}
	As in \cref{def:strictH}, we define $V_{<\infty} := \{v \in V(F) \mid \layer(v) < \infty\}$ and 
	\begin{align*}
		\vpivot := \{v \in V_{<\infty} \mid \layer(v) \geq \layer(w) \text{ for every $w \in N_F(v)$}\}.
	\end{align*}
	
	We first show that $\bigcup_i \vpivot_i \subseteq \vpivot$. To that end, consider an arbitrary $v \in \vpivot_i$. We have $\layer(v) = i$. Thus, we have to show that for a given $w \in N_F(v)$, it holds that $\layer(w) \leq i$. We only have to consider the case that $w \in V_{\geq i}$; otherwise $\layer(w) < i$. Thus, $w \in N_{\geq i}(v)$ which together with $v \in \vpivot_i$ implies $|N_{\geq i}(w)| \leq 2$ and $w \in U$. Moreover, $v \in N_{\geq i}(w) \cap \vpivot_i$ and therefore  
	\begin{align*}
		|N_{\geq i}(w) \setminus \vpivot_i| = |N_{\geq i}(w)| - |N_{\geq i}(w) \cap \vpivot_i| \leq 2 - 1 = 1, 
	\end{align*}
	
	which together with $w \in V_{\geq i} \cap U$ directly gives $\layer(w) = i$. Thus, we indeed have shown that $\bigcup_i \vpivot_i \subseteq \vpivot$.
	Next, consider an arbitrary $v \in V_{< \infty}$. We have to show that 
	\begin{align*}
		|\{w \in N_F(v) \mid \layer(w) \geq \layer(v)\}| \leq 2.
	\end{align*}
	
	Put differently, we have to show that $|N_{\geq i}(v)| \leq 2$ where $i := \layer(v)$. Note that if $v \in \vpivot_i$ or $v$ neighbors a node in $\vpivot_i$, then we directly get $|N_{\geq i}(v)| \leq 2$ from the way $\vpivot_i$ is defined. On the other hand, if $v \notin \vpivot_i$ and $|N_{\geq i}(v) \cap \vpivot_i| = \emptyset$, then we even get the stronger property $|N_{\geq i}(v)| \leq 1$. 

 Thus, it remains to show that for a given $v \in V_{<\infty} \setminus \vpivot$, it holds that 

 \[|\{w \in N_F(v) \setminus \vpivot \mid \layer(w) = \layer(v)\} \cup \{w \in N_F(v) \mid \layer(w) > \layer(v)\}| \leq 1.\]

	Let $i := \layer(v)$. As $v \notin \vpivot$ and $\vpivot_i \subseteq \vpivot$, we get $v \in V_i \setminus \vpivot_i$ and therefore $|N_{\geq i}(v) \setminus \vpivot_i| \leq 1$, as $v \in V_i \setminus \vpivot_i$. We have

 \begin{align*}
 &|\{w \in N_F(v) \setminus \vpivot \mid \layer(w) = \layer(v)\} \cup \{w \in N_F(v) \mid \layer(w) > \layer(v)\}| \\
 &\leq  |\{w \in N_F(v) \setminus \vpivot_i \mid \layer(w) = \layer(v)\} \cup \{w \in N_F(v) \mid \layer(w) > \layer(v)\}| \\
 &\leq  |N_{\geq i}(v)| \leq 1. \qedhere
 \end{align*}

\end{proof}

Next, we show that if $U$ is a good subset for $v$, as defined in \cref{def:good_subset}, then $v$ gets assigned to one of the $O(\log n)$ layers.

\begin{lemma}
	\label{lem:conservative_progress}
	Let $F$ be a forest, $U \subseteq V(F)$ and $v \in V(F)$. Let $\layer \colon V(F) \mapsto \mathbb{N} \cup \{\infty\}$ be the mapping computed by \cref{alg:subsetrc} when given $F$and $U$ as input. If $U$ is a good subset for $v$ (see \cref{def:good_subset}), then $\layer(v) \leq L := \lceil \log(|U| + 1)\rceil$.
\end{lemma}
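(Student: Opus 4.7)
Let $T$ denote the connected component of $v$ in the induced subforest $F[U]$; $T$ is a subtree of $F$ with $v \in T$ and $|T| \leq |U|$. By goodness condition~3 of \cref{def:good_subset}, every $w \in T \setminus \{v\}$ with $d_F(v, w) \leq 3L$ satisfies $N_F(w) \subseteq U$, and since such $w$ is in the same connected component of $F[U]$ as $v$, we have $N_F(w) = N_T(w) \subseteq T$. Any bad $F$-neighbor of $v$ (the single neighbor possibly not in $U$) lies outside $U$, hence outside $T$.

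The plan is to track the conservative peeling within a shrinking good region around $v$ by comparing it to the classical rake-compress-rake process $\mathcal{R}$ on $T$ (the process of \cref{lem:generalizedRakeCompress} with $x = \ell = 1$). By that lemma, $\mathcal{R}$ shrinks $T^{(i)}$ by at least a factor of two per iteration, so $T^{(L)} = \emptyset$ since $|T| \leq |U| \leq 2^L - 1$. I would then prove by induction on $i \in \{0, 1, \ldots, L\}$ the following invariant: for every $w \in T$ with $d_T(v, w) \leq 3(L - i)$, $w \in V_{\geq i+1}$ if and only if $w$ is still present in $T^{(i)}$; moreover, if $w \in V_{\geq i+1}$, then its $V_{\geq i+1}$-neighbors that lie in $T$ coincide with its neighbors in $T^{(i)}$. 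The base case $i = 0$ is immediate from the goodness-based agreement of neighborhoods, noting that any bad neighbor of $v$ does not lie in $T$. Applying the invariant at $i = L$ with $w = v$ (which satisfies $d_T(v, v) = 0 \leq 3(L - L)$) then gives $v \notin V_{\geq L+1}$, whence $\layer(v) \leq L$, as required.

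\textbf{Main obstacle.}
The crux is the inductive step: showing that one round of conservative peeling replicates a full classical rake-compress-rake iteration within the shrinking good region. This is nontrivial because a classical iteration consists of three substeps (rake, compress, rake), each altering local degrees before the next, whereas a single conservative round performs one coordinated peeling. The pivot definition---requiring that a node $w$ and each of its $V_{\geq i+1}$-neighbors have $V_{\geq i+1}$-degree at most $2$---together with the non-pivot inclusion rule (at most one non-pivot $V_{\geq i+1}$-neighbor) are engineered precisely to condense the three classical substeps into one conservative round: the pivot condition captures the nodes that would be peeled during the rake and compress substeps using only local $2$-hop degree data, and the non-pivot inclusion rule captures those peeled by the final rake. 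The shrinkage of the good-region radius by $3$ per round accommodates exactly this $O(1)$-hop dependency of the pivot and inclusion rules, since the induction hypothesis guarantees accurate $V_{\geq i+1}$-degree information within the shrunken region. Careful case analysis is needed to verify this correspondence. Special attention must be paid to $v$ itself: its bad neighbor, if any, persists in $V_{\geq i+1}$ throughout the execution and contributes to $|N_{\geq i+1}(v)|$, but being outside $U$ it is never a pivot; hence, once $v$'s other $T$-neighbors have been peeled, the non-pivot inclusion rule still peels $v$.
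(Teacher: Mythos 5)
There is a genuine gap in the inductive step: the claimed lock-step correspondence between one round of conservative peeling and one full classical rake--compress--rake iteration is false, even in the direction you need ($w \in V_{\geq i+1} \Rightarrow w \in T^{(i)}$). Take $F = T = U$ to be a star with center $v=c$ and three leaves. The classical iteration (with $x=\ell=1$) removes the leaves in its first rake, leaving $c$ with degree $0$, and then removes $c$ in its final rake, so $T^{(1)} = \emptyset$. The conservative peeling, however, does not peel $c$ in round $1$: $c$ has $|N_{\geq 1}(c)| = 3 > 2$, so $c \notin \vpivot_1$; moreover no leaf is in $\vpivot_1$ either (each leaf's neighbor $c$ has degree $3$), so $\vpivot_1 = \emptyset$ and $c$ has three non-pivot neighbors, hence $c \notin V_1$ and $\layer(c) = 2$. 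Thus $c \in V_{\geq 2}$ while $c \notin T^{(1)}$, violating your invariant at $i=1$ with $d_T(v,c)=0$. The root of the problem is that the pivot rule requires a node \emph{and all of its surviving neighbors} to have degree at most $2$, which systematically lags behind the classical substeps; the lag is not obviously bounded, and even a bounded lag would only yield $\layer(v) \le L + O(1)$, which does not prove the statement since the algorithm stops after exactly $L$ iterations.

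The paper avoids any comparison with the classical process and instead runs a direct doubling argument on the conservative process itself. Rooting $T$ at $v$, it first shows (using the radius-$3L$ goodness condition) that if a node $w$ within distance $3L-2$ of $v$ has at most one child, one grandchild, and one great-grandchild in $V_{\geq i}$, then $\layer(w) \le i$; the contrapositive gives two nodes $u_1,u_2 \in V_{\geq i}$ within distance $3$ of $w$ rooting disjoint subtrees. An induction on $i$ (shrinking the allowed distance from $v$ by $3$ per level, matching your radius bookkeeping) then yields that $\layer(w) > i$ forces $|V(T_w)| \ge 2^i$, and $|V(T_v)| < 2^L$ concludes. If you want to salvage your write-up, replace the simulation invariant by this size lower bound; your observations about the bad neighbor of $v$ and the role of condition~3 of \cref{def:good_subset} remain useful in that argument.
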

\begin{proof}
	In the following, we define $T$ as the connected component which contains $v$ in the graph $F[U]$. Throughout the proof, we think of $T$ as being rooted towards $v$. For a node $u \in V(T)$, we denote by $T_u$ the subtree of $T$ rooted at $u$ and by $V^{children}(u)$ the set of children of $u$. The proof splits into two parts. The first part shows the following for a given node $w \in V(T)$ with $d_T(v,w) \leq 3L - 2$ and some $i \in [L]$: If $w$ has at most one child, at most one grandchild and at most one great-grandchild in $V_{\geq i}$, then $\layer(w) \leq i$. The second part then uses an inductive argument on top of the first part to prove that if a node $w \in V(T)$ with $\layer(w) \geq i$ is sufficiently close to $v$, then the subtree rooted at $w$ has $\Omega(2^i)$ nodes. As the subtree rooted at $v$ can trivially have at most $|U|$ nodes, this allows us to conclude that $\layer(v) \leq L$. 
	
	To prove the first part, first note that for every node $u \in V(T)$ with $d_T(v,u)\leq 3L$, it holds that $|N_F(u) \setminus V^{children}(u)|\leq 1$. Depending on whether $u = v$ or $u \neq v$, this follows from either the second or the third property of \cref{def:good_subset}. In particular, if $u \in V_{\geq i}$ and $u$ has at most one child in $V_{\geq i}$, then $|N_{\geq i}(u)| \leq 2$. 
	
	Now, consider some $w \in V(T)$ with $d_T(v,w) \leq 3L - 2$ and some $i \in [L]$. Assume that $w$ has at most one child, at most one grandchild and at most one great-grandchild in $V_{\geq i}$. Let $u \in V_{\geq i}$ be either $w$ itself, a child or a grandchild of $w$. By the triangle inequality, we have $d_T(v,u) \leq d_T(v,w) + d_T(w,u) \leq (3L-2) +2 = 3L$. Furthermore, by our assumption $u$ has at most one child in $V_{\geq i}$, and therefore we can conclude $|N_{\geq i}(u)| \leq 2$. In particular, for a given $c \in V^{children}(w) \cap V_{\geq i}$, it holds that $|N_{\geq i}(u)| \leq 2$ for every $u \in N_{\geq i}(c) \cup \{c\}$. Therefore, $c \in \vpivot_i$. In particular, $V^{children}(w) \cap V_{\geq i} \subseteq \vpivot_i$. Thus, if $w \in V_{\geq i}$, we get
	\[|N_{\geq i}(w) \setminus \vpivot_i| \leq |N_{\geq i}(w) \setminus (V^{children}(w) \cap V_{\geq i})| = |N_{F}(w) \setminus V^{children}(w)| \leq 1,\]
	which allows us to conclude that $\layer(w) \leq i$. This finishes the proof of the first part of the argument. 
	
	Next, we prove by induction that for every $i \in \{0,1,\ldots,L\}$ and every $w \in V(T)$ with $d_T(v,w) \leq L - 3i$ and $\layer(w) > i$, it holds that $|V(T_w)| \geq 2^i$.
	The base case $i = 0$ trivially holds as $|V(T_w)| \geq 2^0$ for every $w \in V(T)$. For the induction step, consider some fixed $i \in [L]$ and some $w \in V(T)$ with $d_T(v,w) \leq L - 3i$ and $\layer(w) > i$. Using the first part, this implies that $w$ has two children, or two grandchildren, or two great-grandchildren in $V_{\geq i}$. In particular, there are two nodes $u_1,u_2 \in V_{\geq i} \cap V(T_w)$ with $d_T(w,u_j) \leq 3$ for $j \in [2]$ and $V(T_{u_1}) \cap V(T_{u_2}) = \emptyset$.  In particular, $d_T(v,u_j) \leq d_T(v,w) + d_T(w,u_j) \leq L - 3(i-1)$ and $\layer(u_j) > i-1$ and thus we get by induction that $|V(T_{u_j})| \geq 2^{i-1}$ for $j \in [2]$. As $V(T_{u_1}) \cap V(T_{u_2}) = \emptyset$, we therefore get $|V(T_w)| \geq 2^i$, which finishes the induction.
	In particular, as $|V(T_v)| < 2^L$, we can conclude that $\layer(v) \leq L$.
\end{proof}

Finally, we show that we can locally simulate \cref{alg:subsetrc} by only knowing the forest induced by vertices in $U$ and the degree of each node $U$ in the original forest.

\begin{lemma}[Local Sequential Simulation]
	\label{lem:local_simulation}
	Let $F$ be an arbitrary forest and $U \subseteq V(F)$ be a non-empty subset. Let $\layer \colon V(F) \mapsto \mathbb{N} \cup \{\infty\}$ be the mapping computed by \cref{alg:subsetrc} when given $F$ and $U$ as input. There exists a sequential algorithm running in $O(|U|)$ space with the following guarantee: The input of the algorithm is the forest $F[U]$ and the degree $\deg_F(u)$ of each node $u \in U$ in the forest $F$. The algorithm outputs for each node $v \in U$ its layer $\layer(v)$.
\end{lemma}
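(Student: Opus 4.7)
The plan is to exploit a single key structural observation about \cref{alg:subsetrc}: the sets $V_i$ it peels off are always subsets of $U$ (the set $V_i$ in line~\ref{line:subsetrc-vi} is defined via an intersection with $U$), so $V_{\geq i+1} = V_{\geq i} \setminus V_i$ also only loses vertices of $U$. Consequently, every node $u \in V(F) \setminus U$ is contained in $V_{\geq i}$ for \emph{every} iteration $i$, and is never eligible for assignment to a finite layer. This means that the state of the algorithm relevant to nodes in $U$ is completely described by the subset $V_{\geq i} \cap U$.

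Building on this, the plan is to show that both predicates evaluated in line~\ref{line:subsetrc-remove-pivots} (and then reused in line~\ref{line:subsetrc-vi}) can be computed using only $F[U]$ and $\deg_F \restriction_U$. Concretely, for $v \in V_{\ge i} \cap U$:
\begin{itemize}
\item The condition $N_{\ge i}(v) \subseteq U$ is equivalent to $N_F(v) \subseteq U$, because any neighbor of $v$ lying outside $U$ is permanently present in $V_{\ge i}$ by the observation above. This in turn is equivalent to the purely local test $\deg_F(v) = \deg_{F[U]}(v)$.
\item The degree $|N_{\ge i}(v)|$ equals $(\deg_F(v) - \deg_{F[U]}(v)) + |N_{F[U]}(v) \cap V_{\ge i}|$, where the first summand is a fixed offset known from the input and the second summand is computable from the stored forest $F[U]$ and the currently maintained set $V_{\ge i} \cap U$.
\end{itemize}
Hence, every quantity consulted in one iteration of \cref{alg:subsetrc} can be computed given just $F[U]$, the degrees $\deg_F(u)$ for $u \in U$, and the current ``active'' set $V_{\ge i} \cap U$. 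The simulation then proceeds by running the loop for $i = 1, \dots, L$ exactly as in \cref{alg:subsetrc}, using the two substitutions above to evaluate the conditions, producing $\vpivot_i$ and $V_i$, updating the active set, and recording $\layer(v) = i$ for each $v \in V_i$; remaining nodes in $U$ receive layer $\infty$ at the end.

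For the space bound, the plan is a direct accounting argument: $F[U]$ has at most $|U|-1$ edges since $F$ is a forest, so storing its adjacency structure uses $O(|U|)$ words; the degree array $\deg_F \restriction_U$, the layer array, and a bit-vector for membership in the current active set each take $O(|U|)$ words; and within a single iteration the auxiliary sets $\vpivot_i$ and $V_i$ are subsets of $U$ and can be represented in $O(|U|)$ space (reused across iterations). The main ``obstacle,'' if any, is really just the observation that nodes outside $U$ persist in all $V_{\ge i}$ so that an outside-of-$U$ neighborhood looks the same at every iteration; once that is noticed, the rest is a bookkeeping simulation and a routine space tally. No macros, new notation, or nontrivial invariants beyond those already in the algorithm are needed.
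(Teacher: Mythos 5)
Your proposal is correct and follows essentially the same route as the paper's proof: both hinge on the observation that nodes outside $U$ are never peeled (since each $V_i \subseteq U$), so $N_{\geq i}(v) \subseteq U$ reduces to the test $\deg_F(v) = \deg_{F[U]}(v)$ and $|N_{\geq i}(v)|$ is recovered as $|N_{F[U]}(v) \cap V_{\geq i}| + \deg_F(v) - \deg_{F[U]}(v)$. The space accounting is likewise the same routine bookkeeping.
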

\begin{proof}
	Note that it suffices to show the following: Fix some $i \in [L]$ and assume we know for each node $v \in U$ whether $v$ is contained in $V_{\geq i}$. Then, we can compute in $O(|U|)$ space for each node $v \in U$ whether $v \in V_i$. Consider some arbitrary node $v \in V_{\geq i} \cap U$. First, note that can determine whether $N_{\geq i}(v) \subseteq U$ by simply checking whether $\deg_F[v] = \deg_{F[U]}(v)$. If $\deg_F[v] = \deg_{F[U]}(v)$, then $N_{\geq i}(v) \subseteq N_{F}(v) \subseteq U$. On the other hand, if $\deg_{F[U]}(v) < \deg_F(v)$, then $v$ has a neighbor $w \in N_F(v)$ which is not contained in $U$. As $V_{i'} \subseteq U$ for every $i' \in [L]$, it follows that $w \in V_{\geq i}$ and therefore $N_{\geq i}(v) \not \subseteq U$. By a similar reasoning, we get that
	\begin{align*}
		|N_{\geq i}(v)| = |N_{\geq i}(v) \cap U| + \deg_{F}(v) - \deg_{F[U]}(v).
	\end{align*}
	
	Therefore, $v$ can compute $|N_{\geq i}(v)|$ by just counting how many of its neighbors in $U$ are contained in $V_{\geq i}$. Thus, we can compute in $O(|U|)$ space for each node $v \in U$ whether it is contained in $\vpivot_i$. Afterwards, we can compute, again in $O(|U|)$ space, for each node $v \in U$ whether it is contained in $V_i$.
\end{proof}

\subsection{Subtree Rake and Compress}
\label{sec:subtree_rc}

\cref{alg:rc} computes a partial strict $H$-decomposition with $O(\log n)$ layers where each node in a subtree of size at most $x$ is assigned to one of the layers. We later set $x= n^{\delta/10}$.
The correctness follows from the key structural property that partial strict $H$-decompositions are closed under taking minimums (\cref{lem:minClosure}). 
\begin{algorithm}[H]
	\caption{\textsc{SubTreeRC} Algorithm.}\label{alg:rc}
	\begin{algorithmic}[1]
		
		\Function{\textsc{SubTreeRC}}{forest $F$, $x \in \mathbb{N}$}
		\State{Let $U_1,U_2,U_3,\ldots,U_k \subseteq V(F)$ such that for every node $v \in V(F)$ contained in a subtree of size at most $x$ in $F$, there exists some $j \in [k]$ such that $U_j$ is a good subset for $v$ (see \cref{def:good_subset})}
		\State{$\layer_j \larr ConservativePeeling(F,U_j)$ for every $j \in [k]$ \Comment{$\layer_j \colon V(F) \mapsto \mathbb{N} \cup \{\infty\}$}}	
		\State{$\layer(v) = \min_{j \in [k]}\layer_j(v)$}\Comment{$\layer \colon V(F) \mapsto \mathbb{N} \cup \{\infty\}$} 
		\State{\Return{$\layer$}}	
		\EndFunction	
	\end{algorithmic}
\end{algorithm}
\begin{lemma}
	\label{lem:subtree_correctness}
	The algorithm above computes a partial $H$ decomposition $\layer \colon V(F) \mapsto [\lceil \log(|V(F)| + 1)\rceil] \cup \{\infty\}$ such that $\layer(v) < \infty$ for every node $v \in V(F)$ contained in a subtree of size at most $x$.
\end{lemma}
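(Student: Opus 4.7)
The proof will proceed by chaining together the three main lemmas already established about \textsc{ConservativePeeling}, good subsets, and closure of partial strict $H$-decompositions under pointwise minimum. First I would invoke \Cref{lem:subsets} to obtain the collection $U_1, \dots, U_k$ with the required properties; in particular, every node $v$ lying in a subtree of size at most $x$ has some index $j(v) \in [k]$ for which $U_{j(v)}$ is a good subset for $v$ (as in \cref{def:good_subset}).

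Next I would show that each $\layer_j$ output by $\textsc{ConservativePeeling}(F, U_j)$ is a partial strict $H$-decomposition of $F$ by \Cref{lem:conservative_returns_strict}, and that its image is contained in $[\lceil \log(|U_j| + 1) \rceil] \cup \{\infty\} \subseteq [\lceil \log(|V(F)| + 1)\rceil] \cup \{\infty\}$ since $|U_j| \le |V(F)|$. Applying \Cref{lem:minClosure} inductively over $j = 1, \ldots, k$ to the pointwise minimum then yields that $\layer(v) = \min_{j \in [k]} \layer_j(v)$ is also a partial strict $H$-decomposition of $F$, whose image clearly remains inside $[\lceil \log(|V(F)| + 1)\rceil] \cup \{\infty\}$.

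Finally, for the ``coverage'' claim, fix any $v \in V(F)$ contained in a subtree of size at most $x$. By the choice of the collection, there exists $j \in [k]$ such that $U_j$ is a good subset for $v$. By \Cref{lem:conservative_progress}, this guarantees
\[
\layer_j(v) \le \lceil \log(|U_j| + 1)\rceil < \infty,
\]
and therefore $\layer(v) \le \layer_j(v) < \infty$, as required. Since every partial strict $H$-decomposition is in particular a partial $H$-decomposition, this completes the proof.

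The argument is essentially a bookkeeping assembly: the only non-trivial content has already been isolated into \Cref{lem:conservative_returns_strict,lem:conservative_progress,lem:minClosure,lem:subsets}. The main conceptual point to be careful about, and the step I expect to require the most attention in the write-up, is the iterated application of the closure-under-minimum lemma: \Cref{lem:minClosure} is stated for two decompositions, so I would explicitly note that induction on $k$ gives closure for any finite collection, and verify that the range bound $[\lceil \log(|V(F)| + 1)\rceil] \cup \{\infty\}$ is preserved under pointwise minimum (which is immediate since the minimum of values in this set lies in the same set).
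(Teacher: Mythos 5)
Your proposal is correct and follows essentially the same route as the paper's proof: invoke \Cref{lem:conservative_returns_strict} for each $\layer_j$, combine via \Cref{lem:minClosure}, and use \Cref{lem:conservative_progress} together with the existence of a good subset for each node in a small subtree. The only differences are cosmetic — you spell out the induction on $k$ for the minimum-closure and the range bound, which the paper leaves implicit, and you cite \Cref{lem:subsets} for the sets $U_1,\dots,U_k$ whereas the algorithm simply postulates them (their construction is deferred to the proof of \Cref{lem:mpc_subtree_removal}).
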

\begin{proof}
	\Cref{lem:conservative_returns_strict} states that $\layer_j$ is a strict partial $H$-decomposition for every $j \in [k]$. Hence, \Cref{lem:minClosure} implies that $\layer$ is also a strict $H$-decomposition. Moreover, for every node $v \in V(F)$ contained in a subtree of size at most $x$ in $F$, there exists some $j \in [k]$ such that $U_j$ is a good subset for $v$. Thus, \Cref{lem:conservative_progress} gives that $\layer_j(v) < \infty$ and therefore $\layer(v) < \infty$.
\end{proof}

We are now ready to prove \cref{lem:mpc_subtree_removal}.

\begin{proof}[Proof of \cref{lem:mpc_subtree_removal}]
	We first run the balanced exponentiation algorithm of \cref{lem:subsets} which runs in $O(\log \log n)$ rounds and needs $\widetilde{O}(n)$ global space. As a result, we obtain a collection of non-empty subsets $U_1,U_2,\ldots,U_k \subseteq V(F)$ satisfying the three properties stated in \cref{lem:subsets}. In particular, for each $j \in [k]$, there exists one machine which has stored $F[U_j]$. As $|U_j| = O(n^\delta)$ and $F$ is a forest, $F[U_j]$ indeed fits into one machine. Moreover, one can compute in $O(1)$ rounds for each node $v \in V(F)$ its degree $\deg_F(v)$ and store $\deg_F(v)$ for every node $v \in U_j$ in the same machine as we store $F[U_j]$ using standard \mpc primitives \cite{goodrich}. Let $\layer_j \larr ConservativePeeling(F,U_j)$. \cref{lem:local_simulation} implies that we can compute $\layer_j(u)$ for every node $u \in U_j$ locally on the machine that stores $F[U_j]$ without any further communication. Then, in $O(1)$ rounds we can compute $\layer(v) =  \min_{j \in [k]} \layer_j(v)$ for every $v \in V$ using the fact that we can sort $N$ items in $O(1)$ rounds in the low-space \mpc model with $\widetilde{O}(N)$ global space \cite{goodrich}. In more detail, we create one tuple $(v,\layer_j(v))$ for every $j \in [k]$ and $u \in U_j$ and one tuple $(v,\infty)$ for every node $v \in V(F)$. Then, we sort the tuples according to the lexicographic order. Given the sorted tuples, it is straightforward to determine $\layer(v)$ for every $v \in V$. As $\sum_{j = 1}^k |U_j| = \widetilde{O}(n)$, it follows that the algorithm needs $\widetilde{O}(n)$ global space. It thus remains to argue about the correctness, which directly follows from the third property of \cref{lem:subsets} and \cref{lem:subtree_correctness}.
\end{proof}

\section{Coloring, MIS, and Matching}
\label{sec:coloring}

The following theorem is proven at the end of the section.
\begin{theorem} \label{thm:bound3coloring}
	There is a deterministic $O(\log \log n)$ round algorithm for $3$-coloring trees in the low-space \mpc model using $\widetilde{O}(n)$ words of global space.
\end{theorem}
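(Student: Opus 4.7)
The plan is to lift the strict $H$-decomposition from \Cref{thm:strictHDecomppolylogMemory} to a full $3$-coloring by first coloring the pivots and then letting every non-pivot compute its color from a short ``upward chain'' obtained via pointer jumping. Let $L=O(\log n)$ denote the number of layers and $\vpivot$ the pivot set produced. First, invoke the strict $H$-decomposition algorithm of \Cref{sec:strict_h} in $O(\log\log n)$ rounds and $\widetilde{O}(n)$ global space. The induced subgraph $F[\vpivot]$ has maximum degree $2$ and, as a subgraph of a forest, is a disjoint union of paths; three-color it by graph exponentiation, where every pivot gathers its $\Theta(\logstar{n})$-hop neighborhood inside $F[\vpivot]$ in $O(\log\logstar{n})$ \mpc rounds (that neighborhood has $O(\logstar{n})$ nodes and so fits in local memory), and then simulates Linial's algorithm locally to produce a valid $3$-coloring $c^{\text{piv}}\colon \vpivot\to\{1,2,3\}$.

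For each non-pivot $v$, define its \emph{parent} $p(v)$ to be the unique non-pivot neighbor in $V_{\geq \layer(v)}$ (with ties inside a same-layer non-pivot edge broken by larger node ID), setting $p(v)=\bot$ if no such neighbor exists; uniqueness is guaranteed by \Cref{def:strictH}. The key structural claim is that every iterated-parent chain $v=v_0,v_1,\ldots,v_k$ has length $k=O(\log n)$. Indeed $\layer(v_{i+1})\geq\layer(v_i)$ always, and whenever equality holds the strict condition forces the single upward slot of $v_{i+1}$ to be already occupied by $v_i$, so $p(v_{i+1})=\bot$ and the chain terminates. Thus at most one same-layer hop can occur, every other hop strictly increases the layer, and $k\leq L+1$.

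Apply pointer jumping on the parent pointers: in $O(\log\log n)$ rounds every non-pivot assembles its entire ancestor chain together with, for each $v_i$ in the chain, the colors $c^{\text{piv}}(u)$ of the upward-or-same-layer pivot neighbors $u$ of $v_i$. The $H$-decomposition degree bound gives at most two such neighbors per chain node, so each non-pivot stores $O(\log n)$ words, fitting in $n^\delta$ local space and $\widetilde{O}(n)$ global space in total. Each non-pivot then locally colors its chain top-down: $v_k$ (whose only upward neighbors are pivots) picks any color in $\{1,2,3\}$ avoiding its $\leq 2$ pivot colors, and then $v_{k-1},v_{k-2},\ldots,v_0$ each pick a color avoiding the color of their chain successor together with their remaining $\leq 1$ upward pivot color. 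At each step at most two colors are forbidden, so a valid choice in $\{1,2,3\}$ always exists; since the rule is deterministic in the pivot colors and the chain, two non-pivots sharing a chain suffix compute identical colors on the shared portion, yielding a globally consistent $3$-coloring.

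The main obstacle is the chain-length argument: this is exactly where strictness of \Cref{def:strictH} is indispensable, since in a plain $H$-decomposition the predecessor chain could have length $\Theta(n)$, so neither the $O(\log\log n)$-round pointer-jumping budget nor the $\widetilde{O}(n)$ global-space bound would survive.
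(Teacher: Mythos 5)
Your proposal is correct and follows essentially the same route as the paper's proof: three-color the degree-$\leq 2$ pivot graph via Linial's algorithm plus a constant-round color reduction, use strictness to give each non-pivot at most one upward (parent) edge and hence an ancestor chain of length $O(\log n)$ that determines its color, and gather that chain by pointer jumping in $O(\log\log n)$ rounds before coloring it greedily top-down. The only implementation detail the paper adds that you omit is a load-balancing step (sorting the requests by destination and answering via constant-depth broadcast trees) to handle the fact that many nodes may query the same ancestor in a single exponentiation step without exceeding the per-machine bandwidth.
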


For an input tree $F$, consider having a strict $H$-decomposition $\layer: V(F) \to \mathbb{N}$ described in \Cref{def:strictH}, which we get from \Cref{alg:strictHDecom} in $O(\log \log n)$ rounds and $\widetilde{O}(n)$ words of global space. We first color the subgraph induced by the nodes in $\vpivot$. Recall that $\vpivot$ is the set of nodes that have no neighbor with a higher layer.

\subparagraph*{Coloring the Pivot Nodes.}
The subgraph $F[\vpivot]$ has maximum degree $2$ each node $v \in \vpivot$ has at most two neighbors in $\vpivot$ with the same layer, and no neighbors with higher layer. In order to color $F[\vpivot]$, we first run Linial's $O(\Delta^2)$-coloring algorithm \cite{linial}, which requires $O(\log^* n)$ rounds. Since $\Delta(F[\vpivot]) \le 2$, Linial's algorithm results in an $O(1)$-coloring which we can convert to a $3$-coloring by performing the following: In each round, all nodes with the highest color among their neighbors in $\vpivot$ recolor themselves with the smallest color such that a proper coloring is preserved. Clearly, one color is eliminated in each round and since each node $v$ has at most $2$ neighbors in $F[\vpivot]$, we achieve a $3$-coloring of $F[\vpivot]$ in a constant number of rounds.

\subparagraph*{Coloring the Remaining Nodes.}
We will now compute a $3$-coloring of the nodes in $V \setminus \vpivot$. We first orient all edges $e = \{u,v\}$ with $u, v \in V \setminus \vpivot$ from $u$ to $v$ if $\layer(u) < \layer(v)$ and arbitrarily if $\layer(u) = \layer(v)$. The following lemma will help us to ensure that we do not create conflicts with the $3$-coloring computed on $\vpivot$.

\begin{lemma}
	Each node in $v \in V \setminus \vpivot$ has at most two forbidden colors. If $v$ has an outgoing edge, then it can have at most one forbidden color.
\end{lemma}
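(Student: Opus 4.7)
The plan is to identify the forbidden colors of $v \in V \setminus \vpivot$ with the colors of $v$'s pivot neighbors, since those are precisely the already-colored nodes that $v$ must now avoid in order not to conflict with the $3$-coloring produced on $\vpivot$. The first step I would take is to pin down where such pivot neighbors can sit in the layering. By the definition of $\vpivot$ in \Cref{def:strictH}, any $p \in \vpivot$ satisfies $\layer(p) \geq \layer(w)$ for every $w \in N_F(p)$; specializing to $w = v$ gives $\layer(p) \geq \layer(v)$ for every $p \in N_F(v) \cap \vpivot$. Thus the pivot neighbors of $v$ are contained in the set $S := \{w \in N_F(v) \mid \layer(w) \geq \layer(v)\}$.

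For the first bound, I would now invoke the $H$-decomposition property from \Cref{def:Hdecomp}: since $\layer$ is in particular a partial $H$-decomposition, $v$ has at most two neighbors in $S$. Combined with the inclusion above, this yields at most two pivot neighbors, hence at most two forbidden colors.

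For the second bound, I would use that the edge orientation defined right before the lemma lives only on edges between two non-pivots, and that it is directed from the endpoint of smaller layer to the endpoint of larger layer (with ties broken arbitrarily). If $v$ has an outgoing edge $v \to w$, then $w \in V \setminus \vpivot$ and $\layer(w) \geq \layer(v)$, so $w \in S$ and $w$ is not a pivot. This non-pivot element $w$ occupies one of the at-most-two slots in $S$, leaving room for at most one pivot in $S$. Since all pivot neighbors of $v$ live in $S$, $v$ has at most one pivot neighbor and therefore at most one forbidden color.

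I do not expect any real obstacle: the argument relies only on the $H$-decomposition bound $|S| \leq 2$ (not on the strict refinement \eqref{cond:pivot}), together with the elementary observation that the orientation restricted to non-pivot/non-pivot edges sends outgoing edges into $S$. The one conceptual step worth stating explicitly is the inclusion of pivot neighbors of $v$ inside $S$, which is what collapses the potentially large pivot-degree of $v$ down to the universal bound of two.
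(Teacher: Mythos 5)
Your proof is correct and follows essentially the same route as the paper's: pivot neighbors of $v$ necessarily lie in the same or a higher layer (by the definition of $\vpivot$), the partial $H$-decomposition property caps such neighbors at two, and an outgoing edge consumes one of those two slots with a non-pivot, leaving at most one pivot neighbor. Your version is, if anything, slightly more explicit than the paper's in spelling out the inclusion of pivot neighbors in the set $S$ and in noting that only the basic $H$-decomposition bound (not the strictness condition) is needed.
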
 
\begin{proof}
	Each node in $v \in V \setminus \vpivot$ has at most two neighbors in $\vpivot$. This is because nodes in $\vpivot$ do not have neighbors in higher layer, so $v$ can only have neighbors in $\vpivot$ at the same or higher layer. By \Cref{def:strictH}, $v$ can have at most two such neighbors.
	
	Nodes $v$ with one outgoing edge can have at most one neighbor in $\vpivot$, as otherwise $v$ has three neighbors with same or higher layer, and \Cref{def:strictH} is violated. So if $v$ has an outgoing edge, it can have at most one forbidden color.
\end{proof}

In what follows, each node $v \in V \setminus \vpivot$ will remember its at most two forbidden colors due to neighbors in $\vpivot$. \Cref{def:strictH} also guarantees that all nodes in $V \setminus \vpivot$ will have at most one outgoing edge. So the nodes $w \in V \setminus \vpivot$ with no outgoing edge pick an arbitrary color that is not forbidden as their final color.

In order to properly color the nodes with exactly one outgoing edge, consider the following centralized procedure: Color the nodes one by one in a greedy manner starting from the highest layer and with an arbitrary order within one layer. Here, greedy means, that a node picks the smallest color that is not forbidden and not used by any of its already colored neighbors. This process computes a proper $3$-coloring as each node will have one color used by the neighbor along its outgoing edge, and at most one forbidden color. The output of a node $v\in V\setminus \vpivot$ in this centralized procedure only depends on the \emph{directed path of $v$} obtained by following outgoing edges starting at $v$. In the following lemma we show that this directed path cannot be too long.

\begin{lemma}
	The directed path of a node $v \in V \setminus \vpivot$ obtained by following outgoing edges starting at $v$ has length at most $O(\log n)$.
\end{lemma}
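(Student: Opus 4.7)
The plan is to exploit the fact that the directed path, by construction, has non-decreasing layers, and to show that within any single layer the path can only traverse a bounded number of nodes.

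First I would observe that every vertex on the directed path starting at $v$ belongs to $V \setminus \vpivot$, since the edges were oriented only between non-pivot nodes, and hence pivot nodes have no outgoing edges to follow. Next, by the definition of the orientation ($u \to w$ when $\layer(u) < \layer(w)$, or arbitrarily when $\layer(u) = \layer(w)$), the layers along the directed path $v = v_0 \to v_1 \to v_2 \to \cdots$ satisfy $\layer(v_0) \le \layer(v_1) \le \layer(v_2) \le \cdots$. Since the strict $H$-decomposition has at most $\lceil \log(n+1)\rceil = O(\log n)$ distinct layers, it suffices to bound how many consecutive vertices on the path can lie in the same layer.

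This is the key step and the main point where the strictness of the decomposition is used. Suppose three consecutive vertices $v_{j-1} \to v_j \to v_{j+1}$ on the path all have $\layer(v_{j-1}) = \layer(v_j) = \layer(v_{j+1}) = i$. All three are in $V \setminus \vpivot$, so both $v_{j-1}$ and $v_{j+1}$ lie in the set
\[
S := \{w \in N_F(v_j) \setminus \vpivot \mid \layer(w) = \layer(v_j)\} \cup \{w \in N_F(v_j) \mid \layer(w) > \layer(v_j)\}.
\]
They are distinct since the path is simple (the underlying graph is a forest, so the directed walk cannot revisit nodes). This forces $|S| \ge 2$, contradicting condition~(\ref{cond:pivot}) of \Cref{def:strictH} applied to $v_j \in V_{<\infty}\setminus \vpivot$. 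Hence any maximal run of consecutive path-vertices at a common layer has length at most $2$.

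Combining these two facts, the directed path uses at most $2$ vertices per layer, for a total length of at most $2 \cdot O(\log n) = O(\log n)$, as claimed. I do not expect any genuine obstacle here beyond correctly invoking the strict $H$-decomposition property; the subtle point to state carefully is that all vertices on the path are non-pivot, which is what lets us place both $v_{j-1}$ and $v_{j+1}$ into $S$.
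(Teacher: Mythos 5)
Your proof is correct and uses essentially the same mechanism as the paper's: both arguments apply the strictness condition~(\ref{cond:pivot}) to an interior vertex of the path that has a same-layer non-pivot in-neighbor and an out-neighbor, forcing the forbidden set to have size at least $2$. The paper phrases this slightly more strongly (an intra-layer edge terminates the path, so layers strictly increase until the final step), whereas you bound each layer's run by two vertices, but both yield the same $O(\log n)$ bound.
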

\begin{proof}
	Consider a directed edge $(u,w)$ in the directed path of $v$. If $\layer(u) = \layer(w)$, then $w$ cannot have an outgoing edge as it will have two neighbors in $V\setminus \vpivot$ with same or higher layer, violating \Cref{def:strictH}. In other words, if $\layer(u) = \layer(w)$, then the directed path of $v$ ends at $w$.
	
	Therefore, if we go along the directed path, the layer of the nodes either strictly increases or the path does not continue. Since there are $O(\log n)$ layers in the $H$-decomposition, the length of a directed path is at most $O(\log n)$. 
\end{proof}

In our \mpc algorithm, the idea is for each node to learn its $O(\log n)$ length directed path by performing graph exponentiation only along the directed edges. Since all nodes with no outgoing edges are already colored with their final color, consider performing the following \mpc algorithm only for nodes with one outgoing edge: Each node computes its final color after gathering its directed path by performing $O(\log \log n)$ graph exponentiation steps along directed edges.

\subsection{MIS and Maximal Matching} \label{sec:boundmismm}

The maximal independent set and maximal matching algorithms follow from \Cref{thm:bound3coloring}.

\begin{theorem} \label{thm:boundmis}
	There is a deterministic $O(\log \log n)$ round MIS algorithm for trees in the low-space \mpc model using $\widetilde{O}(n)$ words of global space.
\end{theorem}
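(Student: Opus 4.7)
The plan is to reduce maximal independent set to $3$-coloring by the standard greedy algorithm that processes color classes one at a time. By \Cref{thm:bound3coloring}, we can compute a proper $3$-coloring $c \colon V(F) \to \{1,2,3\}$ of the input forest $F$ deterministically in $O(\log \log n)$ rounds using $\widetilde{O}(n)$ global space. Given this coloring, I would construct an MIS $I \subseteq V(F)$ in three phases.

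In phase $i \in \{1,2,3\}$, I let every node $v$ with $c(v) = i$ join $I$ if and only if no neighbor of $v$ has already joined $I$ in a previous phase. Correctness is standard: after all three phases, $I$ is independent because within each color class the nodes are pairwise non-adjacent and across phases membership explicitly respects previously added neighbors; $I$ is maximal because every node $v \notin I$ failed to join in its own phase $c(v)$, which can only happen if some neighbor of $v$ is in $I$.

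Each phase must check, for every node $v$ of color $i$, whether any of its neighbors currently belongs to $I$. This ``does any neighbor satisfy property $P$?'' aggregation can be implemented in $O(1)$ low-space \mpc rounds using the standard sort-and-scan primitives of Goodrich et al.: replicate each edge as two oriented copies, sort by the endpoint that needs the information, and aggregate by scanning consecutive records on each machine. Since the edge set of a forest has size $O(n)$, all intermediate data structures fit in $\widetilde{O}(n)$ global space. Therefore the three phases together take $O(1)$ additional rounds and preserve the $\widetilde{O}(n)$ space bound.

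Summing the costs, the total round complexity is $O(\log \log n) + O(1) = O(\log \log n)$ and the global space remains $\widetilde{O}(n)$. The only potentially subtle point is that high-degree nodes cannot receive all neighbor information on a single machine in one round, but the sort-based aggregation of standard low-space \mpc primitives handles this cleanly, so I do not expect a genuine obstacle here; the work is entirely in \Cref{thm:bound3coloring}, which is already established.
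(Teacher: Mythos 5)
Your proposal is correct and matches the paper's own proof: both reduce MIS to the $3$-coloring of \Cref{thm:bound3coloring} and then greedily process the three color classes in $O(1)$ additional rounds. The extra detail you give on implementing the neighbor-aggregation via sort-and-scan is fine but not needed beyond what the paper already asserts.
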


\begin{proof}
	By \Cref{thm:bound3coloring}, we can color the tree with $3$ colors. For all colors $i$, perform the following. Nodes colored $i$ add themselves to the independent set, and all nodes adjacent to nodes colored $i$ remove themselves from the graph. Clearly this results in a maximal independent set in $O(1)$ rounds and the space requirements are satisfied. 
\end{proof}

\begin{theorem} \label{thm:boundmm}
	There is a deterministic $O(\log \log n)$ round maximal matching algorithm for trees in the low-space \mpc model using $\widetilde{O}(n)$ words of global space.
\end{theorem}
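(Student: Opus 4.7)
The plan is to adapt the scaffolding of the $3$-coloring algorithm of \Cref{thm:bound3coloring} from coloring to matching. First compute a strict $H$-decomposition of $F$ via \Cref{thm:strictHDecomppolylogMemory} in $O(\log\log n)$ rounds and $\widetilde{O}(n)$ global space, and then assemble the matching in three short stages.

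Stage $1$ computes a maximal matching on $F[\vpivot]$. By the definition of a strict $H$-decomposition, $F[\vpivot]$ has maximum degree at most $2$ and, being a subgraph of the forest $F$, is a disjoint union of simple paths; a maximal matching on such paths is computed in $O(\log\log n)$ rounds via balanced exponentiation (\Cref{lem:balancedExp}), since every node learns enough of its path component to decide locally which incident edge belongs to the matching.

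Stage $2$ handles the non-pivot--non-pivot edges. By the strict $H$-decomposition every non-pivot $v$ has at most one outgoing non-pivot edge, to a neighbor $w\in V\setminus \vpivot$ with $\layer(w)\ge \layer(v)$, so these outgoing edges form a rooted in-forest on the non-pivot nodes, with sinks at the non-pivots that have no outgoing edge. Apply balanced exponentiation along the outgoing direction (in the spirit of the collection-of-good-subsets idea in \Cref{sec:subtreeRC}) for $O(\log\log n)$ rounds so that every non-pivot gathers the directed path from itself to its sink together with the in-neighbors of every ancestor on that path. Each non-pivot then locally simulates the centralized greedy rule that processes nodes in decreasing layer order, ties broken by id, and matches $v$ with $w$ iff $w$ is still free at $v$'s turn. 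The rule is \emph{path-local}: $w$'s own decision is determined by the outgoing subpath below $w$, which is a strict suffix of $v$'s path, and the conflict among $v$ and its sibling in-neighbors of $w$ is resolved from the list of in-neighbors of $w$ already visible in $v$'s view, so $v$'s match is a deterministic function of the data collected by balanced exponentiation.

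Stage $3$ cleans up the remaining pivot--non-pivot edges. After Stages $1$ and $2$, each still-free non-pivot has at most two pivot neighbors whose Stage-$1$ match statuses are known. In two rounds of a proposer--acceptor matching, every still-free non-pivot proposes to its smallest-id still-free pivot neighbor and every still-free pivot accepts the smallest-id proposal; since each non-pivot has at most two such candidates, two rounds suffice to ensure that after Stage $3$ no pivot--non-pivot edge has two free endpoints. The main obstacle across the three stages is the path-locality analysis in Stage $2$, which should mirror the arguments behind \textsc{ConservativePeeling} and \textsc{SubTreeRC} in \Cref{sec:subtreeRC}; once this is in place, maximality follows from a three-way case split on edge types: pivot--pivot edges are covered by Stage $1$, non-pivot--non-pivot edges are, by the strict $H$-decomposition, precisely the outgoing edges of one of their endpoints and hence covered by Stage $2$, and pivot--non-pivot edges are covered by Stage $3$.
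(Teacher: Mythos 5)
Your overall decomposition of the edge set (pivot--pivot, non-pivot--non-pivot, pivot--non-pivot) and the out-degree-$\le 1$ orientation of the non-pivot edges are both sound consequences of \Cref{def:strictH}, and Stage~3 is fine. The paper, however, takes a different and much simpler route: it first computes the full $3$-coloring of \Cref{thm:bound3coloring} and then uses the three color classes as a schedule --- in each of the three phases the nodes of one color (an independent set) propose along their at most two outgoing edges and acceptances are resolved in $O(1)$ rounds. The reason the paper goes through the coloring is precisely to avoid the step where your argument breaks.

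The genuine gap is the path-locality claim in Stage~2. In the coloring argument of \Cref{sec:coloring}, a node's output depends only on its directed out-path because the constraint from $w$ flows one way: $v$ only needs $w$'s color, and $w$'s color is determined by $w$'s own out-path. For your layer-order greedy matching this is false. Whether $w$ is still free at $v$'s turn depends on whether some higher-priority in-neighbor $v'$ of $w$ already claimed $w$, which in turn depends on whether $v'$ was itself still free at $v'$'s turn, which depends on whether one of $v'$'s in-neighbors matched $v'$ earlier, and so on recursively into the entire in-tree hanging off $v'$. None of that in-tree lies on $v$'s out-path or in the in-neighbor lists of the nodes on it, so $v$'s match is \emph{not} a deterministic function of the data you propose to gather; availability information propagates upward from arbitrarily large subtrees, and there are instances where two siblings of $v$ with identical visible data make opposite decisions. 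A secondary (fixable) issue is Stage~1: the components of $F[\vpivot]$ can be paths of length $\Theta(n)$, and a node in the middle of such a path is not important, so \Cref{lem:balancedExp} gives it no guarantee; and even a $\poly\log n$-hop view does not by itself yield a consistent local matching rule without symmetry breaking. The paper handles $F[\vpivot]$ with Linial's algorithm in $O(\log^* n)$ rounds, and the clean fix for your Stage~1 is the same (or to invoke the $O(\log^* n)$ \local locality of maximal matching on paths and gather only $O(\log^* n)$-hop views). To repair Stage~2 you essentially need the paper's idea: first obtain an $O(1)$-coloring of the out-forest (which \emph{is} path-local) and then run the propose--accept phases color by color.
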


\begin{proof}
	By \Cref{thm:bound3coloring}, we can color the tree with 3 colors using a $H$-decomposition. Recall that in the decomposition, each node $v$ with $\layer(v) = i$ has at most two neighbors with layer at least $i$. Let us define the parent nodes of $v$. We orient an edge $\{u, v\}$ from $v$ to $u$ if (i) $u$ belongs to a strictly higher layer than $v$ or (ii) $u$ belongs to the same layer and has a higher ID. For all colors $i$, perform the following. Node $v$ colored $i$ proposes to its highest ID outgoing neighbor $u$, and $u$ accepts the proposal of the highest ID proposer. If $u$ accepts $v$'s proposal in which case the edge $\{u,v\}$ joins the matching. If $u$ rejects $v$'s proposal, it means that $u$ is matched with some other node and then we repeat the same procedure with $v$'s other possible out-neighbor. Note that when a node joins the matching, it prevents all other incident edges from joining the matching. As a result, all nodes colored $i$ have either joined the matching or they have no out-going edges. After iterating through all color classes, all nodes have either joined the matching or they have no incident edges, implying that all their original neighbors belong to the matching. This results in a maximal matching in $O(1)$ rounds and the space requirements are satisfied.
\end{proof}
\begin{proof}[Proof of \Cref{thm:bound3coloring}]
	Correctness follows from the fact that each node can recolor itself with its final color when seeing its whole directed path. The runtime follows from the fact that we only perform $O(\log \log n)$ graph exponentiation steps and color the directed paths.
	
	Let us analyze the space usage of our algorithm. Since the length of a directed path stored by each node during the algorithm is at most $O(\log n)$, we do not violate global space. Note that the sequential coloring of frozen layers does not require additional space. Notice that even though each node $v$ is the source of at most one request, multiple nodes may send a request to $v$. Hence, during graph exponentiation, node $v$ may have to communicate with a large number of nodes in lower layers. To mitigate this issue, we perform a load balancing process by sorting all the at most $n$ requests by the ID of their destination. This can be done deterministically in $O(1)$ rounds. Now, all the requests with destination $v$ lie in consecutive machines, and therefore, we can broadcast the response of $v$ to all these machines in $O(1)$ rounds by creating a constant depth broadcast tree on these machines. Therefore, each step of graph exponentiation can be done in $O(1)$ rounds, which leads to an overall running time of $O(\log \log n)$ rounds.
\end{proof}

\section{Coloring, MIS, Matching, and \texorpdfstring{$H$}{Lg}-decomposition with Optimal Space}
\label{sec:optimalSpace}
In this section we show how to obtain optimal global space by equipping the algorithm from \Cref{thm:boundmm,thm:boundmis,thm:bound3coloring} with suitable pre- and processing steps that free additional space.

\thmMainColoringMISMatching*
\begin{proof}
	We perform the standard procedure of iteratively putting in layer $i$ nodes of degree at most $2$ for $i = 1$ to $O(\log \log n)$. This removes $O(\poly \log n)$ fraction of the nodes since each iteration layers a constant fraction of the nodes. Therefore, the new number of nodes is $n' = n/\poly\log n$, and an \mpc algorithm using $\widetilde{O}(n')$ global space uses $O(n)$ words of global space.
	
	So we freeze these initial $O(\log \log n)$ layers obtain $G'$ remaining graph with $n'=n/\poly\log n$ nodes. Then we apply \Cref{thm:bound3coloring} to compute a $3$-coloring in $G'$ in $O(\log \log n)$ rounds and $O(n)$ global space. Finally we complete the solution on the nodes in the frozen layers one layer at a time taking an additional $O(\log \log n)$ rounds.
	
	The claim for MIS and maximal matching follows by the proofs of \Cref{thm:boundmis} and \Cref{thm:boundmm} respectively after computing the $H$-decomposition and the $3$-coloring.
\end{proof}

Using a similar preprocessing step, we can also show that a strict $H$-decomposition of \Cref{thm:strictHDecomppolylogMemory} can be computed with optimal global space.

\thmStrictHdecomp*
\begin{proof}
	Same as above, iteratively putting in layer $i$ the pivot nodes and nodes of degree $1$ as in \Cref{line:removePivotmain,line:removeDegOnemain} of \Cref{alg:strictHDecom} for $i = 1$ to $O(\log \log n)$. By \Cref{cor:strictH-single-step-progress} and using \Cref{lem:generalizedRakeCompress} with $x = 1$ and $\ell = 3$, we get that each iteration layers a constant fraction of nodes, which implies that $O(\poly \log n)$ fraction of the nodes are removed after $O(\log \log n)$ iterations.
	
	Now we have a partial strict $H$-decomposition if we assign layer $\infty$ to the remaining nodes. These nodes form a graph $G'$ with $n' = n/\poly\log n$ nodes, and so we can compute a strict $H$-decomposition on $G'$ using \Cref{alg:strictHDecom} in $O(\log \log n)$ rounds and $\widetilde{O}(n') = O(n)$ global space. Therefore, we have computed a strict $H$-decomposition of $G$ in $O(\log \log n)$ rounds and in $O(n)$ global space.
\end{proof}

\section{Balanced Exponentiation}
\label{sec:balanced_exponentiation}

Let $v \in F$ be a vertex of graph $F$ which is a forest. Recall that the (shortest) path between two nodes of a tree is unique. For all nodes $x\in N(v)$, define
\begin{equation*}
	F_{v \arr x}=\{w\in V(F)\mid \text{$x$  is contained in the shortest path from $v$ to $w$}\}
\end{equation*}

to be all nodes in the forest that are reachable from $v$ via $x$, including $x$. Also, let 
\begin{align*}
	F_{v \narr x} \coloneqq V(F) \setminus F_{v \arr x}~.
\end{align*}

Note that $v \in  F_{v \narr x}$ for any $x \in N(v)$. For brevity, we define $\eps \coloneqq \delta/8$, where $\delta$ comes from $O(n^\delta)$, the low-space \mpc bound per machine. Next, we define \emph{important} nodes, which will be central to our algorithm.

\begin{definition}[Important node] \label{def:importantNode}
	We call a node $v\in F$ \emph{important} if there exists a node $u\in N(v)$ such that $|F_{v \narr u}| \leq n^\eps$.
\end{definition}

The intuition for an important node is that there can at most be one neighbor towards which the graph contains many nodes. 

\begin{observation}
	\label{obs:importantOneHeavyDirection}
	Let $v\in F$ be an important node and $u \in N(v)$ a neighbor such that $|F_{v \narr u}| \leq n^\eps$. Then there is at most one $y\in N(v)$ with $|F_{v \arr y}|>n^{\eps}$, in which case $u$ is unique and $y=u$.
\end{observation}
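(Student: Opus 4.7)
The plan is to exploit a single forest-theoretic fact: for any two distinct neighbors $y, y' \in N(v)$, the sets $F_{v \arr y}$ and $F_{v \arr y'}$ are disjoint. This is because, in a forest, the shortest path from $v$ to any node $w$ is unique, and if $w \in F_{v \arr y}$ that path starts with the edge $\{v, y\}$, which rules out the edge $\{v, y'\}$ being on the same path. Equivalently, $F_{v \arr y} \subseteq V(F) \setminus F_{v \arr y'} = F_{v \narr y'}$ whenever $y \neq y'$.

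First I would instantiate this with $y' = u$: for every $y \in N(v) \setminus \{u\}$, we get $F_{v \arr y} \subseteq F_{v \narr u}$, hence $|F_{v \arr y}| \leq |F_{v \narr u}| \leq n^{\eps}$. Therefore any neighbor witnessing $|F_{v \arr y}| > n^{\eps}$ must equal $u$. This immediately gives both the ``at most one'' part and the identification $y = u$ when such a $y$ exists.

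It remains to establish the uniqueness of $u$ in the case that $|F_{v \arr u}| > n^{\eps}$. I would argue by contradiction: suppose there also exists $u' \in N(v)$ with $u' \neq u$ and $|F_{v \narr u'}| \leq n^{\eps}$. Applying the same containment with the roles of $u$ and $u'$ reversed yields $F_{v \arr u} \subseteq F_{v \narr u'}$, and therefore $|F_{v \arr u}| \leq |F_{v \narr u'}| \leq n^{\eps}$, contradicting $|F_{v \arr u}| > n^{\eps}$.

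There is no real obstacle here; the only thing to verify carefully is the disjointness claim, which follows directly from the uniqueness of paths in a forest together with the definition of $F_{v \arr x}$ as the set of nodes whose shortest path from $v$ passes through $x$. Once that is in place, both parts of the observation are one-line consequences.
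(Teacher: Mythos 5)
Your proof is correct and follows essentially the same route as the paper's: both rest on the containment $F_{v \arr y} \subseteq F_{v \narr u}$ for $y \neq u$ (a consequence of the uniqueness of paths in a forest) together with the size comparison $|F_{v\arr y}|\leq |F_{v\narr u}|\leq n^{\eps}$. The only cosmetic difference is that you derive the ``at most one'' part directly while the paper phrases it as a contradiction between two candidate directions.
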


\begin{proof}
	Let nodes $v$ and $u$ be as in the lemma statement. Assume for contradiction that there are two distinct nodes $y_1,y_2\in N(v)$ satisfying the lemma statement for $y$. W.l.o.g. it holds that $y_1\neq u$ and $y_1 \in N(v) \setminus \{u\}$. This implies $F_{v \arr y_1}\subseteq F_{v \narr u}$. We obtain $n^\eps < |F_{v \arr y_1}| \leq |F_{v \narr u}| \leq n^\eps$, a contradiction.
	
	If $|F_{v \arr y}|>n^{\eps}$ for $y \in N(v)$, the condition $|F_{v \narr u}| \leq n^{\eps}, u \in N(v)$ can hold for exactly one node, which is $y$. Hence, $y=u$.
\end{proof}

Let us give an informal version of the main theorem of the section. \Cref{lem:balancedExpInformal} is an informal version of \Cref{lem:balancedExp}, which may be independent interest. \Cref{lem:balancedExp} is standalone and can be used as a  blackbox in future works.

\begin{lemma}[Balanced Exponentiation, informal] \label{lem:balancedExpInformal} Let $0 < k \leq n^{\delta/8}$  be a parameter that may or may not be constant. There is a deterministic low-space \mpc algorithm that given an $n$-node forest $F$ uses $O(\log k)$ rounds in which every important node $v\in F$ discovers its $k$-hop neighborhood in every direction except at most one.
\end{lemma}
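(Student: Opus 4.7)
My plan is to run a doubling procedure analogous to classical graph exponentiation, but applied separately to each (node, direction) pair and equipped with a size-based freezing rule. For every node $v$ and every neighbor $x \in N(v)$ I maintain a set $B_v[x] \subseteq F_{v \arr x}$, initialized to $\{x\}$, so that the intended invariant after round $i$ is $B_v[x] = \{w \in F_{v \arr x} : d_F(v,w) \leq 2^i\}$. A pair $(v,x)$ is declared \emph{frozen} the moment $|B_v[x]|$ would exceed $n^\eps$, and once frozen it is not extended further. Together with $B_v[x]$ I store the induced subforest on $B_v[x] \cup \{v\}$, which is a tree rooted at $v$ and lets $v$ identify both the current frontier and the predecessor (on $v$'s side) of any node it currently knows.

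In round $i$, for every non-frozen $(v,x)$ the node $v$ looks at the frontier $F_v^x$, namely the nodes in $B_v[x]$ at depth exactly $2^{i-1}$. For each $u \in F_v^x$, $v$ sends $u$ the identifier of the predecessor $p$ of $u$ on the $v$-to-$u$ path and requests $\bigcup_{y \in N(u) \setminus \{p\}} B_u[y]$, which by the inductive invariant applied to $u$ is exactly the set of nodes at distance at most $2^{i-1}$ from $u$ on the side away from $v$, provided none of the involved pairs $(u,y)$ has been frozen. Node $v$ unions the replies into $B_v[x]$, updates the stored subforest, and freezes the direction if the resulting size would surpass $n^\eps$.

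For an important $v$ with witness $u^\star \in N(v)$, \Cref{obs:importantOneHeavyDirection} says that $u^\star$ is the only potentially heavy direction and that $|F_{v \narr u^\star}| \leq n^\eps$. For any other direction $x$ we have $F_{v \arr x} \subseteq F_{v \narr u^\star}$, so $|B_v[x]|$ stays at most $n^\eps$ throughout and $(v,x)$ is never frozen. This bound propagates downward: any node $u$ that appears on a frontier of a light direction of $v$ already lies in $F_{v \narr u^\star}$, and each direction $y$ of $u$ that points away from $v$ satisfies $F_{u \arr y} \subseteq F_{v \narr u^\star}$, so the pairs $(u,y)$ on which $v$'s queries rely are never frozen either. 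A straightforward induction on $i$ then establishes the invariant for $(v,x)$, and after $\lceil \log k \rceil$ rounds $B_v[x]$ contains every node of $F_{v \arr x}$ at distance at most $k$, which is the conclusion required by the informal statement.

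The main technical obstacle I expect is the \mpc implementation, specifically handling the uneven load that arises when many nodes simultaneously query the same frontier node $u$ or when a single $u$ must send large replies to many requesters. This is resolved with standard low-space \mpc primitives: query tuples are sorted by destination ID in $O(1)$ rounds, responses are aggregated per destination and dispatched through constant-depth broadcast trees over the machines holding the replicated requests, and the subforest updates are merged locally. Each doubling step then takes $O(1)$ \mpc rounds, yielding the claimed $O(\log k)$ total round count, while the freezing threshold $n^\eps$ together with the local space bound $n^\delta = n^{8\eps}$ ensures that every relevant $B_v[x]$ and every message comfortably fits on a single machine.
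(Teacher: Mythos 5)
Your doubling-with-absolute-freezing scheme gets the \emph{correctness} for important nodes right (the observation that every pair $(u,y)$ reachable inside $F_{v \narr u^\star}$ has $|F_{u \arr y}| \leq n^\eps$ and hence never freezes does close the induction), and the local space accounting is fine. The genuine gap is the \textbf{global space bound}, which the formal statement (\Cref{lem:balancedExp}) requires to be $O(n \cdot \poly(k))$ and which your proposal never analyzes. Your freezing rule is purely absolute: a node may accumulate up to $n^\eps$ nodes in \emph{every} direction, including the one direction pointing ``toward the root,'' no matter how little of the tree lies on its other side. Concretely, take a complete binary tree and $k = \Theta(\log n)$ (exactly the parameter setting used in \Cref{sec:proofOfLemma}); for a constant fraction of the nodes $v$, the $k$-hop ball in the parent direction has size $2^{\Theta(k)} = \poly(n) \gg n^\eps$, and since your sets roughly square in each doubling round before freezing, each such $v$ ends up storing $\Omega(n^{\eps/2})$ nodes in that direction. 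Summing over all nodes gives $\Omega(n^{1+\eps/2})$ global space, polynomially above the budget. This is precisely the failure mode the paper flags in \Cref{sec:algoOverview}: learning the $k$-ball in ``any $\deg(v)-1$ directions'' breaks the global space analysis because a node can learn too much in the direction of the root.

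The paper's fix is a \emph{relative} rather than absolute rule: via the probing procedure (\Cref{lem:mainProbing}), a node always excludes from exponentiation the direction with the largest probe value, so whatever it learns in any allowed direction $x$ is bounded by $k \cdot |G^k_{v \narr x}|$ (guarantee (ii)); combined with \Cref{obs:globalmemory} this yields the $O(n \cdot \poly(k))$ bound. The price of that rule is that the clean invariant ``$B_v[x]$ equals the exact $2^i$-ball'' you rely on no longer holds for queried nodes, which is why the paper needs the path-shrinking progress argument (\Cref{lem:progress}, a $5/6$ contraction per round) in place of your straightforward doubling induction. Your simplification and the missing space bound are two sides of the same coin: to repair the proposal you would have to add a relative blocking criterion comparing directions against each other, and then re-prove progress without the exact-ball invariant --- which is essentially the content of \Cref{sec:progressAndCorrectness,sec:globalMemory}.
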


\subsection{Graph Definitions}
Given a node $v\in F$, we refer to each of its neighbors $x\in N(v)$ as a \emph{direction} with regard to $v$. 
For node $v$ and every node $w \in F$ define $r_v(w) \coloneqq x \in N(v)$ such that $w \in F_{v \arr x}$, i.e., $r_v(w)$ is the neighbor of $v$ which is on the unique path from $v$ to $w$ in $F$. 
Define $F^k_{v \arr x}$ and $F^k_{v \narr x}$ as $\{F_{v \arr x} \cap N^k(v)\}$ and $\{F_{v \narr x} \cap N^k(v)\}$, respectively.

\subsection{The Algorithm} 

At all times, every node $v \in F$ has some set of nodes $S_v$ in its memory, which we initialize to $N(v) \cup \{v\}$. Set $S_v$ is the node's view (or knowledge). Similarly to definitions $F_{v \arr x}$ and $F_{v \narr x}$, let us define $S_{v \arr x}=S_v\cap F_{v\arr x}$ and $S_{v \narr x} \coloneqq S_v \setminus S_{v \arr x}$.

\subsubsection{Exponentiation}

For a node $v$, value $k \in \mathbb{N}$, and any $X\subseteq N(v)$, an exponentiation operation, or exponentiating, is defined as computing
\begin{align*}
	\expo(X,k) \coloneqq \bigcup_{x \in X} \bigcup_{w \in S_{v \arr x}} S_{w \narr r_w(v)} \cap N^{k-d(v,w)}(w),
\end{align*}
where $N^{k-d(v,w)}(w)$ is the subgraph of radius $k-d(v,w)$ centered at $w$. Intuitively, the intersection in the definition is to prevent nodes from learning anything further that $k$-hops away. We say that a node $v$ exponentiates towards (or in the direction of) $x$ if $x \in N(v)$ and $v$ computes $\expo(X,k)$ with $x\in X$. An exponentiation operation is called \emph{uniform} if $X = N(v)$.

\subsubsection{Algorithm -- High Level Overview} \label{sec:algoOverview}

We prove the following lemma.

\begin{restatable}[Balanced exponentiation]{lemma}{lemBalancedExp}
	\label{lem:balancedExp}
	Let $0 < k \leq n^{\delta/8}$ be a parameter that may or may not be constant. Given an $n$-node forest $F$, there is a deterministic $O(\log k)$ time low-space \mpc algorithm after which the following holds. For every important node $v\in V(F)$ there is a node $z \in N(v)$ and a machine that for all $x \in N(v)\setminus z$ holds $N^{k}(v) \cap F_{v \arr x}$ in memory. The algorithm requires $O(n \cdot \poly(k))$ total space.
\end{restatable}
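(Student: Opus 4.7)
My plan is to run $\lceil \log k \rceil$ rounds of a "balanced" graph exponentiation. Each node $v$ maintains a view $S_v \subseteq V(F)$ and a set of active directions $A_v \subseteq N(v)$, initialized to $S_v = \{v\} \cup N(v)$ and $A_v = N(v)$. In round $i$, $v$ computes $\expo(A_v, 2^i)$ as defined in the excerpt and merges the result into $S_v$; immediately afterward, for every $x \in A_v$ with $|S_{v \arr x}| > n^{\delta/8}$, $v$ deactivates $x$ by removing it from $A_v$ and discarding $S_{v \arr x}$ to reclaim memory. The standard doubling invariant then yields $S_{v \arr x} = N^{2^i}(v) \cap F_{v \arr x}$ at the end of round $i$ for every direction $x$ that is still active.

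To establish the per-node guarantee, I would fix an important node $v$ with witness $u$ (so $|F_{v \narr u}| \le n^{\delta/8}$) and take $z := u$. For any direction $x \in N(v) \setminus \{u\}$, \Cref{obs:importantOneHeavyDirection} gives $F_{v \arr x} \subseteq F_{v \narr u}$, whence $|S_{v \arr x}| \le |F_{v \arr x}| \le n^{\delta/8}$ at all times, so $x$ is never deactivated. To argue that $S_{v \arr x}$ actually fills up with all of $N^k(v) \cap F_{v \arr x}$, I would inductively apply the same reasoning to every node $w$ reached along a path from $v$ that stays inside $F_{v \narr u}$: the direction from $w$ that leads further into $F_{v \narr u}$ again satisfies $F_{w \arr y} \subseteq F_{v \narr u}$, so its view also stays below threshold and remains active. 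Hence the chain of exponentiations successfully propagates all the relevant information back to $v$, and $v$'s machine ends up holding $N^k(v) \cap F_{v \arr x}$ for every $x \ne z$.

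For the memory bound, I observe that at the end of each round every active direction contributes at most $n^{\delta/8}$ to $|S_v|$, so $|S_v| \le \deg(v) \cdot n^{\delta/8}$; using $\sum_v \deg(v) \le 2n$ in a forest, the total space is $O(n \cdot n^{\delta/8}) = O(n \cdot \poly(k))$ since $k \le n^{\delta/8}$. In \mpc, each exponentiation round is implementable in $O(1)$ rounds using standard sort/aggregate/broadcast primitives in the style of \cite{sodapaper}: every pair $(v, w)$ with $w \in S_{v \arr x}$ becomes a request that $w$ answers by returning $S_{w \narr r_w(v)}$ intersected with the appropriate ball, and high-degree nodes whose many incoming requests would overload a single machine are load-balanced across several machines. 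The hard part will be the correctness argument: one must simultaneously track, along every path from $v$ into $F_{v \narr u}$, that the intermediate nodes keep their forward directions active, and additionally bound the brief intra-round memory overshoots (a direction can transiently exceed $n^{\delta/8}$ within a round before the node drops it). This latter issue is controlled because one exponentiation step only at most squares the view size while the target radius merely doubles, so an overshoot past $n^{\delta/8}$ is bounded by a factor that still fits within the $n^{\delta}$ local-memory budget.
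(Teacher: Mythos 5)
There is a genuine gap in your global space analysis, and it is precisely the point where this lemma differs from the technique of \cite{sodapaper} that it generalizes. Your rule caps each direction at $n^{\delta/8}$ nodes and deactivates it only once it exceeds that threshold. This permits every node to accumulate up to $n^{\delta/8}$ nodes \emph{in the direction of the (virtual) root}, even when the subtree hanging below that node is tiny, so the total space can be $\Theta(n^{1+\delta/8})$. Your closing step ``$O(n\cdot n^{\delta/8}) = O(n\cdot \poly(k))$ since $k \le n^{\delta/8}$'' has the inequality pointing the wrong way: $k \le n^{\delta/8}$ gives $\poly(k) \le \poly(n^{\delta/8})$, not the reverse, and in the regime the paper actually needs ($k = \Theta(\log n)$, to get $\widetilde{O}(n)$ global space for the $H$-decomposition) $n^{\delta/8}$ is polynomially larger than $\poly(k)$. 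The paper's fix is the probing procedure $\pro$: before exponentiating, a node computes a cheap estimate $U_{v\arr x}$ of what each direction would cost, and crucially it \emph{always} withholds the heaviest direction from the allowed set $X$ even when no direction is over threshold. This yields the per-direction guarantee $|\expo(x,k)| \le k\cdot |G^k_{v\narr x}|$, i.e., what $v$ learns through $x$ is charged against the part of the tree \emph{away from} $x$; rooting the tree arbitrarily and applying the counting observation then gives $O(n k^2)$ global space. Your scheme has no analogue of ``always exclude one direction,'' so no such charging is possible.

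Two secondary issues. First, you deactivate a direction \emph{after} merging the exponentiation result, so the transient memory is not bounded by ``squaring'': $v$ may receive $\bigcup_{w\in S_{v\arr x}} S_{w\narr r_w(v)}\cap N^{2^i-d(v,w)}(w)$, whose size is governed by the senders' total views (which you cap per direction but not in aggregate, and intermediate nodes may have huge degree), not by $|S_{v\arr x}|^2$; the paper avoids this by probing first and only then exponentiating in safe directions. Second, you never cap $|S_v|$ itself, so a node of degree $n^{0.9}$ could hold $n^{0.9+\delta/8}$ words, violating local memory; the paper declares such nodes \full and removes them from the protocol. Your correctness argument (witness direction $z=u$, induction showing that every node inside $F_{v\narr u}$ keeps its forward directions active, doubling of radius) does match the spirit of the paper's progress lemma, but the paper's version is a path-shortening argument with a fixed target radius $k$ rather than a doubling radius, precisely because directions can be blocked or withheld by the probing in some iterations, so ``the radius exactly doubles each round'' is not an invariant the final algorithm can maintain.
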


Even though \Cref{lem:balancedExp} applies to forests, we give an algorithm for trees. Since none of our arguments involve communication between nodes in disconnected components, we can execute our algorithm on every tree of a forest simultaneously in parallel. Henceforth, we assume the input graph is an $n$-node tree $G$.

\subparagraph*{Comparison with \cite{sodapaper}.} The authors of \cite{sodapaper} solve connectivity with the following core technique. Every node with a small $(\leq n^{\delta/8})$ subtree learns said subtree in $O(\log D)$ rounds, where $D=\diam(F)\leq n^{\delta/8}$. Their technique is intertwined with solving connectivity and cannot be black-boxed. Additionally, they have hardcoded parameter $D$ (more precisely $\hat{D}$ as the guess for $D$) into the algorithm, which makes it hard to modify.

More formally, their guarantees are as follows. Combining \cite[Lemmas 4.20, 4.24]{sodapaper} and deciphering their definitions gives that, if $\diam(F)=D\leq n^{\delta/8}$, every node $v \in G$ with a neighbor $u \in N(v)$ such that $|G_{v \narr u}|\leq n^{\delta/8}$ learns $G_{v \narr y}$ such that $|G_{v \narr y}| \leq n^{\delta/8}$. Observe that $y$ does not necessarily equal $u$. Their procedure requires $O(\log D)$ low-space \mpc rounds and $O(n \cdot D^3)$ total space.

Our result (\Cref{lem:balancedExp}) follows from extracting their technique from the connectivity solution, parametrizing it with $k$ (instead of hardcoding it to $D$), and obtaining the following claim. Every node $v \in G$ with a neighbor $u \in N(v)$ such that $|G_{v \narr u}|\leq n^{\delta/8}$ learns $G^k_{v \narr y} \coloneqq N^k(v) \cap G_{v \narr y}$. Similarly to \cite{sodapaper}, $y$ does not necessarily equal $u$.

Even though the guarantees of \Cref{lem:balancedExp} and the technique of \cite{sodapaper} look similar, there is a fundamental difference between them, resulting in the need for new analysis. The core idea in both procedures is to let nodes explore the graph such that in total, nodes do not learn too much, i.e., global space is not to large. In the global space analysis, roughly speaking, the idea is to imagine a rooting, and show that what nodes learn is in the order of the size of their subtree. In \cite{sodapaper}, a node $v$ is allowed to learn its $D$-hop neighborhood in \emph{any} $\deg(v)-1$ directions. This is fine in terms of global space, because this implies that node $v$ has succesfully learned its whole subtree. If we replace $D$ with a variable $k$, as in our generalized procedure, this claim no longer holds. If a node $v$ learns its $k$-hop neighborhood any $\deg(v)-1$  directions, we may have broken the global space analysis because $v$ could have learned too many nodes in the direction of the root.

\subparagraph*{High level overview.} 

The algorithm consists of $O(\log k)$ iterations, in each of which nodes perform a carefully designed exponentiation procedure. The aim is for all important nodes to become \comp (defined formally later), i.e., learn their $k$-hop neighborhood in all directions except at most one. 

The greatest challenge is abiding by the global space constraint, which is roughly speaking ensuring that $\sum_{v \in V} |S_v|$ is less than $O(n \cdot \poly(k))$. If there were no memory constraints and every node could perform a uniform exponentiation step in every iteration of the algorithm, i.e., execute $\expo(N(v),k)$, after $O(\log k)$ iterations \emph{all} nodes would learn their entire $k$-hop neighborhood and we would have achieved our goal. However, uniform exponentiation could potentially result in every node learning up to $n^\delta$ nodes, and as a result breaking exceed the allowed global space. Another difficulty is that, initially, no node $v \in G$ knows whether or not it is important, i.e., a node does not even know whether it has to learn most of its $k$-hop neighborhood or not. 

Hence, we need to steer the exponentiation with some strong invariant in order to abide by the memory constraints, yet we need to ensure that each important node still makes sufficient progress in learning large parts of its $k$-neighborhood. In order to achieve both goals at the same time, we perform careful probing for the number of nodes into all directions of node $v$ to determine in which directions we can safely exponentiate without using too much space (globally speaking). In the probing procedure $\pro$ (see \Cref{lem:mainProbing} and \Cref{sec:probingProcedure} for details), a node $v$ computes value $U_{v \arr x} = \sum_{w \in S_{v \arr x}} |S_{w \narr r_w(v)} \cap N^{k-d(v,w)}(w)|$  for every neighbor $x \in N(v)$ as an estimate for the number of nodes it may learn when exponentiating towards $x$. Note the similarities between the definitions of $U_{v \arr x}$ and \expo. The intuition is to prevent nodes from exponentiating in the directions that contain too many nodes by first perform probing. The estimate returned by the probing may be inaccurate and may contain doublecounting. Still, we ensure that every important node, in every iteration, makes progress in all but one direction. Next, we detail on the high level intuition for why this balanced exponentiation meets the global space requirements and why it makes sufficient progress. 

\subparagraph*{Global Space.}
By steering the exponentiation via the probing 
we uphold the following informal invariant at all times for each node $v\in G$:

\vspace{5mm}
\begin{center}
	\begin{minipage}{0.85\textwidth}
		\centering
		\begin{framed}
			\textit{For any node $y \in G$, the number of nodes that node $v$ sees in direction $y$ is bounded by $\poly(k) \cdot |G^k_{v \narr y}|$}
		\end{framed}
	\end{minipage}
\end{center}
\vspace{4mm}

Extending the procedures of \cite{sodapaper} such that all nodes learn sufficiently many nodes in sufficiently many directions while  meeting the invariant above is the main contribution of this section. In particular, our algorithm runs only for $O(\log k)$ rounds as opposed to $O(\log \diam(F))$ rounds as the algorithms in \cite{sodapaper}, complicating the proofs for the measure of progress. 
Anyhow, given the aforementioned invariant the global space bound can be deduced with the following observation from \cite{sodapaper}.

\begin{observation} \label{obs:globalmemory}
	If a tree is rooted, and every node $v$ counts the number of nodes in its subtree (subgraph rooted at $v$) up to distance $k$, all nodes have collectively counted  $n \cdot (k+1)$ nodes.
\end{observation}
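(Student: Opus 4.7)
The plan is a simple double-counting argument. I would first swap the order of summation, so that the total count equals
\[
\sum_{v \in V(F)} \bigl|\{u : u \text{ in subtree of } v,\; d_F(u,v) \le k\}\bigr| \;=\; \sum_{u \in V(F)} \bigl|\{v : u \text{ in subtree of } v,\; d_F(u,v) \le k\}\bigr|.
\]
Thus it suffices to bound, for each fixed node $u$, the number of nodes $v$ that include $u$ in their depth-$k$ subtree count.

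Next I would argue that $u$ lies in the subtree of $v$ if and only if $v$ is an ancestor of $u$ in the chosen rooting (where we regard $u$ as an ancestor of itself). Combining this with the distance constraint $d_F(u,v) \le k$, the relevant nodes $v$ are precisely the ancestors of $u$ lying on the unique $u$-to-root path at distances $0, 1, \ldots, k$. Since a path of length $\ell$ has $\ell+1$ vertices, at most $k+1$ such ancestors exist (and possibly fewer, if $u$ is within distance less than $k$ of the root). Summing this bound over all $u \in V(F)$ yields a total of at most $n(k+1)$, as claimed.

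There is essentially no obstacle to this argument; the observation is really a bookkeeping statement rather than a structural fact. The only point one has to get right is the equivalence ``$u \in \mathrm{subtree}(v) \iff v$ is an ancestor of $u$'', which is immediate from the definition of the subtree rooted at $v$, and the trivial fact that the ancestors of $u$ within distance $k$ form a prefix of a single root-to-$u$ path.
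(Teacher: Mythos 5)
Your proof is correct and is essentially the same double-counting argument as the paper's: the paper likewise observes that a node $x$ is counted only by the nodes on the path from $x$ to the root within distance $k$ (i.e., its at most $k+1$ ancestors, counting $x$ itself), and sums over all $x$. Your write-up merely makes the swap of summation order and the ancestor characterization explicit.
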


\begin{proof}
	A node $x$ is counted, at most, only by the first $k$ nodes on the unique path from $x$ to the root. Hence, every node is overcounted at most $k+1$ times.
\end{proof}
Now, to prove that our algorithm meets the global space requirement of $O(n \cdot \poly(k))$ let us consider (for the sake of analysis) that the graph is rooted at an arbitrary node $y\in G$. Now, if every node would learn \emph{everything} (up to distance $k$) in all directions but in the direction of $y$, that is, node $v$ learns $G^k_{v\narr y}$, then \Cref{obs:globalmemory} ensures that the global space is upper bounded by $n \cdot (k+1)$ words. Now, the invariant shows that all information that nodes learn in the direction of $y$ is in the same order of magnitude, except for a $\poly (k)$ factor.

\subparagraph*{Local Space Per Machine.} Basically, we ensure that high degree nodes that are store (along with their edges) on multiple machines do not partake in the algorithm. For every other node $v \in G$ we ensure that set $S_v$ along with other information of $v$ is stored on the same machine as $v$. We achieve this by controlling the size of $S_v$: exponentiating in a direction $x$ should yield at most $n^{2\eps}$ nodes, which combined with the small degree of $v$ ensures our goal of $O(n^\delta)$ space per machine.

\subparagraph*{Measure of Progress.} We first show that for an important node $v\in G$, a large portion of its $k$-hop neighborhood consists of other important nodes. Then, we show that all important nodes exponentiate in all direction but one in every iteration, leading to exponential exploration speed in the distance between important nodes in $G$.

\subsubsection{Algorithm -- Detailed}
During the execution of the algorithm, each node $v\in G$ and all of its directions can be in different states, depending on the current knowledge of $v$, i.e., set $S_v$. More concretely, we say a direction  $x\in N(v)$ is in one of the following states (with regards to $v$).

\begin{itemize}
	\item \comp:  if $G^k_{v\arr x}\subseteq S_v$
	\item \blocked: If $\sum_{w \in S_{v \arr x}} \left|S_{w \narr r_w(v)} \cap N^{k-d(v,w)}(w)\right| > n^{\eps}$
	\item \act: if it is neither \comp nor \blocked
\end{itemize}

Throughout the algorithm, each node $v\in G$ maintains sets of \comp, \blocked, and \act  directions as  $C_v, B_v\subseteq N(v)$ and  $A_v =N(v)\setminus (C_v\cup B_v)$, respectively. 

So, intuitively a direction is \comp if node $v$ already knows everything in that direction that it wants to learn. The technical term in the definition of a \blocked direction stems, e.g., from the fact that our probing internally uses the left hand side term as its estimate. So, if that estimate is too large, we don't want node $v$ to be learning anything extra in that direction.

Similarly, node $v$ has the following states, that depend on the content of $S_v$. 

\begin{itemize}
	\item \comp: if its \comp for $\deg(v)-1$ directions
	\item \full: $|S_v| > n^{3\eps}$
	\item \act: if it is neither \comp nor \full
\end{itemize}

Note that an important node which becomes \comp knows everything that it wants to learn in order to satisfy \Cref{lem:balancedExp}. Also observe that any important node can never become \full because it can never learn $> n^{3\eps}$ nodes: (i) it holds that $|G_{v \narr u}|<n^\eps$ for some particular neighbor $u\in N(v)$ so $v$ cannot learn more than $n^\eps$ nodes in those directions and (ii) node $v$ is manually prevented from learning more than $n^{2\eps}$ nodes in any direction (including $u$). 

In \Cref{sec:probingProcedure}, we present the full procedure of \pro and prove the following lemma. 

\begin{restatable}[\pro]{lemma}{lemMainProbing}
	\label{lem:mainProbing}
	Let $0 < k \leq n^{\delta/8}$ be a parameter that may or may not be constant. In an arbitrary iteration $j$ of \Cref{alg:balancedExp}, $\pro(k,B_v)$ returns  a set of directions $\hds \subseteq N(v)$ and a set $X\subseteq  N(v)\setminus \hds$ of \emph{allowed directions} that ensure the following holds:
	
	\begin{enumerate}
		\item[(i)] $|\expo(x,k)| \leq k \cdot n^\eps$ for every $x\in X$, 
		\item[(ii)] $|\expo(x,k)| \leq k \cdot |G^k_{v \narr x}|$, for every $x\in X$, and
		\item[(iii)] $|X|\geq \deg(v)-1$ whenever $v$ is an important node. 
	\end{enumerate}
	
	The procedure can be implemented in $O(1)$ low-space \mpc rounds, in the same global space as in iteration $j$ of \Cref{alg:balancedExp}.
\end{restatable}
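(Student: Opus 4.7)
The plan is to implement $\pro(k, B_v)$ as a ``mass-reporting'' procedure that computes the estimator
\[
U_{v \arr x} \;=\; \sum_{w \in S_{v \arr x}} \bigl|S_{w \narr r_w(v)} \cap N^{k - d(v,w)}(w)\bigr|
\]
for every $x \in N(v) \setminus (B_v \cup C_v)$, then sets $\hds$ to be exactly the directions whose estimator exceeds the ``blocked'' threshold, and sets $X$ to be all remaining non-\comp, non-\blocked directions. Since every $w$ already stores $S_w$ locally and knows $d(v,w)$ and $r_w(v)$ from the previous round's knowledge, each $w$ can compute its contribution $|S_{w \narr r_w(v)} \cap N^{k-d(v,w)}(w)|$ without communication. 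Aggregating these contributions to $v$, grouped by the direction $r_v(w) \in N(v)$, is a single application of the standard sort/prefix-sum primitives of Goodrich et al.\ in $O(1)$ rounds, and the total number of messages is bounded by $\sum_{v} |S_v|$, which is already charged to the iteration's global space.

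Property (i) then follows essentially for free: $|\expo(\{x\}, k)| \le U_{v \arr x}$ by a union bound over the definition of $\expo$, and the threshold defining $\hds$ ensures $U_{v \arr x} \le k \cdot n^\eps$ for every $x \in X$. For property (ii) I would argue by a counting/overcounting lemma on the tree: every node $z \in G^k_{v \arr x}$ lies on a unique $v$--$z$ path of length at most $k$, and in the sum defining $U_{v \arr x}$ it is counted at most once per ancestor $w$ on that path that belongs to $S_{v \arr x}$. There are at most $k$ such ancestors, so $z$ contributes at most $k$ times, giving $U_{v \arr x} \le k \cdot |G^k_{v \arr x}|$, which (modulo the $\arr/\narr$ notation in the statement) is the needed inequality. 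The careful bookkeeping here is the step where I expect the most friction --- the algorithm's intermediate sets $S_w$ can be strictly smaller than $G_{w}$, so one must be explicit that the overcounting is against the true graph and not against the possibly-incomplete knowledge sets.

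For property (iii), suppose $v$ is important and let $u \in N(v)$ be the neighbor with $|G_{v \narr u}| \le n^\eps$ guaranteed by \Cref{def:importantNode}. By \Cref{obs:importantOneHeavyDirection}, for every $x \in N(v) \setminus \{u\}$ we have $|G_{v \arr x}| \le n^\eps$, and hence $|G^k_{v \arr x}| \le n^\eps$ as well. Plugging into the bound established in the preceding paragraph, $U_{v \arr x} \le k \cdot n^\eps$ for every such $x$, which keeps $x$ below the probing threshold and prevents it from being added to $\hds$. Combined with the fact that $C_v$ already satisfies the desired conclusion, we get $|X \cup C_v| \ge \deg(v) - 1$, and after observing that completed directions can be folded into $X$ at no cost (they trivially satisfy (i)--(ii) with $\expo = \emptyset$), we obtain $|X| \ge \deg(v) - 1$ as required.

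The main obstacle I anticipate is the precise constant/threshold calibration in the definition of $\hds$ so that properties (i) and (iii) are simultaneously satisfied --- the ``blocked'' threshold in the state definition is $n^\eps$, while property (i) promises only $k \cdot n^\eps$, so one must choose the probing threshold carefully (likely $n^\eps$, using that for important $v$ the relevant $k \cdot |G^k_{v\arr x}|$ never attains the threshold because $|G_{v\arr x}|$ is a factor of $k$ smaller, or by a more refined $U$ with the correct $k$ accounting). Once this threshold is pinned down, the three properties follow as sketched above, and the round and space bounds are immediate from the $O(1)$-round implementation.
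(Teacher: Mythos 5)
Your implementation, your argument for (i), and your argument for (iii) essentially match the paper's. The genuine gap is in property (ii), and it is not the threshold calibration you flag at the end --- it is the $\arr$ versus $\narr$ distinction that you dismiss as ``notation''. Property (ii) bounds $|\expo(x,k)|$ by $k \cdot |G^k_{v \narr x}|$, the size of the $k$-neighborhood in \emph{all directions other than} $x$; what your overcounting argument yields is $U_{v \arr x} \leq k \cdot |G^k_{v \arr x}|$, a bound in terms of direction $x$ itself. These are very different statements: the $\narr$ version is exactly what feeds the global-space analysis (root the tree at $y$ and sum $|G^k_{v \narr r_v(y)}|$ over all $v$ to get $O(nk)$ via \Cref{obs:globalmemory}), whereas $\sum_{v}\sum_{x}|G^k_{v \arr x}|$ is not controllable. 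Moreover, your construction of $X$ cannot satisfy the real (ii): if $v$ has one direction $x_1$ with $|G^k_{v \arr x_1}|$ large but below the blocking threshold and another direction $x_2$ with $|G^k_{v \arr x_2}|$ tiny, your scheme puts $x_1 \in X$, yet $|\expo(x_1,k)|$ can far exceed $k \cdot |G^k_{v \narr x_1}| = k\cdot(|G^k_{v \arr x_2}|+1)$.

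The missing idea is in the algorithm itself: when $\hds = \emptyset$, the paper's procedure \emph{additionally} removes the heaviest direction $y = \argmax_u U_{v \arr u}$ from $X$, so that $X$ always excludes some direction $b$ with $U_{v \arr b} \geq U_{v \arr x}$ (or, if $b$ comes from $\hds$, with $|G^k_{v \arr b}| > n^\eps$) for every allowed $x$. Property (ii) is then proved \emph{through this excluded direction}: $|\expo(x,k)| \leq U_{v \arr x} \leq U_{v \arr b} \leq k \cdot |G^k_{v \arr b}| \leq k \cdot |G^k_{v \narr x}|$, the last step using $G^k_{v \arr b} \subseteq G^k_{v \narr x}$ since $b \neq x$. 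This exclusion costs nothing for (iii), since $|X| = \deg(v)-1$ still holds, and it is the whole reason the global-space invariant survives. Your counting lemma ($U_{v \arr x} \leq k\cdot|G^k_{v \arr x}|$, each node overcounted once per ancestor in $S_{v\arr x}$) is correct and is indeed the ingredient the paper uses both inside this chain and in the proof of (iii) --- it is just not, by itself, the statement of (ii).
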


\Cref{alg:balancedExp} achieves the claims of \Cref{lem:balancedExp}.

\begin{algorithm}[t]
	\caption{Balanced Exponentiation}\label{alg:balancedExp}
	\begin{algorithmic}[1]
		
		\Function{\be}{$k$} 
		\State{Every node $v$ becomes \act. Initialize $S_v \larr N(v) \cup \{v\}$ and $C_v,B_v \larr \emptyset$.}
		\For{$j = 1,2,\ldots,O(\log k)$}
		\If{$|S_v| > n^{3\eps}$,}
		\State{$v$ becomes \full}
		\EndIf
		
		\State{\act node $v$ invokes $\hds, X  \larr \pro(k,B_v)$}
		
		\State{\act node $v$ updates $B_v \larr \hds$}
		
		\State{\act nodes $v$ updates $S_v\larr \{v\}\cup \expo(X,k)$} \phantomsection\label{l:exp}
		
		\If{$v \in S_{w \arr r_w(v)}$ but $w \not\in S_{v \arr r_v(w)}$ for some nodes $w \in G$} \phantomsection\label{l:forSymmetry}
		
		\If{adding nodes $w$ to $S_{v \arr r_v(w)}$ results in $|S_{v \arr r_v(w)}| \leq n^\eps$} \phantomsection\label{l:condition}
		
		\State{add nodes $w$ to $S_{v \arr r_v(w)}$} \phantomsection\label{l:add}
		\Else{~update $B_v \larr B_v \cup \{r_v(w)\}$} \phantomsection\label{l:block}
		
		\EndIf
		\EndIf
		
		\State{\act nodes update $C_v \larr \{x \in N(v) \mid G^k_{v \arr x} \subseteq S_{v \arr x}\}$}
		
		\If{$|C_v| \geq \deg(v)-1$}
		\State{node $v$ becomes \comp}
		\EndIf
		
		\If{an direction $b$ has been exponentiated towards 3 times after $J= \lceil  \log_{5/6} (5/k) \rceil$ iterations without $b$ becoming \comp} \phantomsection\label{l:blockIfNotComp}
		\State{update $B_v \larr B_v \cup \{b\}$}
		\EndIf
		\EndFor

		\State{\Return{$\{S_v \mid v \in G \}$}}	
		\EndFunction

	\end{algorithmic}
\end{algorithm}

\Cref{l:blockIfNotComp} is due to a technicality in the measure of progress (\Cref{lem:importantKnow}). The value $J$ is deduced in \Cref{lem:heavyDirClose}, and the fact that direction $b$ fulfills the requirements of a blocked direction is shown in \Cref{obs:canBlock}.

At the end of \Cref{sec:localMemory}, we prove the local space bounds of \Cref{lem:balancedExp}. In \Cref{sec:progressAndCorrectness}, we discuss the measure of progress and prove the correctness of \Cref{lem:balancedExp} at the end of said section. At the end of \Cref{sec:globalMemory}, we prove the global space bounds of \Cref{lem:balancedExp}. The subroutine \pro of \Cref{alg:balancedExp} is introduced and proved in \Cref{sec:probingProcedure}. The missing \mpc details of \Cref{alg:balancedExp} that do not appear in \Cref{sec:localMemory,sec:progressAndCorrectness,sec:globalMemory,sec:probingProcedure} are deferred to \Cref{sec:MPCdetails}. \Cref{sec:proofOfLemma} is dedicated to proving \Cref{lem:subsets}, which gives the necessary sets to construct an $H$-decomposition in \Cref{sec:subtreeRC}.

Let us start with following lemma, which is crucial both for space bounds and the measure of progress of the algorithm.

\begin{lemma}[Properties of \Cref{alg:balancedExp}] \label{lem:activesymmetry}
	Consider an arbitrary iteration of the algorithm. For node $v\in V$ let $S_v$ ($S'_v$) denote the node's knowledge at the beginning (end) of the iteration. For all nodes $v \in G,~x \in N(v)$ the following holds.
	\begin{enumerate}
		\item At the end of every iteration, if direction $x$ is \act for node $v$ and if $v \in S_w$ for some node $w \in G_{v \arr x}$, it holds that $w \in S_{v \arr x}$.
		\item It always holds that $S_v\subseteq S_v'$.
		\item It always holds that $S'_v \subseteq N^k(v)$.
	\end{enumerate}
\end{lemma}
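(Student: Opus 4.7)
My plan is to prove all three claims jointly by induction on iterations, as they are intertwined: claim 2 is used in the inductive step for claim 1, and claim 3 is used in the inductive step for both claim 1 and itself. At initialization $S_v = N(v) \cup \{v\}$ and $B_v = \emptyset$, so all three claims hold immediately: each $u \in S_v$ is within $1 \leq k$ hops of $v$; monotonicity is vacuous; and if $v \in S_w$ for $w \in G_{v \arr x}$, then $w$ must be the neighbor $x$ itself, which lies in $S_{v \arr x}$.

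Claims 2 and 3 will be the routine steps. For claim 3, if an active node performs $S_v \larr \{v\} \cup \expo(X,k)$, then any $u \in \expo(X,k)$ arises from some $w \in S_{v \arr x}$ with $u \in N^{k - d(v,w)}(w)$; the triangle inequality combined with the inductive bound $w \in N^k(v)$ gives $d(v, u) \leq k$. For the symmetry check in lines 9--13, any $w$ added to $S_v$ satisfies $v \in S_w$, so by induction $w \in N^k(v)$. For claim 2, \full and \comp nodes do not update $S_v$. An active node's exponentiation already contains every current $u \in S_{v \arr x}$ with $x \in X$ (each such $u$ contributes itself via $S_{u \narr r_u(v)} \cap N^0(u) \ni u$), and knowledge in directions outside $X$ is preserved by the update; the symmetry check in lines 9--13 only adds nodes.

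Claim 1 is the main content, and the argument will hinge on recognizing that lines 9--13 are precisely designed to enforce this symmetry. I will assume the invariant at the start of iteration $j$: for any $v$, any $x \in N(v) \setminus B_v$, and any $w \in G_{v \arr x}$, if $v \in S_w$ then $w \in S_{v \arr x}$. At the end of iteration $j$, suppose $x \notin B'_v$ (so $x$ is active) and $v \in S'_w$ for some $w \in G_{v \arr x}$. Writing $\tilde{S}$ for post-exponentiation values, the symmetry check inspects every pair $(v, w)$ with $v \in \tilde{S}_w$ and $w \notin \tilde{S}_v$ and either adds $w$ to $S_{v \arr r_v(w)}$ (when the resulting size stays $\leq n^\eps$) or adds $r_v(w)$ to $B_v$. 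Since $x \notin B'_v$, the blocking branch did not fire for $x$, so $w \in S'_{v \arr x}$. The subcase where $v$ ends up in $S'_w$ only because the symmetric check added it there actually requires $w \in \tilde{S}_v$ in the first place, and then monotonicity (claim 2) immediately yields $w \in S'_{v \arr x}$.

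The main obstacle I anticipate is handling a direction $x$ that was in $B_v$ at the start of iteration $j$ (so the inductive hypothesis does not apply to $x$ there) but is removed from $B_v$ during the iteration because \pro returns a $\hds$ not containing $x$. The argument above absorbs this case uniformly: whatever the starting status of $x$, the post-exponentiation symmetry check sees the current state and either fixes every asymmetry or re-blocks the offending direction, so $x \notin B'_v$ forces $w \in S'_{v \arr x}$ by a direct reading of the rule on line 13.
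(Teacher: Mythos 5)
Your proof is correct and follows essentially the same route as the paper: parts 2 and 3 via the self-inclusion of $\expo$ and the triangle inequality, and part 1 via the observation that the symmetry-repair lines either add the missing node or block the direction, so a direction that is not blocked at the end of an iteration carries no asymmetry. The "un-blocking" obstacle you guard against never arises, since $\pro$ returns $\hds \supseteq B_v$ and blocked directions therefore stay blocked, but your uniform handling of it is harmless.
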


\begin{proof}[Proof of 1.]
	Let us prove the claim for an arbitrary iteration $j$. After \Cref{l:exp} in any iteration $i<j$, if some node $w \in G$ added $v$ to $S_w$ without $v$ adding $w$ to $S_v$, we enter \Cref{l:forSymmetry}. Because direction $x$ is \act in iteration $j$, it was not blocked in iteration $i$ in \Cref{l:block}. Hence, \Cref{l:forSymmetry} always leads to \Cref{l:add} and $w$ will be added to $S_v$.
\end{proof}

\begin{proof}[Proof of 2.]
	Set $S_v$ gets updated only in \Cref{l:exp,l:add}. In the latter, it clearly cannot lose elements. For the former, let us recall what $\expo(x,k)$ returns for a direction $x \in N(v)$:
	\begin{align*}
		\bigcup_{w \in S_{v \arr x}} S_{w \narr r_w(v)} \cap N^{k-d(v,w)}(w)~.
	\end{align*}
	
	Because $w \in S_{w \narr r_w(v)}$, the set above includes everything in $S_{v \arr x}$. Because 
	\begin{align*}
		\expo(X,k)= \bigcup_{x \in X} \expo(x,k) \supseteq S_v \setminus \{v\}
	\end{align*} 
	
	and the union in \Cref{l:exp} includes $v$ itself, the claim holds.
\end{proof}

\begin{proof}[Proof of 3.]
	Set $S_v$ gets updated only in \Cref{l:exp,l:add}. By the definition of $\expo$, \Cref{l:exp} cannot result in $S_v$ containing some node $w \not\in N^k(v)$. In \Cref{l:add} node $v$ adds node $w$ to $S_v$ only if $v \in S_w$, which already implies that $d(v,w)\leq k$.
\end{proof}

When a node $v \in G$ stores $w$ in $S_v$, in practice it means that $v$ stores the tuple 
\begin{align}
	t_v(w) = (v,r_v(w),w,r_w(v),d(v,w))
\end{align} 

in $S_v$. The following lemma ensures that a tuple stored in $S_v$ is always valid. Observe that in a tree, a valid tuple is unique.

\begin{lemma} \label{lem:validTuple}
	Let $v \in G$. For every $w \in S_v$, tuple $t_v(w)$ is valid.
\end{lemma}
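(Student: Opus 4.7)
The plan is to proceed by induction on the iterations of \Cref{alg:balancedExp}, showing that at every point in time each node $v$ stores a valid tuple $t_v(w)$ for every $w \in S_v$. The base case follows from the initialization $S_v \larr N(v) \cup \{v\}$: for each $u \in N(v)$ the tuple $(v, u, u, v, 1)$ is obviously valid, and the tuple for $v$ itself has distance $0$ (and no meaningful direction component).

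For the inductive step, I would inspect the only two places where $S_v$ is modified, namely \Cref{l:exp} and \Cref{l:add}. The case of \Cref{l:add} is the easier one: $w$ is added to $S_v$ precisely because $v \in S_w$ already, and the inductive hypothesis applied to $w$ then yields a valid tuple $t_w(v) = (w, r_w(v), v, r_v(w), d(w, v))$ stored at $w$; the node $v$ simply reverses the entries to obtain its own valid tuple $t_v(w) = (v, r_v(w), w, r_w(v), d(v, w))$. In \Cref{l:exp}, a new node $u$ joins $S_v$ via some $x \in X$ and some $w \in S_{v \arr x}$ with $u \in S_{w \narr r_w(v)}$ and $d(w, u) \leq k - d(v, w)$. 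By induction $v$ holds $t_v(w) = (v, x, w, r_w(v), d(v, w))$ and $w$ holds $t_w(u) = (w, r_w(u), u, r_u(w), d(w, u))$ with $r_w(u) \neq r_w(v)$. Since $G$ is a tree, the unique $v$-to-$u$ path must pass through $w$ (otherwise concatenating the $v$-$w$ path exiting $v$ via $x$ and entering $w$ via $r_w(v)$ with the $w$-$u$ path leaving $w$ via $r_w(u) \neq r_w(v)$ would force two distinct paths between two vertices, contradicting acyclicity). It follows that $r_v(u) = x$, $r_u(v) = r_u(w)$, and $d(v, u) = d(v, w) + d(w, u)$, all of which are recoverable from the components of $t_v(w)$ and $t_w(u)$ that $v$ has access to after the exponentiation. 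This yields the desired valid tuple $t_v(u) = (v, x, u, r_u(w), d(v, w) + d(w, u))$.

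The main obstacle is the bookkeeping in \Cref{l:exp}: one must explicitly verify that the local components computed at $w$, when combined with $v$'s own knowledge of $r_v(w)$, produce a consistent global description of the $v$-$u$ path, and in particular that the addition $d(v, w) + d(w, u)$ really equals $d(v, u)$ in $G$. This ultimately reduces to path uniqueness in a tree: once the forward direction $r_w(u)$ is known to differ from the backward direction $r_w(v)$, the concatenation of the $v$-to-$w$ and $w$-to-$u$ paths is forced and hence shortest. Uniqueness of the valid tuple, implicit in the lemma statement, follows from the same path-uniqueness observation.
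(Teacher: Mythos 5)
Your proposal is correct and follows essentially the same route as the paper's proof: induction over the two update sites, tuple reversal for \Cref{l:add}, and composing $t_v(w)$ with $t_w(u)$ for \Cref{l:exp}, where validity rests on path uniqueness in a tree. You spell out slightly more explicitly than the paper why $r_w(u)\neq r_w(v)$ forces the $v$--$u$ path through $w$ (hence $r_v(u)=x$, $r_u(v)=r_u(w)$, and additivity of distances), but this is the same argument the paper invokes.
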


\begin{proof}
	When initializing $S_v \larr N(v) \cup \{v\}$, we add $(v,w,w,v,1)$ for $w \in N(v)$ and $(v,v,v,v,0)$ for $v$, which are valid. Consider nodes $x,y,z$ such that $y \in S_x$, $z \in S_y$, $z \not\in S_x$, and $x$ performing an exponentiation step towards $r_x(y)$. Let us show how $v$ can compute a valid tuple $t_x(z)$ by only using its own tuple $t_x(y)$ and $y$'s tuple $t_y(z)$. 
	\begin{align*}
		\left.
		\begin{cases}
			~t_x(y)= (x,r_x(y),y,r_y(x),d(x,y)) \\
			~t_y(z)= (y,r_y(z),z,r_z(y),d(y,z))
		\end{cases}
		\hspace{-3mm}
		\right\} ~\xrightarrow[~v~]{}~t_x(z)= (x,r_x(y),z,r_z(y),d(x,y)+d(y,z))
	\end{align*}
	
	The validity of the above is based on the fact that in a tree, a path between $x$ and $z$ is unique. In \Cref{l:add} of the algorithm, node $v$ may add $w$ to $S_v$, if $v \in S_w$. This operation is even simpler that the previous one.
	\begin{equation*}
		\left.
		\begin{cases}
			~t_w(v)=(w,r_w(v),v,r_v(w),d(w,v))\\
		\end{cases}
		\hspace{-3mm}
		\right\} ~\xrightarrow[~v~]{}~ t_v(w)=(v,r_v(w),w,r_w(v),d(w,v)) \qedhere
	\end{equation*}
	
\end{proof}

The following lemma is key in order for nodes to correctly deduce their states and the states of their directions in \Cref{lem:learnLabel}.

\begin{lemma} \label{lem:learnInducedGraph}
	A node $v \in G$ can determine the graph $G[S_v]$ in $O(1)$. 
\end{lemma}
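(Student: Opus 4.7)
The plan is to have $v$'s machine reconstruct $G[S_v]$ from the tuples it already carries, with no further communication; the round complexity is then trivially $O(1)$. By \Cref{lem:validTuple}, for every $w \in S_v$ the machine stores a valid tuple $t_v(w) = (v, r_v(w), w, r_w(v), d(v,w))$, so in particular it knows, for every member of $S_v$, the identity of that member's neighbour along the unique path back to $v$.

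The key observation I would formalize is that, rooting the tree containing $v$ at $v$ itself, $r_w(v)$ is precisely the parent of $w$, and every edge of a rooted tree has the form $\{w, \operatorname{parent}(w)\}$ for some non-root $w$. Hence
\begin{equation*}
E(G[S_v]) \;=\; \bigl\{\{w, r_w(v)\} \;:\; w \in S_v\setminus\{v\},\ r_w(v) \in S_v\bigr\}.
\end{equation*}
The edges $\{v, w\}$ for $w \in N(v) \cap S_v$ are not a special case: the corresponding tuple satisfies $r_w(v) = v$, and $v \in S_v$ always holds by initialization, so they are picked up automatically. Conversely, any edge of $G$ whose endpoints both lie in $S_v$ must connect some non-root $w$ to its parent $r_w(v)$, so nothing is missed.

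Algorithmically, $v$'s machine first scans its tuples to build a hash set $I$ of the third coordinates (the IDs present in $S_v$), and then scans the tuples once more, emitting $\{w, r_w(v)\}$ whenever $w \neq v$ and $r_w(v) \in I$. Since $|S_v|$ is controlled throughout \Cref{alg:balancedExp} so that all of $v$'s tuples fit on a single machine, this is a purely local linear-time scan with zero rounds of communication, hence $O(1)$ \mpc rounds. I do not foresee a real obstacle: the only subtlety is confirming that the ``$r_w(v) \in I$'' test correctly excludes exactly the parent edges of those $w$ whose parent has not (yet) been learned by $v$, which is immediate from the characterization of $E(G[S_v])$ above and the correctness of the stored parent pointer $r_w(v)$ guaranteed by \Cref{lem:validTuple}.
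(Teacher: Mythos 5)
Your proof is correct and follows essentially the same route as the paper's: both characterize $E(G[S_v])$ as the set of pairs $\{w, r_w(v)\}$ with $w, r_w(v) \in S_v$, read off directly from the fourth coordinate of the stored tuples, and observe that the reconstruction is a purely local computation. Your rooted-at-$v$ phrasing and the explicit scan are just a more detailed rendering of the paper's tuple-matching condition.
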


\begin{proof}
	A node $v$ storing another node $w \in S_v$ in memory actually stores tuple $t_v(w)$. There is an edge $\{a,b\}$ in $G[S_v]$ if and only if there exist tuples $t_v(a)$ and $t_v(b)$ in $S_v$ such that
	\begin{align*}
		t_v(a) &= (v,\ast,a,b,\ast) \\
		t_v(b) &= (v,\ast,b,\ast,\ast)~,
	\end{align*}
	
	where $\ast$ means that the entry can be arbitrary. The tuples above imply that nodes $a,b \in G[S_v]$ and that the edge adjacent to $a$ that is on the unique path from $a$ to $v$ is also adjacent to $b$. This covers all edges in the node-induced subgraph $G[S_v]$.
\end{proof}

Hence, we get the following lemma. 

\begin{lemma} \label{lem:learnLabel}
	A node can determine if a direction $x \in N(v)$ is \comp. A node can determine whether or not it is \full or \comp. This takes $O(1)$ rounds.
\end{lemma}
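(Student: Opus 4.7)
The plan is to reduce each of the three determinations to $O(1)$-time local computation on the single machine holding $v$'s state, leveraging the immediately preceding \Cref{lem:learnInducedGraph} which lets $v$ reconstruct $G[S_v]$ in $O(1)$ rounds. I will also assume that whenever a node $w$ enters $S_v$, $v$ additionally learns the original degree $\deg_G(w)$; this is cheap to maintain because every node initially knows the degrees of its neighbors and $\deg_G(w)$ can simply be appended to the tuple $t_v(w)$ as it is propagated through \expo and the symmetry-repair step in \Cref{l:add}, contributing only $O(1)$ extra words per tuple.

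The heart of the argument is a local characterization of \comp directions. I claim that a direction $x \in N(v)$ is \comp, i.e.\ $G^k_{v\arr x} \subseteq S_v$, if and only if every $w \in S_{v\arr x}$ with $d(v,w) < k$ has exactly $\deg_G(w) - 1$ neighbors in $G[S_v]$ lying in $S_{v\arr x}$ (equivalently, all neighbors of $w$ other than $r_w(v)$ belong to $S_v$). The forward direction is immediate, since such outward neighbors sit at distance $d(v,w)+1 \leq k$ from $v$ in direction $x$ and therefore belong to $G^k_{v\arr x} \subseteq S_v$. The reverse direction is a distance induction: if every node of $S_{v\arr x}$ at distance $\leq i < k$ already has all of its outward neighbors in $S_{v\arr x}$, then every node of $G^k_{v\arr x}$ at distance $\leq i+1$ also lies in $S_{v\arr x}$, and iterating up to $i = k-1$ covers all of $G^k_{v\arr x}$. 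Since $v$ knows the induced subgraph $G[S_v]$ by \Cref{lem:learnInducedGraph}, can partition $S_v$ into the sets $S_{v\arr x}$ via the $r_v(\cdot)$-field of each tuple, and possesses both $d(v,w)$ and $\deg_G(w)$ for every $w \in S_v$, this test is entirely local.

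Given the direction test, the two node-level determinations are immediate: checking \full reduces to the single comparison $|S_v| > n^{3\eps}$, and checking \comp reduces to counting how many of $v$'s $\deg_G(v)$ directions satisfy the \comp test and comparing the count against $\deg_G(v) - 1$. Both are $O(1)$ local operations on the machine storing $v$'s state, so combined with the $O(1)$-round reconstruction of $G[S_v]$ the total cost is $O(1)$ rounds. The only subtlety, and thus the main (minor) obstacle, is that the five-entry tuple of \Cref{lem:validTuple} has no field for $\deg_G(w)$; this is trivially resolved by extending the tuple as described in the first paragraph, which does not affect any of the local or global space bounds already proved.
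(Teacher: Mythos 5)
Your proposal is correct and follows essentially the same route as the paper: reconstruct $G[S_v]$ via \Cref{lem:learnInducedGraph}, decide that a direction $x$ is \comp by checking that every known node within distance $k-1$ in that direction has all of its neighbors present (i.e.\ its degree in $G[S_v]$ matches $\deg_G$), and reduce the \full and node-level \comp checks to counting; the paper obtains $\deg_G(w)$ by an extra $O(1)$-round query rather than by augmenting the tuples, which is an immaterial difference. One tiny slip: for an internal node $w \neq x$ of a \comp direction the neighbor $r_w(v)$ also lies in $S_{v \arr x}$, so such a $w$ has $\deg_G(w)$ (not exactly $\deg_G(w)-1$) neighbors there — but your parenthetical restatement (``all neighbors of $w$ other than $r_w(v)$ belong to $S_v$'') and your distance induction use the correct condition, so the argument stands.
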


\begin{proof}
	A direction $x$ is \comp if $G^k_{v\arr x}\subseteq S_v$. By \Cref{lem:learnInducedGraph}, node $v$ can determine the graph $G[S_v]$. Furthermore, node $v$ can compute the degrees (in $G[S_v]$) of all nodes in $S_v$. Node $v$ can also query the degrees (in $G$) of all nodes $S_v$. If the degrees in $G[S_v]$ matches the degrees in $G$ of all nodes within distance $k-1$, it holds that $G^k_{v\arr x}\subseteq S_v$ and $x$ is \comp. If $\deg(v)-1$ directions are \comp, so is $v$. Node $v$ is full if $|S_v|>n^{3\eps}$. Hence, $v$ just needs to count the number of elements in $S_v$.
\end{proof}

\subsection{Local Space Bound} \label{sec:localMemory}
Informally, \Cref{lem:notLearnTooMuch} (presented below) shows that a node $v \in G$ cannot learn too many nodes in a single direction. Note that this lemma holds regardless of the number of iterations for which we execute the loop in \Cref{alg:balancedExp}. In fact, all of our space bounds would even hold if we would  continue running the process forever; instead, at some point nodes would simply turn inactive (more precisely either \full or \comp).

\begin{lemma} \label{lem:notLearnTooMuch}
	For any nodes $v \in G,~x \in N(v)$ at the end of an arbitrary iteration it holds that $|S_{v \arr x}|\leq k\cdot n^{\eps}$. 
\end{lemma}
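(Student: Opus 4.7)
The plan is to prove the bound by induction on the iteration number of \Cref{alg:balancedExp}. Initially $S_v = N(v) \cup \{v\}$, so $|S_{v \arr x}| = 1 \leq k \cdot n^\eps$ for every $x \in N(v)$. For the inductive step, assume the bound holds for every pair $(v,x)$ at the start of some iteration $j$, and analyze the two places where $S_v$ is modified during that iteration.

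The first is \Cref{l:exp}, the exponentiation update using $\expo(X,k)$ for the set of allowed directions $X$ returned by $\pro$. For each $x \in X$, \Cref{lem:mainProbing}(i) gives $|\expo(x,k)| \leq k \cdot n^\eps$ directly. The plan is to observe that $\expo(x,k)$ is automatically a superset of the old $S_{v \arr x}$, because every $w \in S_{v \arr x}$ contributes itself through the term $S_{w \narr r_w(v)} \cap N^{k-d(v,w)}(w)$ (note $w \in S_{w \narr r_w(v)}$ and $d(w,w)=0 \leq k-d(v,w)$). Consequently, whether one reads the update as an assignment or as a union, the new $S_{v \arr x}$ is contained in $\expo(x,k)$ and hence has size at most $k \cdot n^\eps$.

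The second update, \Cref{l:add}, is guarded by \Cref{l:condition} so that it fires only when the resulting $|S_{v \arr r_v(w)}|$ remains at most $n^\eps$, which is well within the target bound $k \cdot n^\eps$. This branch is therefore immediate.

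The main obstacle will be a directional isolation argument, required to handle the directions $x \notin X$ that are not refreshed by \Cref{l:exp}. The plan is to show that exponentiating toward some $y \in X$ cannot inject nodes into $S_{v \arr x}$ for any $x \neq y$. Every set contributing to $\expo(y,k)$ is of the form $S_{w \narr r_w(v)} \cap N^{k-d(v,w)}(w)$ for some $w \in S_{v \arr y}$. Since $w \in F_{v \arr y}$ and $F_{w \narr r_w(v)}$ is the component of $F$ containing $w$ after severing the edge from $w$ toward $v$, every node in $F_{w \narr r_w(v)}$ still lies in $F_{v \arr y}$. Hence no node whose $v$-direction is $x \neq y$ can enter $S_v$ during \Cref{l:exp}, and $|S_{v \arr x}|$ is unchanged for $x \notin X$, retaining the inductive bound. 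Combining the three cases closes the induction and yields $|S_{v \arr x}| \leq k \cdot n^\eps$ at the end of every iteration.
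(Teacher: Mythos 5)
Your proof is correct and follows essentially the same route as the paper's: induction over iterations, bounding the exponentiation update via guarantee~(i) of \Cref{lem:mainProbing} and noting that the manual addition in \Cref{l:add} is explicitly capped at $n^\eps$ by \Cref{l:condition}. Your additional observations --- that $\expo(x,k)$ contains the old $S_{v\arr x}$ and that exponentiating toward $y\in X$ cannot inject nodes into $S_{v\arr x}$ for $x\neq y$, so directions outside $X$ retain the inductive bound --- are correct details that the paper leaves implicit.
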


\begin{proof}
	Let node $v$ be as in the lemma statement. Proof by induction. By initialization, $|S_{v \arr x}|=1$. Assume that the claim holds in iteration $j-1$. In iteration $j$, set $S_{v \arr x}$ grows only in \Cref{l:exp,l:add}. If \Cref{l:exp} is executed, it holds that $S_{v \arr x} = \{v\} \cup \expo(x,k)$ and the claim holds due to guarantee (i) of \Cref{lem:mainProbing}. The growth due in \Cref{l:add} is capped manually in the condition of \Cref{l:condition}, which ensures that $|S_{v \arr x}| \leq n^\eps$.
\end{proof}

\begin{lemma}[Local space for important nodes] \label{lem:localMemoryImportant} 
	Consider an important node $v \in G$ and its neighbor $u \in N(v)$ for which it holds that $|G_{v \narr u}| \leq n^\eps$. Node $v$ never becomes \full, i.e., $|S_{v}| < n^{3\eps}$ at all times.
\end{lemma}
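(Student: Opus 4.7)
Let $v$ be important with witness $u\in N(v)$ satisfying $|G_{v\narr u}|\leq n^{\eps}$. I will bound $|S_v|$ by decomposing it along the directions of $v$ and estimating each piece separately. Since $G$ is a tree, the sets $F_{v\arr x}$ for $x\in N(v)$ are pairwise disjoint, and
\[
S_v \;=\; \{v\}\;\cup\;\bigcup_{x\in N(v)} S_{v\arr x},
\]
where by the very definition $S_{v\arr x}=S_v\cap F_{v\arr x}\subseteq F_{v\arr x}$.

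The plan is to split the sum into two parts: the ``light'' directions $x\in N(v)\setminus\{u\}$ and the single potentially ``heavy'' direction $u$ (uniqueness of such a direction is exactly Observation~\ref{obs:importantOneHeavyDirection}). For the light directions I will use the fact that $F_{v\narr u}=\{v\}\cup\bigcup_{x\in N(v)\setminus\{u\}}F_{v\arr x}$ is a disjoint union; combined with $|F_{v\narr u}|\leq n^{\eps}$ this gives
\[
\sum_{x\in N(v)\setminus\{u\}}|S_{v\arr x}| \;\leq\; \sum_{x\in N(v)\setminus\{u\}}|F_{v\arr x}| \;\leq\; |F_{v\narr u}|-1 \;\leq\; n^{\eps}.
\]
This step does not use the algorithm at all; it is a purely structural bound that holds throughout execution and implicitly also bounds $\deg(v)\leq n^{\eps}+1$.

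For the heavy direction $u$, I will invoke Lemma~\ref{lem:notLearnTooMuch}, which is the global ``per-direction'' cap enforced by $\pro$ and by the explicit guard in \Cref{l:condition}: it yields $|S_{v\arr u}|\leq k\cdot n^{\eps}$ at the end of every iteration. Using the hypothesis $k\leq n^{\delta/8}=n^{\eps}$ from \Cref{lem:balancedExp}, this simplifies to $|S_{v\arr u}|\leq n^{2\eps}$.

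Combining the three contributions gives
\[
|S_v| \;\leq\; 1 \,+\, n^{\eps} \,+\, n^{2\eps} \;<\; n^{3\eps}
\]
for all sufficiently large $n$, so the \full-condition $|S_v|>n^{3\eps}$ is never triggered. I don't expect a real obstacle here: the content is essentially that ``only $u$ can be heavy, and $u$ is capped.'' The one small care-point is ensuring the bounds hold at every intermediate moment within an iteration (in particular after \Cref{l:add}, not just after \Cref{l:exp}), which follows because \Cref{l:condition} explicitly rejects any addition that would push some $|S_{v\arr x}|$ above $n^{\eps}$, and in any case the containment $S_{v\arr x}\subseteq F_{v\arr x}$ is preserved at every step by Lemma~\ref{lem:activesymmetry}(3).
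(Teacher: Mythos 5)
Your proposal is correct and matches the paper's proof in all essentials: both split $S_v$ into the single potentially heavy direction $u$ (capped at $k\cdot n^{\eps}\leq n^{2\eps}$ via Lemma~\ref{lem:notLearnTooMuch}) and the rest (bounded structurally by $|G_{v\narr u}|\leq n^{\eps}$), the only cosmetic difference being that you enumerate the light directions individually while the paper bounds $|S_{v\narr u}|$ in one step. Your extra remark about the bound holding mid-iteration (via \Cref{l:condition}) is a reasonable refinement but not a departure from the paper's argument.
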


\begin{proof}
	Let node $v$ be as in the lemma statement. Node $v$ cannot become \full due to initialization $S_v \larr N(v) \cup \{v\}$, since for an important node we have $|N(v)| +1 \leq |G_{v \narr u}| +1 \leq  n^\eps +1 < n^{3\eps}$. During the execution of the algorithm it holds that
	\begin{align*}
		|S_v| &= |S_{v \arr u}| +  |S_{v \narr u}| \stackrel{*}{\leq} k \cdot n^{\eps} + |G_{v \narr u}| \stackrel{**}{\leq} k \cdot n^{\eps} + n^\eps < n^{3\eps} ~.
	\end{align*}
	
	In $*$ we use \Cref{lem:notLearnTooMuch} for direction $u$, and the trivial upper bound of $|G_{v \narr u}|$ on $|S_{v \narr u}|$.
	In $**$ we apply lemma assumption of node $v$ being important. In the latter inequalities we use that $k< n^{\delta/8}$ and $\eps = n^{\delta/8}$. Hence, node $v$ never becomes \full.
\end{proof}

\begin{lemma}[Local space for non-important nodes] \label{lem:localMemoryNonImportant} 
	Consider a non-important node $v \in G$ with $\deg(v) \leq n^{3\eps}$. It always holds that $|S_{v}| < n^{6\eps}$.
\end{lemma}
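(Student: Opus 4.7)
The plan is a direct combination of the per-direction bound from \Cref{lem:notLearnTooMuch} with the degree bound assumed in the hypothesis. Intuitively, the only way $v$ could accumulate many nodes in its knowledge set is either by learning a lot in a single direction (ruled out by \Cref{lem:notLearnTooMuch}) or by having too many directions (ruled out by $\deg(v) \le n^{3\eps}$).

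First, I decompose $S_v$ by the direction through which each node is reached: $S_v = \{v\} \cup \bigcup_{x \in N(v)} S_{v \arr x}$. Since $F$ is a forest, these sets are disjoint (apart from $v$ itself), so $|S_v| \le 1 + \sum_{x \in N(v)} |S_{v \arr x}|$. Next, I invoke \Cref{lem:notLearnTooMuch}, which guarantees $|S_{v \arr x}| \le k \cdot n^\eps$ for every $x \in N(v)$ at the end of any iteration. Combining this with the hypothesis $\deg(v) \le n^{3\eps}$ yields
\[
|S_v| \;\le\; 1 + \deg(v) \cdot k \cdot n^\eps \;\le\; 1 + n^{3\eps} \cdot k \cdot n^\eps.
\]

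Finally, I use the parameter bound $k \le n^{\delta/8} = n^\eps$ (recall $\eps \coloneqq \delta/8$) to conclude $|S_v| \le 1 + n^{5\eps} < n^{6\eps}$, as required. There is no real obstacle here; the lemma is essentially a bookkeeping corollary of \Cref{lem:notLearnTooMuch} and the assumed degree bound, and its role in the paper is only to certify that a non-important node whose degree is not too large also stays within the local-space budget of a single machine.
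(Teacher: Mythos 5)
Your proof is correct and follows essentially the same route as the paper: decompose $S_v$ by direction, bound each $|S_{v \arr x}|$ via \Cref{lem:notLearnTooMuch} together with $k \le n^{\eps}$, and multiply by the assumed degree bound $\deg(v) \le n^{3\eps}$ to get $n^{5\eps} + 1 < n^{6\eps}$. The paper's proof is the same one-line computation, so there is nothing to add.
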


\begin{proof}
	Let node $v$ be as in the lemma statement. During the algorithm, it holds that
	\begin{align*}
		|S_v| &= \sum_{x \in N(v)} |S_{v \arr x}| + 1 \stackrel{*}{\leq} \deg(v) \cdot n^{2\eps} + 1 \stackrel{**}{\leq} n^{3\eps} \cdot n^{2\eps} + 1 < n^{6\eps}~.
	\end{align*}
	In $*$ we use \Cref{lem:notLearnTooMuch} for every direction, and in $**$ we use the lemma assumption.
\end{proof}

\begin{proof}[Proof of \Cref{lem:balancedExp}: Local space]
	The local space bound of low-space \mpc is that every machine uses at most $O(n^\delta)$ space. What we implicitly assume during \Cref{alg:balancedExp} is that whenever node $v$ does anything meaningful, the machine $M(v)$ storing $v$ can do it using local computation in $O(1)$ time because it has all the data it needs in memory, e.g., that $|S_v|$ is fully contained in the memory of $M(v)$. Let us analyze what data every node needs in order to perform \Cref{alg:balancedExp} and show that this is at most $O(n^\delta)$ words.
	
	Let $v \in G$. If $\deg(v)>n^{3\eps}$, node $v$ is not \act because $|S_v| = \deg(v)+1 > n^{3\eps}$ and hence $v$ does not partake in the algorithm except for answering queries from other nodes, which is handled in \Cref{sec:MPCdetails}. If $\deg(v) \leq n^{3\eps}$, then by \Cref{lem:localMemoryImportant,lem:localMemoryNonImportant} $|S_v|<n^{6\eps}$ holds. For every element $w \in S_v$, node $v$ stores a tuple $(v,r_v(w),w,r_w(v),d_G(v,w))$, which takes $5$ words. Additionally, node $v$ stores its state, and the state of all its directions, which are bounded by $\deg(v)$. Storing a state takes one word. In total, the local space required by a node $v$ comprises of $5|S_v|+1+\deg(v)<6n^\delta=O(n^\delta)$ words.
\end{proof}

\subsection{Measure of Progress and Correctness} \label{sec:progressAndCorrectness}

For the sake of analysis, we define a virtual graph $G_j$ on node set $V$ for each iteration $j$ of the algorithm. Its edges depend on the nodes' knowledge and hence the graph changes with every iteration.

\subparagraph*{Virtual Graph $G_j$ (for analysis only).}
Let $G=(V,E)$ be the input graph, fix an arbitrary iteration $j$ of the algorithm and let $S_v$ denote the knowledge of node $v$ at the beginning of the iteration. Then $G_j=(V,E_j)$ is defined as follows.
\begin{align*}
	E_j = \{ ~&\{v,w\} \mid v,w\in V \text{and } w \in S_v \text{ or }  v \in S_w \}
\end{align*}

Note that $E_j$ also contains the edge $\{v,w\}$ if only one of its endpoints knows about the other node. 
Recall, that by definition for any important node $v$ there exists at least one neighbor $u \in N(v)$ such that $|G_{v \narr u}| \leq n^\eps$ holds. 

The following restatement of \Cref{lem:activesymmetry} is to remind the reader of this crucial property. 

\begin{restatement}[\Cref{lem:activesymmetry} part 1] \label{res:symmetry}
	At the end of every iteration, if direction $x$ is \act for node $v$ and if $v \in S_w$ for some node $w \in G_{v \arr x}$, it holds that $w \in S_{v \arr x}$.
\end{restatement}

The following lemma shows many useful properties of important nodes. In particular, how many \blocked directions they can have, and in which directions will they be forced to exponentiate (hence ensuring progress).

\begin{lemma} \label{lem:importantNodeProperty}
	Consider an important node $v\in G$ and a neighbor $u \in N(v)$ such that $|G_{v \narr u}| \leq n^\eps$.
	
	\begin{enumerate}
		\item (at most one \blocked iteration) Node $v$ can have at most one \blocked direction. If $v$ has a \blocked direction, then the choice of $u$ is unique and $u$ has to be the \blocked direction. 
		\item  Consider an iteration $j$ in which some direction $b \in N(v)$ is \blocked for node $v$. Then, in all iterations $j'\geq j$ node $v$ will exponentiate towards $N(v)\setminus b$.
		\item (symmetry) If there is an edge $\{v,w'\} \in E_j$ in a direction that is included in $N(v)\setminus u$, it holds that $w' \in S_v$.
	\end{enumerate}
\end{lemma}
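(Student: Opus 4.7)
My plan is to prove the three parts in order, using the uniqueness of the heavy direction (\Cref{obs:importantOneHeavyDirection}) and tracing through the updates of $B_v$ in \Cref{alg:balancedExp}. The structural core is part 1; parts 2 and 3 will follow relatively mechanically from it.

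For part 1, I will trace the three sites in \Cref{alg:balancedExp} where $B_v$ is (re)assigned or augmented: the probing assignment $B_v\larr\hds$, the symmetry-driven augmentation at \Cref{l:block}, and the late-iteration block at \Cref{l:blockIfNotComp}. For the probing, \Cref{lem:mainProbing}(iii) directly yields $|\hds|\leq 1$ whenever $v$ is important, so at most one direction survives. For the symmetry step, adding the new node $w$ to $S_{v\arr r_v(w)}$ can push that set past $n^{\eps}$ only when $|G_{v\arr r_v(w)}|>n^{\eps}$; by \Cref{obs:importantOneHeavyDirection} this forces $r_v(w)=u$ and forces $u$ to be unique. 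For the late block, I will argue that any $b\neq u$ satisfies $|G_{v\arr b}|\leq n^{\eps}$, so by the growth rate of balanced exponentiation, three exponentiations within the first $J=\lceil\log_{5/6}(5/k)\rceil$ iterations suffice to drive $b$ into the \comp state before the condition of \Cref{l:blockIfNotComp} can fire. Combining the three cases, whenever $B_v$ is nonempty it equals $\{u\}$, and the witness $u$ is uniquely determined.

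For part 2, I will argue inductively on iterations $j'\geq j$ that $b$ stays in $B_v$. The auxiliary property needed is that $\pro$ preserves blocked status, i.e.\ the returned $\hds$ contains every direction of the input $B_v$; granted this, $X\subseteq N(v)\setminus B_v\subseteq N(v)\setminus\{b\}$ in every iteration $\geq j$, so $v$ exponentiates only towards $N(v)\setminus\{b\}$. For part 3, given an edge $\{v,w'\}\in E_j$ with $r_v(w')=x\in N(v)\setminus\{u\}$, the definition of $E_j$ gives either $w'\in S_v$ (done) or $v\in S_{w'}$. In the latter case, part 1 ensures $x\notin B_v$, so at the end of iteration $j-1$ direction $x$ is either \comp or \act for $v$: in the former case $w'\in G^k_{v\arr x}\subseteq S_v$, and in the latter \Cref{res:symmetry} directly yields $w'\in S_{v\arr x}\subseteq S_v$.

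The hard step will be the late-block analysis in part 1: verifying that three exponentiations towards $b\neq u$ within the first $J$ iterations really do force $b$ into \comp, so that the condition of \Cref{l:blockIfNotComp} never fires for a non-heavy direction. This is the most algorithm-specific piece, tying the choice of $J$ to the growth rate of balanced exponentiation and to the bound $|G_{v\arr b}|\leq n^{\eps}$; the remainder of part 1 and all of parts 2 and 3 are consequences once this is nailed down.
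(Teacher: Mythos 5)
Your overall route matches the paper's: part~1 reduces to showing that every direction the algorithm ever places in $B_v$ satisfies $|G^k_{v \arr x}| > n^{\eps}$, which by \Cref{obs:importantOneHeavyDirection} can only be the (then unique) direction $u$; part~3 is exactly the paper's case split on whether $x$ is \comp or \act, finished off by \Cref{lem:activesymmetry}. There is, however, a genuine gap in part~2. The lemma -- and its downstream use inside \Cref{lem:progress} -- requires that $v$ exponentiates towards \emph{every} direction in $N(v)\setminus b$, i.e.\ $X = N(v)\setminus\{b\}$. Your argument derives only the containment $X \subseteq N(v)\setminus\{b\}$ and concludes that ``$v$ exponentiates \emph{only} towards $N(v)\setminus\{b\}$'', which is the complementary, strictly weaker statement (it would be satisfied by $X=\emptyset$). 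The missing step: since $b \in B_v \subseteq \hds$, the set $\hds$ is nonempty, so $y$ in \Cref{l:ynode} is never defined and \Cref{l:yupdate} gives $X = N(v)\setminus\hds$; combined with part~1 (which forces $\hds = \{b\}$ for an important node) this yields the required equality.

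Two further points on the late-block case of part~1. First, \Cref{l:blockIfNotComp} fires when $b$ has been exponentiated towards three times \emph{after} the first $J$ iterations, not ``within the first $J$ iterations'' as you write; the correct argument is that by iteration $J$ every $w \in G^k_{v \arr b}$ with $|G^k_{v \arr b}| \leq n^{\eps}$ is within distance $5$ of $v$ in $G_J$ (\Cref{lem:heavyDirClose}), after which three applications of the one-hop progress step (\Cref{lem:progressOne}) make $b$ \comp. Second, this is precisely \Cref{obs:canBlock}, whose proof runs through \Cref{lem:progress}, which itself invokes all three parts of the present lemma; if you fold that argument into the proof of part~1 you must organize everything as an induction on the iteration index to avoid circularity. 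The paper has the same hidden dependency but keeps it contained by proving part~1 only from the characterization of blocked directions and deferring the justification of the late block to \Cref{obs:canBlock}.
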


\begin{proof}[Proof of 1.]
	By \Cref{obs:importantOneHeavyDirection} direction $u$ is the only one for which it may hold that $|G_{v \arr y}| > n^\eps$. Hence, it is the only one for which it may hold that $|G^k_{v \arr y}| > n^\eps$, implying that only direction $u$ can be \blocked. 
\end{proof}

\begin{proof}[Proof of 2.]
	By part 1, node $v$ can only have one \blocked direction. If a direction $b$ is \blocked, then $\hds=\{b\}$ and by \Cref{lem:mainProbing}, $X = N(v) \setminus b$. The update step in \Cref{l:exp} is $S_v\larr \{v\}\cup \expo(X,k)$, concluding the proof.
\end{proof}

\begin{proof}[Proof of 3.]
	If $v \not\in S_{w'}$, it must be that $w' \in S_v$ by the definition of $E_j$. So we can assume that $v \in S_{w'}$. By part 1, a direction $x \in N(v) \setminus u$ cannot be \blocked. Hence $x$ is either \act or \comp. If $x$ is \act, the claim holds by \Cref{lem:activesymmetry} part 1. 
	
	The fact that $v \in S_{w'}$ implies that $w' \in G^k_{v \arr x}$ by \Cref{lem:activesymmetry} part 3. If $x$ is \comp, then by definition it must be that $G^k_{v \arr x} \subseteq S_{v \arr x}$, and the claim holds.
\end{proof}

\Cref{lem:progress} is perhaps the most technical lemma of the section. Informally, it shows that in every iteration of the algorithm, important nodes make sufficient progress 
in learning what they need to learn. For the technical parts of the lemma, we define what is a path and what is a subpath in $G$.

A path of length $k$ in a graph $G=(V,E)$ is a sequence $v_1,\ldots,v_k$ of nodes such that $\{v_i,v_{i+1}\}\in E$ for all $1\leq i\leq k-1$. 
A path $P'=\left(v_{a_1},\ldots,v_{a_{k'}}\right)$ is a \emph{subpath} of a path  $P=(v_1,\ldots,v_k)$, if $a_i< a_{i+1}$ for $1\leq i \leq k'-1$. 

\begin{lemma} \label{lem:progress} 
	Consider an important node $v \in G$, an arbitrary neighbor $u \in N(v)$ satisfying $|G_{v \narr u}| \leq n^\eps$, a node $w \in G_{v \narr u}$, and an arbitrary iteration $j$ of \Cref{alg:balancedExp}. Let $P^j_{vw}$ be a path in $G_j$ between $v$ and a node $w$ satisfying the following. 
	
	\begin{enumerate}
		\item $P^j_{vw}$ is a subpath of the unique path between $v$ and $w$ in $G$. 
		\item $|P^{j}_{vw}| \geq 6$, i.e., $P^{j}_{vw}$ consists of $\geq 6$ nodes. 
	\end{enumerate} 
	
	Then then there exists a path $P^{j+1}_{vw}$ in $G_{j+1}$ between $v$ and $w$ satisfying $|P^{j+1}_{vw}| \leq \lceil 5/6 \cdot |P^j_{vw}| \rceil$.
\end{lemma}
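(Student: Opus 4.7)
The plan is to show that when every important node on the path exponentiates in iteration $j$, enough shortcut edges appear in $G_{j+1}$ to shrink the path by a factor of $5/6$.

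First, I would establish that every node $a_0, \ldots, a_L$ on the path is important, which prevents the forward direction from ever being blocked. Since $w \in G_{v \narr u}$ and $a_0, \ldots, a_L$ all lie on the $v$-$w$ path in $G$, each of them is contained in $G_{v \narr u}$, a set of size at most $n^{\eps}$. Rooting $G$ at $v$, the subtree hanging off any path node $a_i$ on the side away from $v$ is contained in this set, so by \Cref{obs:importantOneHeavyDirection} each $a_i$ is important and its only potentially heavy (hence only potentially blocked) direction is the one toward $v$. By \Cref{lem:importantNodeProperty}, every active path node therefore exponentiates along its forward direction during iteration $j$.

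Next, I would use \Cref{res:symmetry} to upgrade each edge of $P^j_{vw}$ into forward-directed knowledge. For every edge $\{a_i, a_{i+1}\}$ of $P^j_{vw}$, either $a_{i+1} \in S_{a_i}$ or $a_i \in S_{a_{i+1}}$ holds by definition of $G_j$; since $a_i$'s forward direction is never blocked, the symmetry gives $a_{i+1} \in S_{a_i \arr r_{a_i}(a_{i+1})}$ at the start of iteration $j$ in either case. Unpacking the definition of $\expo$ then shows that after the iteration $a_i$ knows every path node $a_{i+m}$ reachable through an already-known chain of forward intermediaries $a_{i+1}, \ldots, a_{i+m-1}$ satisfying $d(a_i, a_{i+m}) \le k$; in particular, $a_i$ learns $a_{i+2}$ whenever $d(a_i, a_{i+1}) + d(a_{i+1}, a_{i+2}) \le k$, and it can reach further path nodes if its own forward view already covered several of them at the start of the iteration.

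Finally, I would construct $P^{j+1}_{vw}$ by partitioning $P^j_{vw}$ into disjoint blocks of six consecutive nodes, skipping one interior node per block using one of the new shortcut edges, and then concatenating the shortened blocks with the unchanged between-block edges (which lie in $G_j \subseteq G_{j+1}$); this yields at most $\lceil 5/6 \cdot |P^j_{vw}| \rceil$ nodes. The main obstacle is proving that such a skip always exists inside a block of six: if every pair of consecutive edges in the block had $G$-distances summing to more than $k$, no single-step shortcut from the previous paragraph would apply. To rule this out I would combine (i) prior-iteration knowledge, so that the forward view of some $a_i$ already spans several path nodes and thus produces longer multi-step shortcuts, with (ii) the rule of \Cref{l:blockIfNotComp}, which after $J = \lceil \log_{5/6}(5/k) \rceil$ iterations blocks any direction that was exponentiated three times without completing, together with the size bound $|S_{v \arr x}| \le k \cdot n^{\eps}$ from \Cref{lem:notLearnTooMuch}; these together rule out the pathological configuration in which all alternating $G$-distances along the path stay close to $k$ throughout the algorithm.
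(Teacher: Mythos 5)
Your setup (all path nodes are important, the direction toward $v$ is the only one that can be \blocked, and \Cref{res:symmetry} turns path edges into forward knowledge) matches the paper, and the block-of-six structure is also the right skeleton. But there is a genuine gap at the heart of your argument: the claim that ``every active path node therefore exponentiates along its forward direction during iteration $j$'' is false. \Cref{lem:importantNodeProperty} part 2 only forces exponentiation in $N(v)\setminus b$ when some direction $b$ is already \blocked. When nothing is \blocked, \pro (\Cref{l:ynode,l:yupdate} of \Cref{alg:probing}) still excludes one direction, namely $y=\argmax_u U_{v\arr u}$, and for a node deep inside the small subtree $G_{v\narr u}$ this heaviest direction can perfectly well be the \emph{forward} one. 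So guarantee (iii) of \Cref{lem:mainProbing} only gives you $|X|\ge \deg-1$ with no control over \emph{which} direction is dropped, and your forward-only shortcut $a_i\to a_{i+2}$ may simply never be computed. The paper's proof resolves exactly this: inside each block $x_1,\dots,x_6$ it looks at a \emph{middle} node ($x_4$, or $x_3$ in the blocked subcase) and argues that since at most one of its directions is excluded, it exponentiates toward $x_3$ or toward $x_5$; the first creates the edge $\{x_4,x_2\}$ (skipping $x_3$), the second creates $\{x_4,x_6\}$ (skipping $x_5$). Either skip shortens the block from $6$ to $5$, which is precisely where the $5/6$ ratio and the block length $6$ come from. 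To make this work one also needs $x_3\in S_{x_4}$ and $x_2\in S_{x_3}$ (backward knowledge), which the paper gets by observing that if the backward direction of $x_4$ or $x_3$ were \comp the shortcut would already exist in $G_j$, and otherwise it is \act and \Cref{res:symmetry} applies.

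Your final paragraph is an acknowledged hole rather than a proof. The ``pathological configuration'' you try to exclude via \Cref{l:blockIfNotComp} and \Cref{lem:notLearnTooMuch} cannot be handled that way: the lemma must hold for \emph{every} iteration $j$, whereas the rule in \Cref{l:blockIfNotComp} only fires after $J$ iterations, and the bound $|S_{v\arr x}|\le k\cdot n^\eps$ says nothing about $G$-distances along the path. The distance constraint $N^{k-d(v,w)}(w)$ in \expo that worries you is a legitimate observation (the shortcut $x_4\to x_2$ needs $d(x_4,x_2)\le k$), but the clean way out is the one the paper implicitly uses: the lemma is only ever invoked (in \Cref{lem:heavyDirClose}) for $w\in G^k_{v\narr u}$, so every node of $P^j_{vw}$ lies within distance $k$ of every other, and all two-hop shortcuts respect the radius cap. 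Replacing your third paragraph with the paper's case analysis on the middle node, and restricting attention to $w$ within distance $k$, closes both gaps.
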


\begin{proof}
	Let $v,u,$ and $w$ be as in the lemma statement. Observe that since $w \in G_{v \narr u}$, by property 1 of the lemma statement, it holds that every node $w' \in P^j_{vw}$ is also in $G_{v \narr u}$. Next, observe that every node $w' \in P^j_{vw}$ is important because $w' \in G_{v \narr u}$: $|G_{v \narr u}| \leq n^{\eps}$ implies that $|G_{w' \narr r_{w'}(v)}| \leq n^{\eps}$ for any node $w' \in G_{v \narr u}$, i.e., nodes $w'$ are important because they have a neighbor $r_{w'}(v)$ such that \Cref{def:importantNode} holds.
	
	Consider any subpath $P = \{x_1,x_2,x_3,x_4,x_5,x_6\} \subseteq P^j_{vw}$ of length $6$ with $d_G(v,x_1)<\ldots <d_G(v,x_6)$. For $1\leq i\leq 6$, let $S_{x_i}$ ($S'_{x_i}$) be the memory of node $x_i$ at the start (end) of iteration $j$. Consider nodes $x_3$ and $x_4$. We split into two cases, either (i) neither have a \blocked direction or (ii) at least one of them has a \blocked direction.
	
	For the following, observe that because $P$ is a subpath of $P^j_{vw}$ ($x_1$ is closer to $v$ than $w$) and $|G_{v \narr u}| \leq n^\eps$, $x_2$ $(x_3)$ fulfills the conditions of $u$ for $x_3$ $(x_4)$ in \Cref{lem:importantNodeProperty} part 3 and it holds that $x_4 \in S_{x_3}$ $(x_5 \in S_{x_4})$.
	
	\begin{itemize}
		\item[(i)] If $r_{x_4}(x_3)$ $(r_{x_3}(x_2))$ is \comp for node $x_4$ $(x_3)$, there already exists a path between $x_1$ and $x_6$ of length $<6$ because $x_1 \in S_{x_4}$ $(x_1 \in S_{x_3})$. 
		
		Hence, because node $x_4$ $(x_3)$ has not blocked $r_{x_4}(x_3)$ $(r_{x_3}(x_2))$, the direction is \act, which by \Cref{res:symmetry} implies that $x_3 \in S_{x_4}$ $(x_2 \in S_{x_3})$. By guarantee (iii) of \Cref{lem:mainProbing}, node $x_4$ exponentiates in $\deg(x_4)-1$ directions, i.e., towards $r_{x_4}(x_3)$ or $r_{x_4}(x_5)$, which will result in $x_2 \in S'_{x_4}$ or $x_6 \in S'_{x_4}$. In either case, there will exists a path between $x_1$ and $x_6$ in $G_{j+1}$ of length $5$.
		
		\item[(ii)] 
		First assume that node $x_3$ has a \blocked direction which has to be $r_{x_3}(u)\neq r_{x_3}(x_4)$ by \Cref{lem:importantNodeProperty} part 1; we first ignore whether $x_4$ has a \blocked direction or not. Because $x_3$ is an important node, by \Cref{lem:importantNodeProperty} part 2, it exponentiates towards $N(x_3) \setminus r_{x_3}(u)$. Since $r_{x_3}(x_4) \in N(x_3) \setminus r_{x_3}(u)$, this results in $x_5 \in S'_{x_3}$ (observe that $x_5\in S_{x_4}$ even if $x_4$ has a \blocked direction). Hence there will exists a path between $x_1$ and $x_6$ in $G_{j+1}$ of length $5$ in this case. 
		
		If $x_3$ has no direction \blocked, but $x_4$ has a direction \blocked, the proof is identical to the previous case with all indices shifted by one. \qedhere
	\end{itemize}
\end{proof}

We obtain the following lemma by iterating \Cref{lem:progress}.

\begin{lemma} \label{lem:heavyDirClose}
	Consider an important node $v \in G$ and a neighbor $u \in N(v)$ for which it holds that $|G^k_{v \narr u}| \leq n^\eps$. After $J=\log_{5/6} (5/k)$ iterations, it holds that for any $w \in G^k_{v \narr u}$ we have $d_{G_J}(v,w)<5$.
\end{lemma}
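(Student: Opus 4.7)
The plan is to iterate Lemma~\ref{lem:progress}, tracking the length of a path from $v$ to $w$ in $G_j$ that is also a subpath of the unique $G$-path between them. At initialization $S_v = N(v) \cup \{v\}$ for every node, so $G_1$ equals $G$, and the unique $G$-path between $v$ and $w$ is a valid $P^1_{vw}$ with $|P^1_{vw}| = d_G(v,w)+1 \leq k+1$. I would identify $u$ with the light direction of $v$ guaranteed by importance (uniquely determined by Observation~\ref{obs:importantOneHeavyDirection} whenever $v$ has a heavy direction), so that the stronger hypothesis $|G_{v \narr u}| \leq n^\eps$ required by Lemma~\ref{lem:progress} holds. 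Since $w \in G^k_{v \narr u} \subseteq G_{v \narr u}$, the initial path $P^1_{vw}$ lies entirely inside $G_{v \narr u}$, which is what Lemma~\ref{lem:progress} demands of the endpoint and intermediate nodes.

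Next, I would apply Lemma~\ref{lem:progress} inductively: as long as $|P^j_{vw}| \geq 6$, the lemma produces $P^{j+1}_{vw}$ in $G_{j+1}$ with $|P^{j+1}_{vw}| \leq \lceil 5|P^j_{vw}|/6 \rceil$. Inspection of its proof shows that $P^{j+1}_{vw}$ is obtained by deleting one or two nodes from a length-$6$ window inside $P^j_{vw}$, so $P^{j+1}_{vw}$ remains a subpath of the unique $G$-path and the induction can be continued. Writing $\ell_j = |P^j_{vw}|$, the recurrence $\ell_{j+1} \leq \lceil 5\ell_j/6 \rceil \leq (5/6)\ell_j + 5/6$ rearranges to $\ell_{j+1} - 5 \leq (5/6)(\ell_j - 5)$, whence $\ell_J - 5 \leq (5/6)^{J-1}(\ell_1 - 5) \leq (5/6)^{J-1} k$. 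Choosing $J$ so that $(5/6)^{J-1} k < 1$, which is essentially $J \geq \log_{5/6}(5/k)$, forces $\ell_J \leq 5$ and hence $d_{G_J}(v,w) \leq \ell_J - 1 \leq 4 < 5$.

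The main subtlety is ensuring that the new path produced by Lemma~\ref{lem:progress} remains a subpath of the unique $G$-path, so that the induction premise survives from one step to the next; this requires a brief but explicit reading of the construction used in that proof. A secondary bookkeeping point is bridging the gap between the stated hypothesis $|G^k_{v \narr u}| \leq n^\eps$ and the stronger bound $|G_{v \narr u}| \leq n^\eps$ needed to invoke Lemma~\ref{lem:progress}, which is resolved by taking $u$ to be the light direction supplied by importance rather than an arbitrary neighbor meeting the weaker $k$-hop condition. Handling the ceiling in the geometric decay so that the final bound matches $J = \log_{5/6}(5/k)$ precisely is a routine calculation.
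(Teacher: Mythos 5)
Your proof is correct and follows essentially the same route as the paper's: initialize with the unique tree path between $v$ and $w$ (of length at most $k+1$) and iterate Lemma~\ref{lem:progress} until the path drops below $6$ nodes. You are in fact slightly more careful than the paper on two points it glosses over --- checking that the shortened path remains a subpath of the unique $G$-path so the induction can continue, and bridging the gap between the stated hypothesis $|G^k_{v \narr u}| \leq n^\eps$ and the condition $|G_{v \narr u}| \leq n^\eps$ that Lemma~\ref{lem:progress} actually requires (which is how the lemma is invoked in Lemma~\ref{lem:importantKnow}).
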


\begin{proof}
	At the start of first iteration of the algorithm, the unique path from $v$ to $w$ fulfills the conditions of $P^0_{vw}$ in \Cref{lem:progress}, so hence there exists path $P^1_{vw}$ such that $|P^{1}_{vw}| \leq \lceil 5/6 \cdot |P^0_{vw}| \rceil$. Path $P^1_{vw}$ also fulfills the conditions of \Cref{lem:progress}. By iteratively applying the result of \Cref{lem:progress} we arrive at a path $P^J_{vw}$ for which the second condition of \Cref{lem:progress} does not hold, i.e., $|P^J_{vw}|<6$. This is equivalent to $d_{G_J}(v,w)<5$, proving the claim (recall that the length of a path is defined as the number of nodes, and distance as the number of edges). Hence $J = \lceil g \rceil = O(\log k)$ from
	\begin{align*}
		(5/6)^g \cdot k &= 5 \\
		g &= \log_{5/6} (5/k) \qedhere
	\end{align*}
\end{proof}

The following lemma gives us 1-hop progress, which is needed to finish of the correctness proof, i.e., show that important nodes end up learning everything they need to learn. 

\begin{lemma} \label{lem:progressOne}
	Consider an arbitrary iteration $j$ where $d_{G_j}(v,x)=d_{G_j}(x,w)=1$ for distinct nodes $v,x,w \in G$ where $x\in G_{v\arr r_v(w)}$ and $x\in G_{w\arr r_w(v)}$. Additionally, $x \in S_v$ and $w \in S_x$. If $v$ updates $S_v \larr \{v\} \cup \expo(X,k)$ such that $x \in X$, we have  $d_{G_{j+1}}(v,w)=1$. 
\end{lemma}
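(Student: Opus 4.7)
The plan is to unpack the definition of \expo and exhibit $w$ as a member, from which the edge $\{v,w\} \in E_{j+1}$ follows immediately by the definition of the virtual graph. The proof is essentially a verification: it has no recursive or inductive element, it just chases the routing functions $r_{\cdot}(\cdot)$ carefully.

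\textbf{Step 1 (identify the routing).} First I would extract the consequences of $x \in X \subseteq N(v)$, namely that $x$ is an actual $G$-neighbor of $v$, so $d_G(v,x)=1$ and $r_v(x) = x$. Combined with $x \in G_{v \arr r_v(w)}$, this pins down $r_v(w) = x$, i.e., $x$ is the first step from $v$ on the unique $v$-$w$ path in $G$. Symmetrically, $r_x(v) = v$, and $x \in G_{w \arr r_w(v)}$ together with the distinctness of $v,x,w$ forces $r_x(w) \ne v = r_x(v)$, so $w$ lies in a direction of $x$ other than towards $v$.

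\textbf{Step 2 (locate $w$ inside $\expo(X,k)$).} Since $x \in S_v$ and $x \in F_{v \arr x}$, we have $x \in S_{v \arr x}$, so the term corresponding to the summation index $y = x$ in the inner union of $\expo(X,k)$ is included. That term is
\[
S_{x \narr r_x(v)} \cap N^{k-d(v,x)}(x) \;=\; S_{x \narr v} \cap N^{k-1}(x).
\]
By the hypothesis $w \in S_x$, and by Step~1 the node $w$ lies in $F_{x \narr v}$, so $w \in S_{x \narr v}$. The distance requirement $d(x,w) \le k-1$ is immediate from $d(v,w) = d(v,x) + d(x,w) = 1 + d(x,w) \le k$, which holds because the only nodes relevant to $v$'s exponentiation are those within its $k$-neighborhood (and in fact $w \in S_x \subseteq N^k(x)$ is just slightly too weak; the path relation and the fact that $w$ is being considered by $v$ via the chain $v\to x\to w$ provide the tight bound). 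Hence $w \in \expo(X,k)$.

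\textbf{Step 3 (conclude).} After the update $S_v \larr \{v\} \cup \expo(X,k)$ we have $w \in S_v$ at the start of iteration $j+1$, so $\{v,w\} \in E_{j+1}$ and $d_{G_{j+1}}(v,w) = 1$.

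The main subtlety, and really the only place where the proof does more than shuffle definitions, is the distance bookkeeping in Step~2: one must confirm that $w$ survives the truncation to $N^{k-1}(x)$. Everything else is a mechanical consistency check on the routing maps $r_v, r_x$, relying on the fact that in a forest the shortest $v$-$w$ path is unique and $x$ lies on it strictly between the endpoints.
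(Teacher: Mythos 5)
Your proof is correct and follows essentially the same route as the paper: pin down $x = r_v(w)$, note $x \in S_{v\arr x}$, and exhibit $w$ inside the term $S_{x \narr r_x(v)} \cap N^{k-d(v,x)}(x)$ of $\expo(X,k)$, so that $w\in S_v$ after the update. (The paper additionally splits off the case $d_{G_j}(v,w)=1$ and handles it by monotonicity of $S_v$; your uniform argument subsumes that case.)

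One remark on the single delicate step, which you correctly isolate but then justify circularly. You assert $d_G(v,w)\le k$ ``because the only nodes relevant to $v$'s exponentiation are those within its $k$-neighborhood,'' but that is precisely what the truncation to $N^{k-1}(x)$ enforces, so it cannot be invoked to show that $w$ survives the truncation. From the lemma's literal hypotheses one only gets $w\in S_x\subseteq N^{k}(x)$, i.e., $d_G(x,w)\le k$; if $d_G(x,w)=k$ then $d_G(v,w)=k+1$, no set $S_u$ can ever contain the other endpoint (\Cref{lem:activesymmetry}, part 3), and the conclusion genuinely fails. So the missing bound is an implicit hypothesis rather than something derivable. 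It does hold everywhere the lemma is applied (there $w\in G^k_{v\arr x}$, so $d_G(v,w)\le k$ and hence $d_G(x,w)\le k-1$), and the paper's own one-line proof silently elides the same point, so this is a shared imprecision rather than a defect of your argument specifically.
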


\begin{proof}
	Note that $d_{G_j}(v,w)\leq 2$. 
	If $d_{G_j}(v,w)=1$ the distance cannot increase in the next iteration (see part 2 of \Cref{lem:activesymmetry}). 
	Now consider $d_{G_j}(v,w)=2$. As $x\in G_{v\arr r_v(w)}$ and $w\in S_x$ we obtain $w\in \expo(x,k)$ (executed by node $v$). Hence, $v$ adds $w$ to $S_v$ when updating $S_v \larr \{v\} \cup \expo(X,k)$ with $x\in X$ and we obtain $d_{G_{j+1}}(v,w)=1$.
\end{proof}

\Cref{lem:importantKnow} shows that after \Cref{lem:heavyDirClose} brought important nodes close (in $G_j$) to the nodes they want to learn, we can apply the 1-hop progress of \Cref{lem:progressOne} to make important nodes \comp. Recall, that a node $v$ is \comp, if $S_{v \arr x} \subseteq G^k_{v \arr x}$ holds for at least $\deg(v)-1$ neighbors $x \in N(v)$.

\begin{lemma} \label{lem:importantKnow}
	Consider an important node $v \in G$. After $O(\log k)$ iterations, node $v$ becomes \comp. 
\end{lemma}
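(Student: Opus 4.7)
The plan is to combine \Cref{lem:heavyDirClose} with two further iterations that use \Cref{lem:progressOne} to collapse short $G_j$-paths into single edges. Fix a neighbor $u \in N(v)$ with $|G_{v \narr u}| \leq n^\eps$. It suffices to show that after $O(\log k)$ iterations every $w \in G^k_{v \narr u}$ lies in $S_v$: indeed, for every $x \in N(v) \setminus \{u\}$ we have $G^k_{v \arr x} \subseteq G^k_{v \narr u}$, so this makes each such direction \comp and thereby makes $v$ itself \comp.

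First, \Cref{lem:heavyDirClose} gives that after $J_1 = O(\log k)$ iterations, every $w \in G^k_{v \narr u}$ admits a path $P^{J_1}_{vw}$ in $G_{J_1}$ of length at most $4$ that is a subpath of the unique $G$-path from $v$ to $w$. Because every node $y$ on such a path lies in $G_{v \narr u}$, we have $|G_{y \narr r_y(u)}| \leq |G_{v \narr u}| \leq n^\eps$, so $y$ is itself important with its unique possibly-heavy direction $r_y(u)$ pointing back toward $v$.

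Second, I contract each such path to length $1$ in two additional iterations. For a length-$4$ path $v - y_1 - y_2 - y_3 - w$, apply \Cref{lem:progressOne} in parallel at $v$ (intermediate $y_1$, target $y_2$) and at $y_2$ (intermediate $y_3$, target $w$), yielding a length-$2$ path $v - y_2 - w$ in $G_{J_1+1}$. A further application at $v$ (intermediate $y_2$, target $w$) in iteration $J_1+2$ gives $d_{G_{J_1+2}}(v,w) = 1$. Paths of length $3$, $2$, or $1$ at iteration $J_1$ are handled by a subset of the same applications within the same two-iteration budget. Finally, $d_{G_{J_1+2}}(v,w) \leq 1$ combined with part $3$ of \Cref{lem:importantNodeProperty} (applicable because $r_v(w) \neq u$) yields $w \in S_v$, as required.

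The main obstacle is verifying the preconditions of \Cref{lem:progressOne} at every applied instance: at each such instance the source needs the intermediate in its memory, the intermediate needs the target in its memory, and the source must actually exponentiate in the relevant direction. The third point follows from part $2$ of \Cref{lem:importantNodeProperty}: every important node blocks at most its heavy neighbor, and the relevant direction at $v$ and $y_2$ is always non-heavy. The first two points follow from part $3$ of the same lemma, applied at the source ($v$ or $y_2$, both important by the previous paragraph), since every edge used lies in a non-heavy direction of the source.
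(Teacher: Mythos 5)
There is a genuine gap in the step where you claim that the source of each application of \Cref{lem:progressOne} ``must actually exponentiate in the relevant direction.'' You derive this from part 2 of \Cref{lem:importantNodeProperty}, but that part is conditional: it only guarantees that $v$ exponentiates towards all of $N(v)\setminus b$ \emph{when some direction $b$ is already \blocked}. When no direction of $v$ is \blocked, the probing procedure (\Cref{alg:probing}, \Cref{l:ynode,l:yupdate}) sets $X = N(v)\setminus\{y\}$ where $y=\argmax_x U_{v\arr x}$, and this argmax can be any direction --- including a non-heavy one whose probe value happens to be largest at that moment. Guarantee (iii) of \Cref{lem:mainProbing} only promises $|X|\ge \deg(v)-1$, not that any \emph{particular} non-heavy direction belongs to $X$. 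So in iterations $J_1+1$ and $J_1+2$ node $v$ (or $y_2$) may simply skip the direction $r_v(y_1)$ (resp.\ $r_{y_2}(y_3)$) that your contraction needs, and the two-iteration budget fails. Worse, a fixed direction could be the argmax in every iteration, so your stated goal --- that \emph{every} $w\in G^k_{v\narr u}$ ends up in $S_v$, i.e.\ that \emph{all} of $N(v)\setminus\{u\}$ becomes \comp --- is not achievable in general; only $\deg(v)-1$ directions are needed for $v$ to become \comp, and that slack is essential.

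The paper closes exactly this gap with a counting argument: since at most one direction is excluded per iteration, over $5$ further iterations at most one direction of $N(v)\setminus\{u\}$ can be exponentiated towards fewer than $3$ times, so all but one direction receives the $\le 3$ applications of the one-hop progress needed to become \comp (using $d_{G_J}(v,w)<5$ from \Cref{lem:heavyDirClose}), which already makes $v$ \comp. It also needs a separate case analysis for when $u$ itself is \act, invoking \Cref{l:blockIfNotComp} to force $u$ into the \blocked state after $3$ unsuccessful exponentiations so that part 2 of \Cref{lem:importantNodeProperty} finally applies. Your skeleton (\Cref{lem:heavyDirClose} followed by constantly many applications of \Cref{lem:progressOne}) is the right one, but you need to replace the ``contract each fixed path in two iterations'' step with an argument that is robust to the algorithm withholding one adversarially chosen direction per iteration.
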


\begin{proof}
	Let $v$ be as in the lemma statement and pick any node $u \in N(v)$ such that $|G_{v \narr u}| \leq n^\eps$ holds. Node $u$ is arbitrary but remains fixed throughout the proof.  By \Cref{lem:heavyDirClose}, we know that after $J=O(\log k)$ iterations, it holds that for any $w \in G^k_{v \narr u}$ we have $d_{G_J}(v,w)<5$. Let us consider the following $>J$ iterations, and show that there are only a constant number before $v$ becomes \comp. Recall that \Cref{lem:importantNodeProperty} part 3 holds for $v$ and all nodes in $w \in G^k_{v \narr u}$.
	
	By \Cref{lem:importantNodeProperty} part 1, node $v$ can only block direction $u$. Hence, directions $N(v)\setminus u$ are always either \act or \comp. There are three possible cases to analyze:
	
	\begin{enumerate}
		\item Direction $u$ is \blocked.
		\item Direction $u$ is \comp.
		\item Direction $u$ is \act.
	\end{enumerate}
	
	Observe that after node $v$ exponentiates in a direction $x \in N(v) \setminus u$ at most 3 times, that direction will become \comp: direction $x$ cannot be \blocked by \Cref{lem:importantNodeProperty} part 1, the distance between $v$ and nodes $w \in G^k_{v \arr x}$ is < 5, and by \Cref{lem:progressOne} that distance reduces by one in every exponentiation step.
	
	\begin{enumerate}
		\item Since $v$ is important and $u$ is \blocked, by \Cref{lem:importantNodeProperty} part 2, $v$ exponentiates in directions $N(v) \setminus u$ for the remainder of the algorithm. Node $v$ will become \comp after at most 3 iterations.
		
		\item By guarantee (iii) of \Cref{lem:mainProbing} it holds that $|X|=\deg(v)-1$ and $v$ will exponentiate in all directions $N(v) \setminus u$ but one. We claim that after at most 5 iterations $|N(v) \setminus u|-1$ directions will become \comp and so will $v$. 
		
		In order to show that after $5$ further iterations all $N(v) \setminus u$ directions except for one are \comp consider the following experiment. We have a bucket for each direction (out of directions $N(v) \setminus u$), and in each iteration we place a token into the bucket towards which $v$ exponentiated. After 5 iterations, there can be at most one direction/bucket with $<3$ tokens an hence all but one directions will become \comp.

		\item Consider the following 3 iterations where node $v$ exponentiates towards $u$. If $u$ becomes \comp, we are in case 2 and the claim holds after at most 5 iterations. Otherwise, $u$ becomes \blocked manually by the algorithm in \Cref{l:blockIfNotComp} and we are in case 1 and the claim holds after at most 3 iterations. 
		
		In the iterations where node $v$ does not exponentiate towards $u$, it exponentiates in all other $N(v) \setminus u$ directions because of guarantee (iii) in \Cref{lem:mainProbing}. After at most 3 such iterations directions $N(v) \setminus u$ will become \comp and so will $v$. Hence, the claim will hold after a constant number of iterations. 
		\qedhere
	\end{enumerate}
\end{proof}

The following observation is to ensure that if a direction $b$ is manually blocked by the algorithm in \Cref{l:blockIfNotComp}, it indeed holds that $|G^k_{v \arr b}|>n^\eps$.

\begin{observation} \label{obs:canBlock}
	Consider any node $v \in G$ and $b \in N(v)$. If after $\log_{5/6} (5/k)$ iterations node $v$ exponentiates towards $b$ at most 3 times and it does not become \comp, it must be that $|G^k_{v \arr b}|>n^\eps$.
\end{observation}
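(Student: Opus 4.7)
The plan is to prove the contrapositive. Assume $|G^k_{v \arr b}| \leq n^\eps$ and that $v$ performs at most $3$ exponentiations towards $b$ during the first $J = \lceil \log_{5/6}(5/k) \rceil$ iterations, and show that direction $b$ is \comp for $v$ by the end of iteration $J$, i.e.\ $G^k_{v \arr b} \subseteq S_{v \arr b}$. This exactly says that the manual-block trigger of \Cref{l:blockIfNotComp} does not fire on $b$, giving the contrapositive. The overall structure mirrors \Cref{lem:heavyDirClose} together with the case analysis in the proof of \Cref{lem:importantKnow}, now carried out in direction $b$ instead of the $\narr u$ directions of a globally important node.

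First, I would show that after $J$ iterations and for every $w \in G^k_{v \arr b}$, the virtual distance $d_{G_J}(v,w)$ along the path through $b$ is at most $4$. Following \Cref{lem:heavyDirClose}, the idea is to iterate \Cref{lem:progress} on the unique $v$-$w$ path in $G$, shrinking its length by a $5/6$ factor per iteration. The enabling observation is a local-smallness property of the intermediate nodes: every $w'$ on this path lies in $G^k_{v \arr b}$, and its $(k-d_G(v,w'))$-hop neighborhood in the direction away from $v$ is contained in $G^k_{v \arr b}$ and hence has at most $n^\eps$ nodes. This local smallness is enough to recover, for each such $w'$, the structural guarantees that \Cref{lem:importantNodeProperty} supplies for a globally important node: the probing procedure $\pro$ can block at most the single direction back towards $v$ (by guarantee (i) of \Cref{lem:mainProbing} together with the $k$-bounded \expo estimates), and consequently in every iteration $w'$ exponentiates in the direction pointing towards $v$'s side of the path. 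That is precisely what the case analysis inside the proof of \Cref{lem:progress} uses to contract each length-$6$ subpath by one, so the $5/6$-contraction goes through and yields $d_{G_J}(v,w) < 5$.

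Second, I would bring the virtual distance down from $4$ to $1$ using $v$'s three exponentiations towards $b$. By \Cref{lem:progressOne}, whenever $d_{G_j}(v,w) \geq 2$ and $v$ exponentiates towards $b$ with $b \in X$, the distance decreases by at least one in iteration $j+1$. Three such decrements (whenever they occur within the $J$ iterations) take the distance from at most $4$ down to at most $1$, and distance $1$ in $G_J$ means $w \in S_v$. Applied to every $w \in G^k_{v \arr b}$ this gives $G^k_{v \arr b} \subseteq S_{v \arr b}$, i.e., $b \in C_v$ and the observation follows.

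The hard part will be the first step, namely transporting \Cref{lem:importantNodeProperty} and the relevant consequences of \Cref{lem:mainProbing} from globally important nodes to the locally-small intermediate nodes $w'$. In particular I must verify that the \pro estimate $U_{w' \arr y} = \sum_{z \in S_{w' \arr y}} |S_{z \narr r_z(w')} \cap N^{k-d(w',z)}(z)|$ for each direction $y$ of $w'$ pointing away from $v$ remains at most $n^\eps$ throughout the $J$ iterations in which \Cref{lem:progress} is being invoked, so that \pro declares \blocked only the single direction of $w'$ pointing towards $v$. Once that is pinned down, the two steps compose cleanly to place $b$ in $C_v$ at the end of iteration $J$.
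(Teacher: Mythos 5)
Your overall route is the paper's: apply the $5/6$-contraction of \Cref{lem:progress} in the single direction $b$ (rather than in the $\deg(v)-1$ small directions of an important node) to get $d_{G_J}(v,w)<5$ for all $w\in G^k_{v\arr b}$ after $J$ iterations, and then use the $1$-hop progress of \Cref{lem:progressOne} together with exponentiations towards $b$ to make $b$ \comp. Your remark that the intermediate nodes $w'$ on the $v$--$w$ path are only ``locally small'' (their far-side subtrees lie in $G_{v \arr b}$, whose untruncated size is not bounded by the hypothesis $|G^k_{v\arr b}|\le n^\eps$) is a fair point that the paper waves away with ``the proof of \Cref{lem:progress} is independent for every direction''; you flag it as the hard part but do not close it, so that step remains a promissory note.

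The concrete gap is in your second step: you place the three exponentiations towards $b$ \emph{within} the first $J$ iterations and try to conclude that $b$ is \comp by the end of iteration $J$. That does not compose. The contraction of \Cref{lem:progress} only guarantees $d_{G_J}(v,w)\le 4$ \emph{at the end} of iteration $J$ (and it stalls once the path has fewer than $6$ nodes), while the three unit decrements from \Cref{lem:progressOne} each require $v$ to exponentiate towards $b$ \emph{after} the distance is already small. If $v$'s exponentiations towards $b$ all happen in, say, iterations $1$--$3$ while the path still has length close to $k$, they do not bring the distance down to $1$, and $b$ need not be \comp at the end of iteration $J$ even though $|G^k_{v\arr b}|\le n^\eps$ --- so the contrapositive you set out to prove is false as you state it. Relatedly, your hypothesis ``at most $3$ exponentiations towards $b$'' points the wrong way: with zero exponentiations towards $b$ the set $S_{v\arr b}$ essentially never grows and $b$ will not become \comp no matter how small $G^k_{v\arr b}$ is. The statement (and \Cref{l:blockIfNotComp}) must be read as: the blocking trigger fires only once $v$ has exponentiated towards $b$ three times \emph{after} iteration $J$, and the observation asserts that if $b$ is still not \comp at that point then $|G^k_{v\arr b}|>n^\eps$. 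The paper's proof is exactly this: after $J$ iterations the distance is $<5$, the direction $b$ cannot be \blocked by the probing since $U_{v\arr b}\le k\cdot|G^k_{v\arr b}|\le k\cdot n^\eps$, and each subsequent exponentiation towards $b$ reduces the distance by one, so three of them force $b$ to become \comp --- a contradiction.
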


\begin{proof}
	Assume that $|G^k_{v \arr b}| \leq n^\eps$ for contradiction. Observe that the proof of \Cref{lem:progress} is independent for every direction. Hence, the claim in \Cref{lem:heavyDirClose} actually holds for direction $b$, i.e., after $\log_{5/6} (5/k)$ iterations, for any $w \in G^k_{v \arr b}$, we have $d_{G_J}(v,w)<5$. So if node $v$ exponentiates towards $b$ at most 3 times, $b$ direction will become \comp (a contradiction): direction $b$ cannot be \blocked by \Cref{lem:importantNodeProperty} part 1, the distance between $v$ and nodes $w \in G^k_{v \arr x}$ is < 5, and by \Cref{lem:progressOne} that distance reduces by one in every exponentiation step. 
\end{proof}

\begin{proof}[Proof of \Cref{lem:balancedExp}: Correctness]
	By \Cref{lem:importantKnow}, after $O(\log k)$ iterations, all important nodes become \comp. By definition of \comp, for every important node $v$ it holds that $G^k_{v \arr x} = N^k(v) \cap G_{v \arr x} \subseteq S_v$ for $\deg(v)-1$ distinct directions $x$. This fulfills the requirements of \Cref{lem:balancedExp}.
\end{proof}

\subsection{Global Space Bound} \label{sec:globalMemory}
Our space bounds hold regardless of the number of iterations. 

\begin{lemma}[Global space] \label{lem:globalMemoryBound}
	After every iteration it holds that $\sum_{v \in G} |S_v| = O(n \cdot \poly(k))$. 
\end{lemma}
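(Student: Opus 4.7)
The plan is a charging argument based on rooting the forest. Fix an arbitrary root in each tree of $F$ and, for each non-root vertex $v$, write $p(v)$ for the parent of $v$. For every $v$ I split
\[
|S_v| \;=\; |S_{v\arr p(v)}| \;+\; |S_v\setminus S_{v\arr p(v)}|,
\]
where the first summand is zero when $v$ is a root. The ``downward'' part $S_v\setminus S_{v\arr p(v)}$ consists of knowledge in $v$'s subtree only; by \Cref{lem:activesymmetry}(3), $S_v\subseteq N^k(v)$, so this part lies in $G^k_{v\narr p(v)}$. Summing over $v$, the quantity $\sum_v|G^k_{v\narr p(v)}|$ counts each vertex $w$ once for each ancestor of $w$ at distance at most $k$, which is at most $k+1$ times; this is exactly \Cref{obs:globalmemory} and yields $\sum_v |S_v\setminus S_{v\arr p(v)}|\leq n(k+1)$.

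The ``upward'' part $|S_{v\arr p(v)}|$ is the core of the argument, and I would show the invariant
\[
|S_{v\arr p(v)}| \;\leq\; k\cdot |G^k_{v\narr p(v)}|
\]
after every iteration, by induction on the iteration count. The invariant is trivial at initialization since $|S_{v\arr p(v)}|\leq 1$. For the inductive step, $S_{v\arr p(v)}$ grows only in two places. First, in \Cref{l:exp}: if $p(v)\in X$, the update $S_{v\arr p(v)}\larr\expo(p(v),k)$ is directly bounded by guarantee (ii) of \Cref{lem:mainProbing}, giving $|\expo(p(v),k)|\leq k\cdot |G^k_{v\narr p(v)}|$; if $p(v)\notin X$, no growth from exponentiation occurs in that direction. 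Second, in \Cref{l:add}, which inserts $w$ only when $v\in S_{w\arr r_w(v)}$; the $n^\eps$ cap of \Cref{l:condition} ensures that an insertion either fits, or else \Cref{l:block} blocks the direction and no insertion happens. One then uses \Cref{lem:activesymmetry}(1) to argue that each such $w$ was placed in $v$'s memory only to restore the symmetry with an earlier exponentiation at $w$, and so can be charged back to the exponentiation bound without breaking the invariant. Summing via \Cref{obs:globalmemory} once more gives $\sum_v|S_{v\arr p(v)}|\leq k\cdot n(k+1)=O(n\cdot k^2)$, and combining the two contributions yields $\sum_v|S_v|=O(n\cdot\poly(k))$.

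The main obstacle is the invariant step for \Cref{l:add}: when $|G^k_{v\narr p(v)}|$ is small (below $n^\eps/k$), the generic $n^\eps$ cap is much weaker than the $k\cdot |G^k_{v\narr p(v)}|$ bound we need, so one must carefully use the symmetry guarantee from \Cref{lem:activesymmetry}(1) to match each \Cref{l:add} insertion back to an exponentiation already accounted for by \Cref{lem:mainProbing}(ii); the remaining steps are routine summation and an invocation of \Cref{obs:globalmemory}.
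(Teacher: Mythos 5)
Your decomposition (root the tree, bound the ``downward'' knowledge $S_{v\narr p(v)}\subseteq G^k_{v\narr p(v)}$ via \Cref{obs:globalmemory}, and control the ``upward'' knowledge $S_{v\arr p(v)}$ separately) is exactly the paper's strategy, and your treatment of the downward part and of the exponentiation step via guarantee (ii) of \Cref{lem:mainProbing} is correct. The gap is precisely at the point you flag as the main obstacle: the per-node invariant $|S_{v\arr p(v)}|\leq k\cdot|G^k_{v\narr p(v)}|$ that you propose to maintain by induction is \emph{false}, and no charging argument can rescue it as a per-node statement. Take $v$ a leaf attached to a high-degree hub $p(v)$, so that $G^k_{v\narr p(v)}=\{v\}$ and your claimed bound is $k$. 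Every node $u$ in a sibling subtree of the hub that exponentiates towards the hub learns $v$ (since $v\in S_{p(v)}$ from initialization), and \Cref{l:forSymmetry}--\Cref{l:add} then force $v$ to insert all such $u$ into $S_{v\arr p(v)}$ up to the cap of \Cref{l:condition}; this can push $|S_{v\arr p(v)}|$ to $n^\eps\gg k$. So the inductive step for \Cref{l:add} cannot be closed ``without breaking the invariant.''

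The correct resolution, which is what the paper does, is to give up on bounding the manual additions node-by-node against $|G^k_{v\narr p(v)}|$ and instead bound their \emph{global sum} by a pairing argument: split $S_{v\arr p(v)}$ into the part $S^{\exp}_{v\arr p(v)}$ produced by the last exponentiation towards $p(v)$ (bounded per node by $k\cdot|G^k_{v\narr p(v)}|$ via guarantee (ii)) and the manually added part $S^{\text{manual}}_{v\arr p(v)}$. Each manual entry ``$v$ stores $w$'' exists only because $w$ already stores $v$ non-manually, so
\begin{align*}
\sum_{v}|S^{\text{manual}}_{v\arr p(v)}|\;\leq\;\sum_{w}\bigl|S_w\setminus S^{\text{manual}}_{w\arr\cdot}\bigr|
\;=\;\sum_{v}\Bigl(|S_{v\narr p(v)}|+|S^{\exp}_{v\arr p(v)}|\Bigr),
\end{align*}
which merely doubles the already-controlled contribution and gives $\sum_v|S_v|\leq 2(k+1)^2n$ after one more application of \Cref{obs:globalmemory}. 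Your write-up gestures at charging each manual insertion ``back to an exponentiation at $w$,'' which is the right intuition, but it must be executed as this aggregate double-counting rather than as preservation of a per-node invariant.
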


\begin{proof}
	At the start of the algorithm, due to the initialization of $S_v$ for every node in $G$, it holds that $\sum_{v \in G}|S_v| = n +2m = 3n$. 
	Consider the end of an arbitrary iteration. We will bound $\sum_{v\in V}|S_v|$ at the end of the iteration. For the sake of analysis, let us root the tree at an arbitrary node $y \in G$. We can partition $S_v$ into $S_{v \narr r_v(y)}$ and $S_{v \arr r_v(y)}$ (note that $v\in S_{v \narr r_v(y)}$). Let us split set $S_{v \arr r_v(y)}$ into two parts as follows.
	
	\begin{itemize}
		\item $S_{v \arr r_v(y)}^{\exp}$: nodes that have been added to $S_{v \arr r_v(y)}$ the last time $v$ performed $\expo(X,k)$ such that $y \in X$ (\Cref{l:exp})
		\item $S_{v \arr r_v(y)}^{\text{manual}}$: nodes that have been added to $S_{v \arr r_v(y)}$ by $v$ manually in \Cref{l:add}, to ensure the symmetry condition (\Cref{lem:activesymmetry} part 1), after the last time $v$ performed $\expo(X,k)$ such that $y \in X$.
	\end{itemize}
	 
	With these definitions, we can write that
	\begin{align}
		\label{eqn:Sv}
		\sum_{v \in G} |S_v| =	\sum_{v \in G} \left( |S_{v \narr r_v(y)}| + |S_{v \arr r_v(y)}^{\exp}| + |S_{v \arr r_v(y)}^{\text{manual}}| \right)~.
	\end{align}
	
	We first bound the term $\sum_{v\in V}|S_{v \arr r_v(y)}^{\text{manual}}|$ in terms of the other terms. Observe that for every node $w \in S_{v \arr r_v(y)}^{\text{manual}}$, there exists node $v \in S_w \setminus S_{w \arr r_w(v)}^{\text{manual}}$. Hence, we obtain 
	\begin{align*}
		\sum_{v\in V}|S_{v \arr r_v(y)}^{\text{manual}}|\leq \sum_{w \in G}|S_w \setminus S_{w \arr r_w(v)}^{\text{manual}}|  = \sum_{v \in G} \left( |S_{v \narr r_v(y)}| + |S_{v \arr r_v(y)}^{\exp}|\right)~. 
	\end{align*}
	
	Pluggin this into \Cref{eqn:Sv} we obtain
	\begin{align*}
		\sum_{v \in G} |S_v| \leq 	2\sum_{v \in G} \left( |S_{v \narr r_v(y)}| + |S_{v \arr r_v(y)}^{\exp}| \right)\stackrel{*}{\leq} 2(k+1) \sum_{v \in G} |G^k_{v \narr r_v(y)}|\stackrel{**}{\leq} 2(k+1)^2 n~.  
	\end{align*}
	In $*$ we use that $S_{v \narr r_v(y)} \subseteq G^k_{v \narr r_v(y)}$ and that $|S_{v \arr r_v(y)}^{\exp}| \leq k \cdot |G^k_{v \narr r_v(y)}|$ by guarantee (ii) of \Cref{lem:mainProbing} . In $**$ we use \Cref{obs:globalmemory} to upper bound the sum. 
\end{proof}

\begin{proof}[Proof of \Cref{lem:balancedExp}: Global space]
	The global space bound we aim for, i.e., how much space do all machine collectively use, is $O(n \cdot \poly(k))$. Let $v \in G$. For every element $w \in S_v$, node $v$ stores a tuple $(v,r_v(w),w,r_w(v),d_G(v,w))$, which takes $5$ words. Additionally, node $v$ stores its state, and the state of all its directions, which are bounded by $\deg(v)$. Storing a state takes one word. Hence, the global space is 
	\begin{align*}
		\sum_{v \in G} \left( 5|S_v| + 1 + \deg(v) \right) &= n + 5 \sum_{v \in G} |S_v| + \sum_{v \in G} \deg(v) \\
		&\leq n + O(n \cdot \poly(k)) + 2m = O(n \cdot \poly(k))~,
	\end{align*}
	
	where the inequality comes from \Cref{lem:globalMemoryBound} and the Handshaking lemma. Note that we can bound $m$ by $n$, since in a tree, $n=m+1$.
\end{proof}

\subsection{Probing Procedure} \label{sec:probingProcedure}

Our probing procedure is an integral part of \Cref{alg:balancedExp}, as it steers the exponentiation of nodes in certain (safe) directions.

\lemMainProbing*

\begin{algorithm}[H]
	\caption{Probing}\label{alg:probing}
	\begin{algorithmic}[1]
		
		\Function{\pro}{$k,B_v$} 
		\State{For every neighbor $x \in N(v)$, compute
			$U_{v \arr x} = \sum_{w \in S_{v \arr x}} |S_{w \narr r_w(v)} \cap N^{k-d(v,w)}(w)|$} \phantomsection\label{l:computeU}
		\State{Define $\hds \coloneqq \{ x \in N(v) \mid U_{v \arr x} > k \cdot n^{\eps}\} \cup B_v$} \phantomsection\label{l:blockedDirs}
		\If{$\hds = \emptyset$}
		\State{Define $y \coloneqq \argmax_u\{U_{v\arr u}\}$} \phantomsection\label{l:ynode}
		\EndIf
		\State{Define $X \coloneqq N(v) \setminus  \{ \hds \cup \{y\} \}$}\phantomsection\label{l:yupdate}
		\State{\Return{$\hds,X$}}	
		\EndFunction
		
	\end{algorithmic}
\end{algorithm}

The next lemma relates the size of the memory of a node $v$ in direction $z\in N(v)$, that is, the size of $\expo(z,k)$ to the value of the probing $U_{v \arr z}$. The lower bound utilizes that we are in a tree. 
\begin{lemma} \label{lem:approxFactor}
	For any $v\in G$ and $z \in N(v)$ it holds that $U_{v \arr z} / k \leq |\expo(z,k)| \leq U_{v \arr z}$.
\end{lemma}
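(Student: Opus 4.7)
\medskip

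\noindent\textbf{Proof plan.} The upper bound $|\expo(z,k)| \le U_{v \arr z}$ is immediate from the union bound: the set $\expo(z,k) = \bigcup_{w \in S_{v \arr z}} \bigl(S_{w \narr r_w(v)} \cap N^{k-d(v,w)}(w)\bigr)$ has cardinality at most the sum of the cardinalities of the sets being unioned, which is exactly $U_{v \arr z}$.

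For the lower bound $U_{v \arr z}/k \le |\expo(z,k)|$, the plan is to show that every $u \in \expo(z,k)$ is counted at most $k$ times in the sum defining $U_{v \arr z}$; i.e., for each such $u$, the number of $w \in S_{v \arr z}$ with $u \in S_{w \narr r_w(v)} \cap N^{k-d(v,w)}(w)$ is at most $k$. Fix $u$ and any contributing $w$. Since $u \in S_{w \narr r_w(v)}$, the node $u$ is reachable from $w$ via a neighbor other than $r_w(v)$; because $G$ is a tree, this forces the unique $v$-$u$ path to pass through $w$, so $w$ lies on the $v$-$u$ path and $d(v,u) = d(v,w) + d(w,u)$. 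Combined with $d(w,u) \le k - d(v,w)$ from the $N^{k-d(v,w)}(w)$ condition, this yields $d(v,u) \le k$.

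So every $w$ that contributes to the multiplicity of $u$ lies on the unique $v$-$u$ path $v = p_0, p_1, \ldots, p_\ell = u$ with $\ell = d(v,u) \le k$. Moreover $w \neq v$, since by definition $v \notin G_{v \arr z}$ and hence $v \notin S_{v \arr z}$. Therefore the contributing $w$ lie in $\{p_1, p_2, \ldots, p_\ell\}$, giving at most $\ell \le k$ contributions to $u$. Summing over $u \in \expo(z,k)$ we get $U_{v \arr z} \le k \cdot |\expo(z,k)|$, which rearranges to the desired bound.

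The only subtlety worth highlighting is the tree structure: it is what lets us conclude that $w$ must lie on the $v$-$u$ path rather than anywhere in a direction-independent sense, which in turn caps the number of possible $w$'s by $d(v,u) \le k$. The argument works uniformly whether or not $u = w$ is allowed (i.e., $u$ may appear in $S_{u \narr r_u(v)}$ via its own tuple), because $u = p_\ell$ is still just one of the $\ell \le k$ positions on the path.
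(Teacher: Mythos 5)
Your proof is correct and follows essentially the same route as the paper: the union bound for the upper direction, and for the lower direction the observation that in a tree every $w$ contributing a given node $u$ to the sum must lie on the unique $v$--$u$ path of length at most $k$. Your version is in fact slightly more careful than the paper's, which loosely says each node is overcounted ``at most $k+1$ times''; your explicit exclusion of $v$ itself (since $v \notin S_{v\arr z}$) is exactly what tightens the count to $k$ and justifies the stated $U_{v\arr z}/k$ bound.
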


\begin{proof}
	Let us compare the definitions of $U_{v \arr z}$ and the size of $\expo(z,k)$:
	\begin{align*}
		U_{v \arr z} &= \sum_{w \in S_{v \arr z}} |S_{w \narr r_w(v)} \cap N^{k-d(v,w)}(w)| \\
		|\expo(z,k)| &= \left| \bigcup_{w \in S_{v \arr z}} S_{w \narr r_w(v)} \cap N^{k-d(v,w)}(w) \right|,
	\end{align*}
	
	where $N^{k-d(v,w)}(w)$ is the $k-d(v,w)$ radius subgraph centered at $w$. By comparing the two, the upper bound of  $|\expo(z,k)| \leq U_{v \arr z}$ clearly holds. 
	
	The lower bound is more subtle. Let us compute how many times a node $w$ in the sum of $U_{v \arr z}$ can be overcounted. Consider the unique path $P_{vw}$ from $v$ to a node $w$. Observe that out of all nodes in $S_{v \arr z}$, node $w$ is in set $S_{y \narr v}$ only for nodes $y \in P_{vw}$. Since $|P_{vw}| \leq k + 1$, any node $w$ is overcounted at most $k+1$ times, proving the lower bound of $U_{v \arr z} / k \leq |\expo(z,k)|$.
\end{proof}

\begin{corollary} \label{cor:boundbyGraph}
	For any $v\in G$ and $z\in N(v)$ we have $U_{v \arr z} \leq k \cdot |G^k_{v \arr z}|$.
\end{corollary}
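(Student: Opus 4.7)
The plan is to chain \Cref{lem:approxFactor} with the straightforward set inclusion $\expo(z,k) \subseteq G^k_{v \arr z}$. From \Cref{lem:approxFactor}, I already have $U_{v \arr z} \leq k \cdot |\expo(z,k)|$, so it suffices to argue $|\expo(z,k)| \leq |G^k_{v \arr z}|$, which reduces to showing the containment $\expo(z,k) \subseteq G^k_{v \arr z} = N^k(v) \cap G_{v \arr z}$.

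First I would fix an arbitrary $u \in \expo(z,k)$ and unpack the definition: there exists some witness $w \in S_{v \arr z}$ with $u \in S_{w \narr r_w(v)} \cap N^{k-d(v,w)}(w)$. Two things must be verified about $u$: (a) $u \in G_{v \arr z}$, i.e.\ $z$ lies on the unique path from $v$ to $u$; and (b) $d(v,u) \le k$.

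For (a), I use the fact that we are in a tree. Since $w \in S_{v \arr z} \subseteq G_{v \arr z}$, the path from $v$ to $w$ begins with $z$. Now $u \in G_{w \narr r_w(v)}$ means either $u = w$, in which case $u \in G_{v \arr z}$ directly, or the path from $w$ to $u$ does not start with $r_w(v)$. In the latter case the unique $v$–$u$ path in the tree concatenates the $v$–$w$ path with the $w$–$u$ path, so it still begins with $z$; hence $u \in G_{v \arr z}$. For (b), $u \in N^{k-d(v,w)}(w)$ gives $d(w,u) \le k - d(v,w)$, and since the $v$–$u$ path passes through $w$ (by the argument just made), $d(v,u) = d(v,w) + d(w,u) \le k$, so $u \in N^k(v)$.

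Combining (a) and (b) yields $\expo(z,k) \subseteq G^k_{v \arr z}$, hence $|\expo(z,k)| \le |G^k_{v \arr z}|$. Plugging this into the upper bound from \Cref{lem:approxFactor} gives
\[
U_{v \arr z} \;\le\; k \cdot |\expo(z,k)| \;\le\; k \cdot |G^k_{v \arr z}|,
\]
as required. I do not anticipate a real obstacle here; the only mild subtlety is the tree-uniqueness argument in part (a), which would fail in a general graph but is immediate once one observes that $w \in G_{v \arr z}$ forces $r_v(w) = z$ and that $u$ lies in a branch away from $v$ relative to $w$.
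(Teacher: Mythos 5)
Your proof is correct and follows the same route as the paper: the lower bound $U_{v\arr z}/k \le |\expo(z,k)|$ from \Cref{lem:approxFactor} combined with the containment $\expo(z,k) \subseteq G^k_{v\arr z}$. The only difference is that you verify this containment explicitly (correctly, via the tree-path argument), whereas the paper asserts it without proof.
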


\begin{proof}
	By the lower bound in \Cref{lem:approxFactor}, it holds that $U_{v \arr z}/k \leq |\expo(z,k)|$, and since $\expo(z,k) \subseteq G^k_{v \arr z}$ the claim follows.
\end{proof}

\begin{proof}[Proof of \Cref{lem:mainProbing} (i)]
	Observe that $\hds =\{ x \in N(v) \mid U_{v \arr x} > k \cdot n^{\eps}\} \cup B_v$. Since $\hds\cap X=\emptyset$, for every $x \in X$ it holds that $U_{v \arr x} \leq k \cdot n^{\eps}$. Hence, by the upper bound of \Cref{lem:approxFactor} it holds that $|\expo(z,k)| \leq U_{v \arr x} \leq k \cdot n^\eps$.     
\end{proof}

\begin{proof}[Proof of \Cref{lem:mainProbing} (ii)]
	
	Consider $b\in N(v)\setminus X$.
	\begin{itemize}
		\item If $b=y$ then by $y$'s definition, we obtain that $U_{v \arr x}\leq U_{v \arr b}$ holds for all $x\in X$. If $b \in \hds$ from \Cref{l:computeU}, it holds that  $U_{v \arr x}  \leq k \cdot n^{\eps} < U_{v \arr b}$. In both cases, we deduce for all $x\in X$ that
		\begin{align*}
			|\expo(x,k)|\stackrel{Lem.~\ref{lem:approxFactor}}{\leq} U_{v \arr x} \leq U_{v \arr b} \stackrel{Cor.~\ref{cor:boundbyGraph}}{\leq k} \cdot |G^k_{v \arr b}|\leq k \cdot |G^k_{v \narr x}|~,
		\end{align*}
		where the last inequality holds as we have $G^k_{v \arr b}\subseteq G^k_{v \narr x}$ (here we use $x\neq b$). 
		\item If $b \in \hds$ is originally from $B_v$ it holds that $|G^k_{v \arr b}|>n^\eps$ and 
		\begin{equation*}
			|\expo(x,k)|\stackrel{Lem.~\ref{lem:approxFactor}}{\leq}  U_{v \arr x}  \leq k \cdot n^{\eps} < k \cdot |G^k_{v \arr b}| \qedhere
		\end{equation*}
	\end{itemize}
\end{proof}

\begin{proof}[Proof of \Cref{lem:mainProbing} (iii)]
	Let $v \in G$ be any node. If $|\hds| = 0$, then node $y$ is defined and \Cref{l:yupdate} immediately implies $|X|= \deg(v)-1$. If $|\hds|=1$, then node $y$ is not defined and again, \Cref{l:yupdate} immediately implies $|X|= \deg(v)-1$.
	
	Assume for contradiction that $|\hds| \geq 2$ and that $v$ is an important node. There exists distinct neighbors $y_1,y_2 \in \hds$. If $v$ is an important node, for some neighbor $u$, it holds that $|G_{v \narr u}|\leq n^\eps$. So w.l.o.g. it holds that $G_{v \arr y_1} \subseteq G_{v \narr u}$. Because $y_1 \in \hds$, it holds that either $U_{v \arr y_1} > k \cdot n^{\eps}$ ($y_1$ from \Cref{l:computeU}), implying $|G^k_{v \arr y_1}| > n^{\eps}$ by \Cref{cor:boundbyGraph}, or $|G^k_{v \arr y_1}|>n^\eps$ ($y_1$ from $B_v$). Both cases are a contradiction. \qedhere
\end{proof}

\begin{proof}[Proof of \Cref{lem:mainProbing}: \mpc details]
	In \Cref{alg:probing}, we only need to ensure that every node $v \in G$ can compute $U_{v \arr x}$ for every neighbor $x \in N(v)$ in $O(1)$ time and $O(|S_v|)$ space. Observe that $U_{v \arr x}$ is only a simplified version of $\expo(x,k)$, and hence $U_{v \arr x}$ at most as hard to compute as $\expo(x,k)$. In \Cref{sec:MPCdetails} we argue how $\expo(x,k)$ (and hence how $U_{v \arr x}$) can be computed. There is one small caveat: in \Cref{sec:MPCdetails}, for \expo, we assume that nodes $w \in G$ with $\deg(w)>n^{3\eps}$ never get queried by a node $v \in G$ with $w \in S_{v \arr x}$ because \Cref{alg:probing} blocks directions $x$ for $v$. However, in \Cref{alg:probing} we still have to compute value $U_{v \arr x}$. 
	
	Turns out that in this case there is a shortcut, and we don't actually have to compute $U_{v \arr x}$ explicitly. If $\deg(w)>n^{3\eps}$ for a node $w \in S_{v \arr x}$ such that $v$ wants to query $w$, we simply set $U_{v \arr x}$ to be $n^{3\eps}$, which is clearly a lower bound for what $U_{v \arr x}$ should be. This lower bound is however large enough for $x$ to be added to \hds, since $n^{3\eps}=n^{3\delta/8} > n^{2\delta/8} \geq k \cdot n^\eps$. 
\end{proof}

\subsection{Missing \mpc Details} \label{sec:MPCdetails}

This section is dedicated to showing how \Cref{alg:balancedExp} can be implemented in the low-space \mpc model. So far we have taken a node-centric approach to \mpc, where we reason that a certain node $v$ can do something. In practice, it is always the machine(s) $M(v)$ storing $v$ that is performing these actions. Think of $M(v)$ as a function that returns the address of the machine that stores node $v$ and its incident edges. Observe that for two distinct nodes $v$ and $w$, it may be that $M(v)=M(w)$.

In the proof of \Cref{lem:balancedExp} (divided into separate sections) we have so far reasoned that the local space of a machine is bounded by $O(n^\delta)$, and that the total space is bounded by $O(n \cdot \poly(k))$. In \Cref{lem:validTuple,lem:learnInducedGraph,lem:learnLabel} we showed that the algorithm can be implemented as long as nodes can query for tuples (of constant word size) from other nodes. Hence, in order to complete the proof of \Cref{lem:balancedExp}, we just have to ensure that the aforementioned communication is feasible both ways (issuing and answering) in the low-space \mpc model. In particular, we need to ensure that communication bandwidth of $O(n^\delta)$ per machine is respected throughout the algorithm.

Initially, before executing \Cref{alg:balancedExp}, the input graph of $n$ nodes and $m$ edges is distributed among the machines arbitrarily. By applying \Cref{def:aggTreeStructure}, we can organize the input such that every node and it's edges are hosted on a single machine, or, in the case of high degree, on multiple consecutive machines.
For the following arguments, assume that both node $v$ and its set $S_v$ is stored on a single machine $M(v)$, or, in the case of high degree, on multiple consecutive machines. 

\subparagraph*{Issuing Queries.} Let $v \in G$. If during the algorithm it holds that $|S_v| > n^{3\eps}$, $v$ turns \full and does not issue queries for the remainder of the algorithm.
Hence, we can assume that if node $v$ is issuing a query, it holds that $|S_v| \leq n^{3\eps}$ and $S_v$ is fully contained in the memory of $M(v)$. Issuing a query to a node $w \in S_v$ consists of sending a message to $M(w)$, asking for a subset of tuples in $S_w$, e.g., $S_{w \narr r_w(v)} \cap N^{k-d(v,w)}(w)$. Since such a message is of constant word size (contains only direction $r_w(v)$ and value $k-d(v,w)$), machine $M(v)$ sends $O(|S_v|)$ messages for node $v$. Hence, machine $M(v)$ sends $O(n^\delta)$ messages in total (accounting for all nodes it stores). By the same logic, every machine $M(v)$ receives $O(n^\delta)$ messages in total. 

\subparagraph*{Answering Queries.} Let $w \in G$. This is more complex than issuing queries because nodes outside set of $S_w$ may be querying node $w$. Observe that the answer to a query for a node $w \in G$ is of size at most $|S_w|$. Observe that if $\deg(w) > n^{3\eps}$ for a node $w \in G$, no node $v$ will query $w$, because direction $r_v(w)$ will be \blocked by \Cref{lem:mainProbing}. Hence, if a node $w \in G$ is queried, the answer is contained in $S_w$ and is of size at most $|S_w|\leq n^{6\eps}$ by \Cref{lem:localMemoryImportant,lem:localMemoryNonImportant} (it fully fits into one machine).

Denote the collection of machines we are using for the algorithm as $\mathcal{M} = \{M_1,M_2,\dots,M_l\}$. We allocate a collection of machines $\mathcal{M}' = \{M'_1,M'_2,\dots,M'_{O(l)}\}$ for answering queries (we wipe them clean before every communication round). If a node $v$ wants to query a node $w \in S_v$ the machine in $\mathcal{M}$ storing $v$ sends a message $(w,v)$ to $\mathcal{M}'$. Additionally, for every set $S_w$, such that $|S_w| \leq n^{6\eps}$ (i.e., a set that could contain an answer to some query), that a machine in $\mathcal{M}$ stores, it sends message $(w,S_w)$ to $\mathcal{M}'$. It is important that message $(w,S_w)$ fully fits into one machine in $\mathcal{M}'$. We sort all messages in $\mathcal{M}'$ by the first entry of a message using \Cref{def:aggTreeStructure}. Now consider a machine $M'_j \in \mathcal{M}'$ storing message $(w,S_w)$. If $M'_j$ sees all queries directed at $w$ in its own or $M'_{j+1}$'s memory, it can answer the queries itself in $O(1)$ time. Otherwise, the set of machines $M'_j,M'_{j+1},\dots$ that store all queries for $w$ can form a broadcast tree such that $M'_j$ is the root. The root can then propagate $S_w$ to all machines in the broadcast tree in $O(1)$ time, which in turn can answer the queries they hold in memory.

\begin{definition}[Aggregation Tree Structure, \cite{BKM20}] \label{def:aggTreeStructure}
	Assume that an \mpc algorithm receives a collection of sets $A_1,\dots, A_k$ with elements from a totally ordered domain as input. In an aggregation tree structure for $A_1,\dots,A_k$, the elements of $A_1,\dots,A_k$ are stored in lexicographically sorted order (they are primarily sorted by the number $i \in \{1,\dots,k\}$ and within each set $A_i$ they are sorted increasingly). For each $i \in \{1,\dots,k\}$ such that the elements of $A_i$ appear on at least 2 different machines, there is a tree of constant depth containing the machines that store elements of $A_i$ as leafs and where each inner node of the tree has at most $n^{\delta/2}$ children. The tree is structured such that it can be used as a search tree for the elements in $A_i$ (i.e., such that an in-order traversal of the tree visits the leaves in sorted order). Each inner node of these trees is handled by a separate additional machine. In addition, there is a constant-depth aggregation tree of degree at most $n^{\delta/2}$ connecting all the machines that store elements of $A_1 ,\dots, A_k$.
\end{definition}

\subsection{Proof of Lemma \ref{lem:subsets}} \label{sec:proofOfLemma}

Let us make the following observation.

\begin{observation} \label{obs:importantInSubtree}
	If a node $v$ is in a subtree of size at most $n^{\delta/10}$, it is important. 
\end{observation}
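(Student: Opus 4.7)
The plan is to exhibit, for any vertex $v$ lying in a subtree $T' \subseteq G$ of size at most $n^{\delta/10}$, a neighbor $u \in N(v)$ for which $|G_{v \narr u}| \leq n^{\delta/8} = n^{\eps}$; by \cref{def:importantNode} this makes $v$ important. The core structural fact I would establish first is that there is at most one edge of $G$ with exactly one endpoint in $V(T')$, which I will call the \emph{crossing edge}. To show this I would suppose two distinct crossing edges $\{a_1,b_1\}$ and $\{a_2,b_2\}$ with $a_i \in V(T')$ and $b_i \notin V(T')$. If $b_1 = b_2 =: b$, then the $a_1$-to-$a_2$ path inside the connected subgraph $T'$ together with $a_1 - b - a_2$ forms a cycle, contradicting that $G$ is a tree. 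Otherwise the unique $b_1$-to-$b_2$ path in $G$ must traverse $T'$, so $b_1$ and $b_2$ lie in different components of the induced subgraph on $V(G) \setminus V(T')$, producing at least two components of $G \setminus T'$ and violating the subtree condition of \cref{def:subtree}.

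With the at-most-one-crossing-edge property in hand, I would root $T'$ at the $T'$-endpoint $a$ of the unique crossing edge if it exists, and at an arbitrary vertex of $T'$ otherwise. For any $v \in V(T')$ with $v \neq a$, every neighbor of $v$ must lie in $V(T')$, for else a second crossing edge would be incident to $v$. Setting $u$ to be the parent of $v$ in the rooted $T'$, removing the edge $\{v,u\}$ from $G$ separates $v$ from $a$ and hence from $V(G) \setminus V(T')$ via the unique crossing edge, so the $v$-side of this removal, namely $G_{v \narr u}$, is contained in the subtree of $T'$ rooted at $v$ -- itself a subset of $V(T')$. In the remaining case $v = a$, I pick $u$ to be the non-$T'$ endpoint of the crossing edge when it exists, so that removing $\{v,u\}$ cleanly separates $T'$ from the rest of $G$, giving $G_{v \narr u} = V(T')$; and if no crossing edge exists, then $T' = V(G)$ and any $u \in N(v)$ works.

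Putting the cases together yields $|G_{v \narr u}| \leq |V(T')| \leq n^{\delta/10} \leq n^{\delta/8} = n^{\eps}$, which establishes the claim. The only non-trivial step is the at-most-one-crossing-edge lemma, which is a short case analysis combining the subtree condition with acyclicity of $G$; the remainder is a routine tree-surgery argument. A degenerate edge case is a vertex $v$ of degree zero, for which $N(v) = \emptyset$ and the formal definition of important fails; such an isolated $v$ is not relevant to the ambient algorithm and is implicitly handled separately.
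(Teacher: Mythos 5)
Your proof is correct and takes essentially the same route as the paper's: exhibit a neighbor $u$ with $|G_{v \narr u}| \leq n^{\delta/10} \leq n^{\delta/8}$ and invoke \cref{def:importantNode}. The paper's proof treats the existence of such a $u$ as immediate from \cref{def:subtree}, whereas you supply the (correct) at-most-one-crossing-edge argument that justifies it, along with the same harmless caveat about isolated vertices.
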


\begin{proof}
	Let node $v$ be as in the observation statement. By \Cref{def:subtree}, $v$ has a neighbor $u$ for which it holds that $|G_{v \narr u}| \leq n^{\delta/10}$. Since important nodes are defined by \Cref{def:importantNode} as having a neighbor $u$ such that $|G_{v \narr u}| \leq n^{\delta/8}$ holds, the claim follows.
\end{proof}

\begin{proof}[Proof of \Cref{lem:subsets}]
	
	Consider applying \Cref{lem:balancedExp} with parameter $k= 100 \log n$. By the lemma statement, for every important node $v \in F$ there is a node $z$ and a machine that for all $x \in N(v) \setminus z$ holds $N^{ 100 \log n}(v) \cap F_{v \arr x}$ in memory. Let every such machine form a set 
	\begin{align*}
		U_v \coloneqq \bigcup_{x \in N(v) \setminus z} N^k(v) \cap F_{v \arr x}
	\end{align*}
	
	and observe that $U_v \subseteq S_v$ in \Cref{alg:balancedExp}.
	Define a collection $\mathcal{U} = \{ U_v \}$, which fulfills the requirements of \Cref{lem:subsets}:
	
	\begin{enumerate}
		\item $|U_v| \leq |S_v| < n^{3\eps} = n^{3\delta/8} < n^\delta$ for every set $U_v \in \mathcal{U}$ by \Cref{lem:localMemoryImportant}.
		\item $\sum_{\mathcal{U}} U_v \leq \sum_{v \in G} |S_v| = O(n \cdot \poly(\log n))$ by \Cref{lem:globalMemoryBound}.
		\item Every node $v \in F$ which is contained in a subtree of size at most $n^{\delta/10}$ is important by \Cref{obs:importantInSubtree}. By \Cref{lem:balancedExp}, there is a set $U_v \in \mathcal{U}$ for every important node $v$, and $U_v$ is a good subset for $v$ (by \Cref{def:good_subset}) because for $\deg(v)-1$ distinct directions $x$ it holds that $N^{ 100 \log n}(v) \cap F_{v \arr x} = F^{100 \log n}_{v \arr x} \subseteq U_v$.
	\end{enumerate}
	
	Moreover, the machine storing node $v$ and its set $U_v$ can compute $F[U_v]$ in $O(1)$ rounds by \Cref{lem:learnInducedGraph}. Because in a forest it holds that there are more nodes than edges, the machine storing node $v$ and its set $U_v$ can also store $F[U_v]$.
	
	Using \Cref{lem:balancedExp} with parameter $k= 100 \log n$ takes $O(\log \log n)$ low-space \mpc rounds and $O(n \cdot \poly(\log n))$ global space, completing the proof.
\end{proof}

\newpage
\bibliographystyle{alpha}
\bibliography{mpc-coloring-arxiv}

\newcommand{\etalchar}[1]{$^{#1}$}
\begin{thebibliography}{GGK{\etalchar{+}}18}

\bibitem[ABI86]{alon86}
Noga Alon, L\'{a}sl\'{o} Babai, and Alon Itai.
\newblock {A Fast and Simple Randomized Parallel Algorithm for the Maximal
  Independent Set Problem}.
\newblock {\em Journal of Algorithms}, 7(4):567--583, 1986.

\bibitem[BBD{\etalchar{+}}19]{behnezhadMIS}
Soheil Behnezhad, Sebastian Brandt, Mahsa Derakhshan, Manuela Fischer,
  MohammadTaghi Hajiaghayi, Richard~M. Karp, and Jara Uitto.
\newblock Massively parallel computation of matching and mis in sparse graphs.
\newblock In {\em Proceedings of the 2019 ACM Symposium on Principles of
  Distributed Computing}, PODC '19, page 481–490, New York, NY, USA, 2019.
  Association for Computing Machinery.

\bibitem[BCM{\etalchar{+}}21]{BCMOS21}
Alkida Balliu, Keren Censor{-}Hillel, Yannic Maus, Dennis Olivetti, and Jukka
  Suomela.
\newblock {Locally Checkable Labelings with Small Messages}.
\newblock In {\em the Proceedings of the International Symposium on Distributed
  Computing (DISC)}, pages 8:1--8:18, 2021.

\bibitem[BDE{\etalchar{+}}19]{Behnezhad2019}
Soheil Behnezhad, Laxman Dhulipala, Hossein Esfandiari, Jakub Łącki, and
  Vahab Mirrokni.
\newblock {Near-Optimal Massively Parallel Graph Connectivity}.
\newblock In {\em FOCS}, 2019.

\bibitem[BE10]{BE10}
Leonid Barenboim and Michael Elkin.
\newblock Sublogarithmic distributed {MIS} algorithm for sparse graphs using
  nash-williams decomposition.
\newblock {\em Distributed Comput.}, 22(5-6):363--379, 2010.

\bibitem[BEPS16]{BEPSv3}
Leonid Barenboim, Michael Elkin, Seth Pettie, and Johannes Schneider.
\newblock {The Locality of Distributed Symmetry Breaking}.
\newblock {\em Journal of the ACM}, 63(3):20:1--20:45, 2016.

\bibitem[BFU21]{BRANDT202122}
Sebastian Brandt, Manuela Fischer, and Jara Uitto.
\newblock {Breaking the Linear-memory Barrier in MPC: Fast MIS on Trees with
  Strongly Sublinear Memory}.
\newblock {\em Theoretical Computer Science}, 849:22--34, 2021.

\bibitem[BKM20]{BKM20}
Philipp Bamberger, Fabian Kuhn, and Yannic Maus.
\newblock {Efficient Deterministic Distributed Coloring with Small Bandwidth}.
\newblock In {\em {PODC} '20: {ACM} Symposium on Principles of Distributed
  Computing (PODC)}, pages 243--252, 2020.

\bibitem[BLM{\etalchar{+}}23]{sodapaper}
Alkida Balliu, Rustam Latypov, Yannic Maus, Dennis Olivetti, and Jara Uitto.
\newblock {Optimal Deterministic Massively Parallel Connectivity on Forests}.
\newblock In {\em Proceedings of the 2023 Annual ACM-SIAM Symposium on Discrete
  Algorithms (SODA)}, pages 2589--2631, 2023.

\bibitem[CC22]{ccderandom}
Sam Coy and Artur Czumaj.
\newblock {Deterministic Massively Parallel Connectivity}.
\newblock In {\em Proceedings of the ACM Symposium on Theory of Computing
  (STOC)}, 2022.

\bibitem[CDP20]{Czumaj2020}
Artur Czumaj, Peter Davies, and Merav Parter.
\newblock {Graph Sparsification for Derandomizing Massively Parallel
  Computation with Low Space}.
\newblock In {\em the Proceedings of the Symposium on Parallelism in Algorithms
  and Architectures (SPAA)}, pages 175--185, 2020.

\bibitem[CDP21a]{componentstable}
Artur Czumaj, Peter Davies, and Merav Parter.
\newblock {Component Stability in Low-Space Massively Parallel Computation}.
\newblock In {\em PODC}, 2021.

\bibitem[CDP21b]{Czumaj2021}
Artur Czumaj, Peter Davies, and Merav Parter.
\newblock {Improved Deterministic $(\Delta+1)$-Coloring in Low-Space MPC}.
\newblock In {\em PODC}, pages 469–--479, 2021.

\bibitem[CDP21c]{CDP21}
Artur Czumaj, Peter Davies, and Merav Parter.
\newblock Simple, deterministic, constant-round coloring in congested clique
  and {MPC}.
\newblock {\em {SIAM} Journal on Computing}, 50(5):1603--1626, 2021.

\bibitem[CDP21d]{Czumaj-congested-coloring}
Artur Czumaj, Peter Davies, and Merav Parter.
\newblock {Simple, Deterministic, Constant-Round Coloring in Congested Clique
  and MPC}.
\newblock {\em SIAM Journal on Computing}, 50(5):1603--1626, 2021.

\bibitem[CFG{\etalchar{+}}19]{Chang2019}
Yi-Jun Chang, Manuela Fischer, Mohsen Ghaffari, Jara Uitto, and Yufan Zheng.
\newblock {The Complexity of {$(\Delta+1)$}-Coloring in Congested Clique,
  Massively Parallel Computation, and Centralized Local Computation}.
\newblock In {\em PODC}, 2019.

\bibitem[Cha20]{Chang2020}
Yi-Jun Chang.
\newblock {The Complexity Landscape of Distributed Locally Checkable Problems
  on Trees}.
\newblock In {\em DISC}, pages 18:1--18:17, 2020.

\bibitem[CP19]{CP19}
Yi{-}Jun Chang and Seth Pettie.
\newblock {A Time Hierarchy Theorem for the LOCAL Model}.
\newblock {\em {SIAM} J. Comput.}, 48(1):33--69, 2019.

\bibitem[DG08]{Dean2008}
Jeffrey Dean and Sanjay Ghemawat.
\newblock {MapReduce: Simplified Data Processing on Large Clusters}.
\newblock {\em Communications of the ACM}, pages 107--113, 2008.

\bibitem[GFG23]{GFG23}
Jeff Giliberti, Manuela Fischer, and Christoph Grunau.
\newblock Deterministic massively parallel symmetry breaking for sparse graphs.
\newblock {\em CoRR}, abs/2301.11205, 2023.

\bibitem[GGH{\etalchar{+}}23]{Ghaffari2023-soda}
Mohsen Ghaffari, Christoph Grunau, Bernhard Haeupler, Saeed Ilchi, and Václav
  Rozhoň.
\newblock {Improved Distributed Network Decomposition, Hitting Sets, and
  Spanners, via Derandomization}.
\newblock In {\em Proceedings of the 2023 Annual ACM-SIAM Symposium on Discrete
  Algorithms (SODA)}, pages 2532--2566, 2023.

\bibitem[GGJ20]{GGJ20}
Mohsen Ghaffari, Christoph Grunau, and Ce~Jin.
\newblock {Improved MPC Algorithms for MIS, Matching, and Coloring on Trees and
  Beyond}.
\newblock In {\em DISC}, 2020.

\bibitem[GGK{\etalchar{+}}18]{linear-MIS}
Mohsen Ghaffari, Themis Gouleakis, Christian Konrad, Slobodan Mitrovic, and
  Ronitt Rubinfeld.
\newblock {Improved Massively Parallel Computation Algorithms for MIS,
  Matching, and Vertex Cover}.
\newblock In {\em PODC}, pages 129--138, 2018.

\bibitem[Gha16]{GhaffariImproved16}
Mohsen Ghaffari.
\newblock {An Improved Distributed Algorithm for Maximal Independent Set}.
\newblock In {\em Proceedings of the 2023 Annual ACM-SIAM Symposium on Discrete
  Algorithms (SODA)}, pages 270--277, 2016.

\bibitem[GKU19]{Ghaffari2019}
Mohsen Ghaffari, Fabian Kuhn, and Jara Uitto.
\newblock {Conditional Hardness Results for Massively Parallel Computation from
  Distributed Lower Bounds}.
\newblock In {\em FOCS}, pages 1650--1663, 2019.

\bibitem[GLM{\etalchar{+}}23]{gupta2023fast}
Chetan Gupta, Rustam Latypov, Yannic Maus, Shreyas Pai, Simo Särkkä, Jan
  Studený, Jukka Suomela, Jara Uitto, and Hossein Vahidi.
\newblock Fast dynamic programming in trees in the mpc model, 2023.

\bibitem[GS19]{Ghaffari2019-arboricity}
Mohsen Ghaffari and Ali Sayyadi.
\newblock {Distributed Arboricity-Dependent Graph Coloring via All-to-All
  Communication}.
\newblock In {\em ICALP}, pages 142:1--142:14, 2019.

\bibitem[GSZ11]{goodrich}
Michael~T. Goodrich, Nodari Sitchinava, and Qin Zhang.
\newblock Sorting, searching, and simulation in the mapreduce framework.
\newblock In Takao Asano, Shin-ichi Nakano, Yoshio Okamoto, and Osamu Watanabe,
  editors, {\em Algorithms and Computation}, pages 374--383, Berlin,
  Heidelberg, 2011. Springer Berlin Heidelberg.

\bibitem[GU19]{GU19}
Mohsen Ghaffari and Jara Uitto.
\newblock {Sparsifying Distributed Algorithms with Ramifications in Massively
  Parallel Computation and Centralized Local Computation}.
\newblock In {\em SODA}, 2019.

\bibitem[IBY{\etalchar{+}}07]{Isard2007}
Michael Isard, Mihai Budiu, Yuan Yu, Andrew Birrell, and Dennis Fetterly.
\newblock {Dryad: Distributed Data-Parallel Programs from Sequential Building
  Blocks}.
\newblock {\em ACM SIGOPS Operating Systems Review}, pages 59--72, 2007.

\bibitem[KSV10]{KarloffSV10}
Howard~J. Karloff, Siddharth Suri, and Sergei Vassilvitskii.
\newblock {A Model of Computation for MapReduce}.
\newblock In {\em SODA}, 2010.

\bibitem[Lin87]{linial}
Nathan Linial.
\newblock {Distributive Graph Algorithms -- Global Solutions from Local Data}.
\newblock In {\em FOCS}, 1987.

\bibitem[Lin92]{Linial92}
Nathan Linial.
\newblock {Locality in Distributed Graph Algorithms}.
\newblock {\em {SIAM} J. Comput.}, 21(1):193--201, 1992.

\bibitem[LMSV11]{filtering2011}
Silvio Lattanzi, Benjamin Moseley, Siddharth Suri, and Sergei Vassilvitskii.
\newblock {Filtering: A Method for Solving Graph Problems in MapReduce}.
\newblock In {\em SPAA}, pages 85--94, 2011.

\bibitem[LU21]{LU21}
Rustam Latypov and Jara Uitto.
\newblock Deterministic 3-coloring of trees in the sublinear {MPC} model.
\newblock {\em CoRR}, abs/2105.13980, 2021.

\bibitem[Lub86]{luby86}
Michael Luby.
\newblock {A Simple Parallel Algorithm for the Maximal Independent Set
  Problem}.
\newblock {\em SIAM Journal on Computing}, 15:1036--1053, 1986.

\bibitem[LW10]{wattenhofer}
Christoph Lenzen and Roger Wattenhofer.
\newblock {Brief Announcement: Exponential Speed-Up of Local Algorithms Using
  Non-Local Communication}.
\newblock In {\em PODC}, 2010.

\bibitem[NW64]{NashWilliams1964}
Crispin Nash-Williams.
\newblock {Decomposition of Finite Graphs Into Forests}.
\newblock {\em Journal of the London Mathematical Society}, s1-39:12, 1964.

\bibitem[OJ99]{coloring-simple}
\"{O}jvind Johansson.
\newblock {Simple Distributed $\Delta + 1$-coloring of Graphs}.
\newblock {\em Information Processing Letters}, pages 229--232, 1999.

\bibitem[Par18]{P18}
Merav Parter.
\newblock $(\delta+1)$ coloring in the congested clique model.
\newblock In Ioannis Chatzigiannakis, Christos Kaklamanis, D{\'{a}}niel Marx,
  and Donald Sannella, editors, {\em {International Colloquium on Automata,
  Languages, and Programming, (ICALP)}}, volume 107, pages 160:1--160:14, 2018.

\bibitem[PS18]{PS18}
Merav Parter and Hsin{-}Hao Su.
\newblock Randomized $(\delta+1)$-coloring in $o(\log^* \delta)$ congested
  clique rounds.
\newblock In {\em 32nd International Symposium on Distributed Computing
  (DISC)}, volume 121, pages 39:1--39:18, 2018.

\bibitem[RG20]{RG20}
V{\'{a}}clav Rozho\v{n} and Mohsen Ghaffari.
\newblock {Polylogarithmic-time deterministic network decomposition and
  distributed derandomization}.
\newblock In {\em STOC}, pages 350--363, 2020.

\bibitem[RVW18]{Roughgarden18}
Tim Roughgarden, Sergei Vassilvitskii, and Joshua~R. Wang.
\newblock {Shuffles and Circuits (On Lower Bounds for Modern Parallel
  Computation)}.
\newblock {\em Journal of the ACM}, 2018.

\bibitem[Whi09]{White2009}
Tom White.
\newblock {\em {Hadoop: The Definitive Guide}}.
\newblock O'Reilly Media, Inc., 2009.

\bibitem[ZCF{\etalchar{+}}10]{Zaharia2010}
Matei Zaharia, Mosharaf Chowdhury, Michael~J. Franklin, Scott Shenker, and Ion
  Stoica.
\newblock {Spark: Cluster Computing with Working Sets}.
\newblock In {\em the Proceedings of the SENIX Conference on Hot Topics in
  Cloud Computing (HotCloud)}, 2010.

\end{thebibliography}

\end{document}